\documentclass[11pt]{article}
\usepackage[margin=1in]{geometry}
\usepackage{amsmath,amsbsy}
\usepackage{amsthm} %
\usepackage{microtype} %
\usepackage{lmodern} %
\usepackage{setspace}
\usepackage{color}
\usepackage{xcolor}
\usepackage{algorithm}
\usepackage[noend]{algpseudocode}
\usepackage{multirow}
\usepackage{xspace}
\usepackage{threeparttable}
\usepackage{booktabs}
\definecolor{MyGreen}{rgb}{0.1333,0.5451,0.1333}
\usepackage[linktocpage=true, 
	pagebackref=true,colorlinks,
  linkcolor=black,
	citecolor=MyGreen,
	bookmarks,bookmarksopen,bookmarksnumbered]
	{hyperref}

\let\originalleft\left
\let\originalright\right
\renewcommand{\left}{\mathopen{}\mathclose\bgroup\originalleft}
\renewcommand{\right}{\aftergroup\egroup\originalright}

\renewcommand{\Pr}{\operatorname*{\textbf{\textup{Pr}}}}
\DeclareMathOperator*{\II}{\textbf{\textup{I}}}

\DeclareMathOperator*{\HH}{\textbf{\textup{H}}}

\DeclareMathOperator*{\EE}{\textbf{\textup{E}}}

\newcommand{\set}[1]{\{ #1 \}}
\newcommand{\Set}[1]{\left\{ #1 \right\}}

\newcommand{\lb}{\left}
\newcommand{\rb}{\right}
\newcommand{\lt}{\left}
\newcommand{\rt}{\right}
\newcommand{\md}{\middle}

\DeclareMathOperator{\poly}{\ensuremath{\mathrm{poly}}}

\newcommand{\local}{\ensuremath{\mathsf{LOCAL}}}
\newcommand{\LOCAL}{\local}
\newcommand{\congest}{\ensuremath{\mathsf{CONGEST}}}

\makeatletter
\newtheorem*{rep@theorem}{\rep@title}
\newcommand{\newreptheorem}[2]{%
\newenvironment{rep#1}[1]{%
\def\rep@title{#2 \ref{##1}}%
\begin{rep@theorem}[restated]}%
{\end{rep@theorem}}}
\makeatother
\newreptheorem{lemma}{Lemma}
\newreptheorem{theorem}{Theorem}
\newreptheorem{corollary}{Corollary}

\newboolean{short}
\newcommand{\onlyShort}[1]{\ifthenelse{\boolean{short}}{#1}{}}
\newcommand{\onlyLong}[1]{\ifthenelse{\boolean{short}}{}{#1}}

\newtheorem{problem}{Open Problem}
\theoremstyle{plain}
\newtheorem{lemma}{Lemma}
\newtheorem{theorem}{Theorem}
\newtheorem{corollary}{Corollary}
\newtheorem{claim}{Claim} %
\newtheorem{fact}{Fact} %
\newtheorem{open problem}{Open Problem}

\usepackage[most]{tcolorbox}
\tcbuselibrary{theorems}
\tcolorboxenvironment{definition}{lower separated=false,
                colback=white!97!blue,
colframe=white, fonttitle=\bfseries,
colbacktitle=white!60!gray,
coltitle=black,
enhanced,
boxed title style={colframe=black},
}
\tcolorboxenvironment{theorem}{lower separated=false,
                colback=white!97!blue,
colframe=white, 
fonttitle=\bfseries,
colbacktitle=white!60!gray,
coltitle=black,
enhanced,
boxed title style={colframe=black},
}
\tcolorboxenvironment{corollary}{lower separated=false,
                colback=white!97!blue,
colframe=white,
fonttitle=\bfseries,
colbacktitle=white!60!gray,
coltitle=black,
enhanced,
boxed title style={colframe=black},
}
\tcolorboxenvironment{open problem}{colback=white!95!gray,
colframe=white, 
fonttitle=\bfseries,
colbacktitle=white!60!gray,
coltitle=black,
enhanced,
}

\newcommand{\ann}[1]{%
\text{\footnotesize(#1)}\quad}

\usepackage{thmtools}
\usepackage{mathtools}

\title{Rise and Shine Efficiently! Tight Bounds for Adversarial Wake-up} %
\date{}

\author{Peter Robinson\\
\small{School of Computer \& Cyber Sciences}\\
\small{Augusta University}
\and
Ming Ming Tan \\
\small{School of Computer \& Cyber Sciences}\\
\small{Augusta University}
}

\newcommand{\kt}{\ensuremath{\mathsf{KT}}}
\newcommand{\good}{\ensuremath{\mathit{Sml}}}
\newcommand{\bad}{\ensuremath{\mathit{Bad}}}

\newcommand{\corr}{\ensuremath{\mathit{Cor}}}
\newcommand{\id}{\ensuremath{\text{id}}}
\newcommand{\nih}{\ensuremath{\mathsf{NIH}}}
\newcommand{\supp}{\ensuremath{\mathsf{supp}}}
\newcommand{\arad}{\ensuremath{\rho_{\textsl{\tiny awk}}}} 
\newcommand{\fastwakeup}{\ensuremath{\mathsf{FastWakeUp}}}
\newcommand{\pfw}{\ensuremath{\mathsf{MultiPhaseWakeUp}}}
\newcommand{\cen}{\mathsf{CEN}}
\newcommand{\advice}{\mathsf{Adv}}

\newcommand{\fc}{\mathsf{fc}}
\newcommand{\nxt}{\mathsf{next}}
\newcommand{\port}{\mathsf{port}^{-1}}
\newcommand{\stdport}{\mathsf{port}}

\newcommand{\msg}[1]{\langle #1 \rangle}

\begin{document}
\maketitle
\thispagestyle{empty}
\pagestyle{empty}
\begin{abstract}
We study the wake-up problem in distributed networks, where an adversary awakens a subset of nodes at arbitrary times, and the goal is to wake up all other nodes as quickly as possible by sending only few messages.
We prove the following lower bounds:
\begin{itemize}
\item 
We first consider the setting where each node receives advice from an oracle who can observe the entire network, but does not know which nodes are awake initially. 
More specifically, we consider the $\kt_0$ $\local$ model with advice, where the nodes have no prior knowledge of their neighbors. 
We prove that any randomized algorithm must send $\Omega\lt( \frac{n^{2}}{2^{\beta}\log n} \rt)$ messages if nodes receive only $O(\beta)$ bits of advice on average.
In contrast to prior lower bounds under the $\kt_0$ assumption, our proof does \emph{not} rely on an edge-crossing argument, but uses information-theoretic techniques instead.

\item 
For the $\kt_1$ assumption, where each node knows its neighbors' IDs from the start, we show that any $(k+1)$-time algorithm requires $\Omega\lt( n^{1+1/k} \rt)$ messages.
Our result is the first super-linear (in $n$) lower bound, for a problem that does not require individual nodes to learn a large amount of information about the network topology, which may be of independent interest.
This applies even to the synchronous $\kt_1$ $\local$ model, where the computation is structured into rounds and messages can be of unbounded length.
\end{itemize}
To complement our lower bound results, we present several new algorithms:
\begin{itemize} 
\item 
We give an asynchronous $\kt_1$ $\local$ algorithm that solves the wake-up problem with a time and message complexity of $O\lt( n\log n \rt)$ with high probability.
\item 
We introduce the notion of \emph{awake distance} $\arad$, which is upper-bounded by the network diameter, and present a synchronous $\kt_1$ $\local$ algorithm that takes $O\lt( \arad \rt)$ rounds and sends $O\lt( n^{3/2}\sqrt{\log n} \rt)$ messages with high probability. 
We also extend these ideas to obtain a time complexity of $O\lt( \arad \log^3n \rt)$ rounds as well as a message complexity of $O\lt( n \log^3n \rt)$ messages, both of which are optimal up to log-factors.
\item We give deterministic advising schemes in the asynchronous $\kt_0$ $\congest$ model (with advice).
In particular, we obtain an $O\lt( \arad\log^2n \rt)$-time advising scheme that sends $O\lt( n\log^2n \rt)$ messages, while requiring $O\lt( \log^2n \rt)$ bits of advice per node.
These bounds are optimal in all three complexity measures up to polylogarithmic factors. 
\end{itemize}
\end{abstract}
\newpage %
\setcounter{tocdepth}{2} %
\tableofcontents
\newpage
\pagestyle{plain}
\setcounter{page}{1}
\section{Introduction} \label{sec:intro}

In this paper, we study distributed algorithms in networks where every node is either awake or asleep, and our goal is to understand whether it is possible to wake up all sleeping nodes efficiently. 
Our investigation is motivated by well-established networking standards such as Wake-on-LAN and Wake-on-Wireless-LAN~\cite{amd_20213,wikipedia_wake-on-lan}, where a sleeping node only listens to wake-up messages (called ``magic packets'') on its network adapter and does not perform any other computation.
Allowing unused nodes to be in a sleep state can potentially lead to significantly reduced energy consumption in large networks and may increase the performance-per-watt ratio of data centers; e.g., see ~\cite{gandhi2012sleep}.

To formalize the setting, we consider an asynchronous communication network represented by a graph of $n$ nodes and $m$ edges.
Each node is running an instance of a distributed algorithm by exchanging messages over its incident communication links (i.e., edges) of the network.
Initially, an arbitrary non-empty subset of the nodes are awake, whereas all other nodes continue to sleep until they are woken up, e.g., by receiving a message from some already awake node. 
Two key metrics of a distributed algorithm are its {time complexity} and its {message complexity}. 
The former quantifies the worst case number of time units until the algorithm has reached its goal, whereas the latter counts the total number of messages sent throughout the execution. 
In this work, we investigate the time and message complexity of the \emph{wake-up problem}, which was formally defined in \cite{fraigniaud2006oracle}. 
That is, an adversary chooses a network topology and decides which nodes it wakes up and at what time, while the algorithm's goal is to wake up all other nodes as quickly and by sending as few messages as possible. 

While it is clear that, without further assumptions, any algorithm must take time steps that is at least as large as the network diameter $D$ for solving the wake-up problem, determining the achievable message complexity is less obvious, as prior work in this area suggests that the initial knowledge of the nodes can affect the achievable bounds significantly.
Two well-studied assumptions are known as $\kt_1$~\cite{AGPV88} and $\kt_0$~\cite{peleg}: 
The $\kt_1$ assumption stipulates that each node starts out knowing who else it is connected to, which is an adequate abstraction for modern-day IP networks.
On the other hand, \emph{$\kt_0$} (also known as the \emph{port numbering model}) requires that each node sends messages to its neighbors by using integer port numbers that are unrelated to the neighbors' IDs.
Given that nodes have significantly less knowledge under the $\kt_0$ assumption, it is not too surprising that sending a number of messages that is proportional to the number of edges in the network is a requirement for solving virtually any problems of interest (see \cite{jacm15}), and this easily extends to the wake-up problem.
One possible way to circumvent this lower bound is to consider the $\kt_1$ assumption, which allows breaking the $\kt_0$ message complexity barrier for problems such as constructing a minimum spanning tree (MST)\onlyShort{.}\onlyLong{, as we elaborate in more detail in Section~\ref{sec:related}.}
However, understanding the time and message complexity of the fundamental wake-up problem is still unresolved in  networks under the $\kt_1$ assumption. 
Our work takes a first step towards closing these gaps in literature.

\onlyLong{
Apart from the knowledge of nodes about their immediate neighbors, we also consider the impact of providing each node with more general knowledge about the network on the complexity bounds, i.e., we explore the \emph{information sensitivity}~\cite{fraigniaud2009distributed} of the wake-up problem, which captures how the difficulty of a problem scales with the amount of knowledge provided to the nodes. 
To this end, we consider the $\kt_0$ setting where each node is equipped with ``advice''~\cite{fraigniaud2006oracle}, which is a bit string of a certain length that is computed as a function of the entire network topology.
Understanding the impact of advice on the complexity bounds may also provide some insights into the amount of ``global communication'' that is needed for solving this problem, when considering the hybrid model of communication introduced by Augustine, Hinnenthal, Kuhn, Scheideler, and Schneider in \cite{DBLP:conf/soda/AugustineHKSS20}, where they show that global communication allows for a significant speedup for the shortest path problem compared to the standard assumption of local-only communication.
                  
Regarding the impact of advice on the wake-up problem, we point out that Fraigniaud, Ilcinkas, and Pelc~\cite{fraigniaud2006oracle} give a simple and elegant way of solving the wake-up problem with $O\lt( n \rt)$ messages and a total advice length of $\Theta\lt( n\log n \rt)$ bits (summed up over all nodes), however, two fundamental questions were left unanswered by their work:
(1) Can we leverage advice to reduce \emph{both}, messages and time?
(2) Is it possible to obtain a logarithmic upper bound on the \emph{maximum} advice length assigned to any node?
We address both of these questions by giving new advising schemes that yield near optimal time and message complexity, thus taking a step towards fully understanding the impact of advice on the performance of distributed algorithms in asynchronous networks.
}

\onlyLong{
\subsection{Computing Model and Problem Definition} \label{sec:prelim}
We consider a communication network abstracted as an undirected unweighted graph $G$ of $n$ nodes and $m$ edges, where each node of $G$ is associated with a computer executing a distributed algorithm, and each edge corresponds to a bidirectional communication channel. 
Every node $u$ has a unique integer ID, denoted by $\id(u)$, which is chosen from a range of size polynomial in $n$, and we assume that nodes have knowledge of a constant factor upper bound on $\log n$ \footnote{All logarithms are assumed to be of base $e$ unless stated otherwise.}. 

A node $u$ may communicate with its neighbors in $G$ via message passing, which allows $u$ to send (possibly distinct) messages over any subset of its incident edges whenever taking a computing step.
Unless stated otherwise, we assume an \emph{asynchronous network} and hence any message sent may be subject to unpredictable but finite delay.
However, all communication channels are error-free and deliver messages in a FIFO manner.
We study the wake-up problem in both the  $\congest$ and $\local$ model, see \cite{peleg}. 
In the $\congest$ model, each message has a size of at most $O(\log n)$ bits, while in the $\local$ model, the focus is purely on the impact of locality and the messages length is unbounded.

Each node performs local computing steps, which may be triggered by the receipt of a message. 
We follow the standard convention in distributed computing that local computation happens instantaneously and is ``free'', in the sense that we do not restrict the amount of computation a node may perform in a single step.  
As we are interested in randomized algorithms, each node has access to a private source of unbiased random bits that it may consult as part of its local computation. 

Initially, each node is either \emph{asleep} or \emph{awake}. 
Upon receiving a message, a sleeping node permanently changes its status to awake and starts executing the algorithm. 
In particular, any message sent to a currently-sleeping node $u$ is not lost, but processed by $u$ upon awakening. 

\paragraph{Initial Knowledge of the Nodes: $\kt_0$ vs.\ $\kt_1$.}
When considering the \emph{$\kt_0$ assumption}, a node with degree $\deg(u)$ has integer ports labeled $1,\ldots,\deg(u)$ and the algorithm needs to specify the port when sending a message.
We formalise the port mapping at a node $v$ as follows. 
Let $N_v$ denote the set of neighbors of $v$ and $\deg(v)$ be the degree of $v$. 
The port mapping at $v$ is a bijective function $\stdport_v: [\deg(v)] \mapsto N_v$. 
That is, port $i$ at $v$ will lead to its neighbor $\stdport(i)$, and we define the $\port_v$ as the inverse of $\stdport_v$.
That is, $\port_v(u)$ is the port number that $v$ uses to identify the incident edge that leads to its neighbor $u$.%
For example, in Figure~\ref{fig:lb_kt0} on page~\pageref{fig:lb_kt0}, node $v_i$ is connected via port $3$ to node $u_1$, which in turn is connected to $v_i$ by port $1$.}
We emphasize that for $\kt_0$, $v$ has no prior knowledge of the concrete mapping and that we assume that the other endpoint of an edge also learns the port connection, if one of its two vertices sends a message over its respective port.

The \emph{$\kt_1$ assumption} provides more knowledge to the algorithm than $\kt_0$, since each node starts out knowing the IDs of all its neighbors, which it can use when sending messages.
It is well known, see e.g., \cite{DBLP:conf/podc/KingKT15,podc15} that $\kt_1$ is powerful enough for implementing graph sketching techniques~\cite{AGM-soda12}.

\paragraph{Adversary.} An {adversary} determines the network topology, the node IDs, and the set of initially awake nodes. 
In the case of $\kt_0$, the adversary also determines each individual node's port mapping. 
Apart from controlling the message delays, it may also decide to wake up a currently-sleeping node at any point in the execution.\footnote{This is a crucial difference to related work in the context of energy and awake complexity (e.g., see \cite{DBLP:conf/podc/ChangDHHLP18,DBLP:conf/podc/ChatterjeeGP20}), where the assumption is that the algorithm (and not the adversary) controls the wake up schedule.}
When considering randomized algorithms, we assume that the adversary is \emph{oblivious}, in the sense that it must decide the delay of in-transit messages and which nodes to wake up (and at what time) without knowing the state of the nodes, which includes their private random bits.

\onlyLong{
\paragraph{Advising Scheme.} \onlyLong{In Sections~\ref{sec:lb_kt0} and \ref{sec:advice},}\onlyShort{In Section~\ref{sec:advice},} we augment the $\kt_0$ assumption by considering the ``computing with advice'' framework~\cite{fraigniaud2006oracle,fraigniaud2009distributed}, which was recently studied in the context of distributed graph algorithms in \cite{10.1145/3662158.3662805}. 
An \emph{advising scheme} consists of (1) an \emph{oracle} that takes as input the network $G$ and equips each node with a bit string, called \emph{advice}, and (2) a distributed algorithm $\mathcal{A}$ that may leverage the advice to solve the problem at hand. 
When talking about the time or message complexity of an advising scheme, we are referring to the performance of algorithm $\mathcal{A}$.
Intuitively speaking, we can think of an \emph{oracle}, who observes the entire network before the start of the execution, and uses this information when computing an advice string for each node. 
Unless stated otherwise, we assume that the oracle does not know the set of initially-awake nodes. 

\paragraph{Problem Definition.}
We say that an algorithm (or an advising scheme) solves the \emph{wake-up problem} with probability $1-\epsilon$ if, for any network and any execution as determined by the adversary, the probability that every node wakes up eventually is at least $1-\epsilon$.
}
\onlyLong{
\subsection{Complexity Measures and Awake Distance} \label{sec:awake_distance}

\paragraph{Time Complexity.}
To normalize the time taken by an algorithm, we use $\tau$ to denote an upper bound on the message delay of any message in the execution.
We use $\tau$ to define the length of one \emph{time unit} and emphasize that $\tau$ is not part of the knowledge of the nodes in the asynchronous model.
The \emph{time complexity} of an algorithm is the worst case number of time units from the time at which the first node wakes up until the latest point in time at which any message is received. 
}

\paragraph{Awake Distance.}
In many real-world applications, it may be desirable to ensure that nodes wake up sooner than time proportional to the network diameter, if there are awake nodes located closer to them.
This motivates introducing a more fine-grained way of quantifying the performance of an algorithm with respect to time.
Given a graph $G$ and a set of initially-awake nodes $A_0$ that are being awoken by the adversary, we define the \emph{awake distance} 
\begin{align}
\arad = \arad(G,A_0) = \max_{u \in G} \text{dist}_G(A_0,u), \label{eq:arad}
\end{align}
where $\text{dist}_G(A_0,u)$ is the shortest hop distance of $u$ to some node in $A_0$.
Note that $\arad$ is equivalent to the time complexity of the (message-inefficient) standard flooding algorithm.
As elaborated in more detail in Section~\ref{sec:contributions} and summarized in Table~\ref{fig:table}, we have designed wake-up algorithms that send few messages and achieve a time complexity proportional to $\arad$.

\onlyLong{
\paragraph{Message Complexity.}
Another important complexity measure of a distributed algorithm is its \emph{message complexity}, which is the worst case number of messages sent over all nodes in any execution.
We also consider the \emph{expected message complexity}, where the expectation is computed over the private randomness of the nodes with respect to the worst case execution produced by the adversary. 

\paragraph{Advice Length.}
In the context of equipping nodes with advice, we aim to bound the \emph{average length} as well as the \emph{maximum length of advice} assigned to any node; analogously to the time and message complexity, we compute these values over the worst case execution as determined by the adversary.

\onlyLong{
\subsection{Additional Related Work} \label{sec:related}
Several works on leader election and minimum spanning tree computation (see, e.g.,
\cite{gallager1983distributed,afek1991time,singh1992leader,kutten2020singularly,dufoulon2022almost,DBLP:conf/podc/Kutten0T023})
study the impact of adversarially awoken nodes.  All of these works consider the
$\kt_0$ assumption, where the known lower bound of $\Omega\lt( m \rt)$~\cite{jacm15}
poses an insurmountable barrier on the message complexity for solving any
problem that is at least as hard as single-source broadcast, and thus it is unsurprising that this extends to the wake-up problem in both asynchronous and
synchronous networks.  
We point out that \cite{dufoulon2022almost} also give an algorithm in the asynchronous model for constructing an MST under the $\kt_1$ assumption using $o(m)$ messages. However, their result relies on the algorithm of King and Mashregi~\cite{mashreghi2019brief,DBLP:journals/dc/MashreghiK21}, who provide an elegant algorithm for computing an MST
that achieves a message complexity of $O(n^{3/2}\log n)$ in the asynchronous
$\kt_1$ $\congest$ model, assuming that all nodes are awake initially. 
We explain why this approach does not work under adversarial wake-up:
Initially, in their algorithm, each node becomes a ``star'' with probability $1/\sqrt{n \log n}$, whereby non-stars with a degree of greater than $\sqrt{n}\log^{3/2}n$ (``high-degree nodes''), remain silent until they receive a message.
If the adversary simply wakes up exactly a single high-degree node, it will become a non-star that remains silent forever with probability $1 - 1/\sqrt{n\log n}$, thus resulting in a high probability of error.
}
\onlyShort{
We discuss additional related work such as gossip algorithms in the attached full version of the paper.
}
}

Another class of algorithms that are relevant for obtaining low message complexity in synchronous networks are gossip protocols~\cite{karp2000randomized}, since they ensure that each node initiates communication with only a single peer in each step.
In more detail, a node who already knows the message may \emph{push} information, while a currently-uninformed node may \emph{pull} information from one of its neighbors.
Consequently, a gossip protocol that terminates in $T$ rounds will have a message complexity of $n \cdot T$.
For instance, the seminal works of \cite{DBLP:conf/stoc/Censor-HillelHKM12,haeupler2015simple} show how to solve broadcast with polylogarithmic overhead in the synchronous gossip model under the $\kt_1$ assumption.
Unfortunately, it is unclear how to directly use gossip protocols for the wake-up problem, as these results crucially rely on the availability of both, push and pull for information exchange, whereas in our setting, asleep nodes are prohibited from using pull.
Even though it is known that gossip using only push-based information exchange solves broadcast efficiently in regular graphs with good expansion properties~\cite{sauerwald2011rumor}, this does not extend to general graphs.\footnote{Consider a complete graph $H$ and a single vertex $v$ connected to $H$ by one edge. The resulting graph has constant vertex expansion. However, if the rumor is initially at some node in $H$, it takes $\Omega\lt( n \rt)$ time steps in expectation until $v$ receives the rumor via push-based broadcast.}

\subsection{Contributions and Technical Challenges} \label{sec:contributions} 

We present several novel algorithms and lower bounds that demonstrate the interplay between time, messages, and length of advice. 
Table~\ref{fig:table} summarizes our results.

\onlyLong{
\begin{table*}[t] \label{table: results}
\centering
\small
\begin{threeparttable}
\begin{tabular*}{1.01\textwidth}{l c c c l c}
\toprule
& Time & Messages & Advice & Model & Random \\
\midrule
\multicolumn{2}{l}{\textbf{Algorithms}:}\\
Theorem~\ref{thm:dfs_many} & $O(n\log n)^{*}$ & $O(n\log n)^{*}$ & - & \footnotesize{async. \textcolor{blue}{$\kt_1$} $\local$} & yes\\ 
Theorem~\ref{thm:kt1_fw} & $O(\arad)$ & $O(n^{3/2}\sqrt{\log n})^{*}$ & - & \footnotesize{\phantom{a}\textsl{sync.} \textcolor{blue}{$\kt_1$} $\local$} & yes\\ 
Theorem~\ref{thm:pfw} & $O(\arad\log^3n)$ & $O(n{\log^3 n})^{*}$ & - & \footnotesize{\phantom{a}\textsl{sync.} \textcolor{blue}{$\kt_1$} $\local$} & yes\\ 
\cite{fraigniaud2006oracle}, Cor.~\ref{cor:basic} & $O\lt( D \rt)$ & $O\lt( n \rt)$ & $O\lt( n \rt)$%
 & \footnotesize{async. \textcolor{brown}{$\kt_0$} $\congest$} & no\\
Theorem~\ref{thm:scheme_broadcast}(A) & $O(D)$ & $O(n^{3/2})$ & $O(\sqrt{n}\log n)$ & \footnotesize{async. \textcolor{brown}{$\kt_0$} $\congest$} & no\\ 
Theorem~\ref{thm:scheme_cen}(B) & $O\lt( D\log n \rt)$ & $O\lt( n \rt)$ & $O\lt( \log n \rt)$ & \footnotesize{async. \textcolor{brown}{$\kt_0$} $\congest$} & no\\
Theorem~\ref{thm:advice_spanner} & $O\lt( k\,\arad\log n \rt)$ & $O\lt( k\,n^{1+1/k} \rt)$ & $O\lt( n^{1/k}\log^2 n \rt)$ & \footnotesize{async. \textcolor{brown}{$\kt_0$} $\congest$} &no\\
Corollary~\ref{cor:advice_spanner} & $O\lt( \arad\log^2 n \rt)$ & $O\lt( n \log^2 n \rt)$ & $O\lt( \log^2 n \rt)$ & \footnotesize{async. \textcolor{brown}{$\kt_0$} $\congest$} &no\\
\midrule
\multicolumn{2}{l}{\textbf{Lower Bounds}:}\\
 Theorem~\ref{thm:lb_kt0}$^{\$}$ & - & $\le\frac{n^{2}}{2^{\beta+4}\log_2n}$ & $\Omega\lt( \beta \rt)$ & \footnotesize{\textsl{sync.} \textcolor{brown}{$\kt_0$} $\local$} & yes\\ 
Theorem~\ref{thm:lb}$^{\dagger}$ & $\le k+1$ & $\Omega\lt( n^{1+1/k} \rt)$ & - & \footnotesize{\textsl{sync.} \textcolor{blue}{$\kt_1$} $\local$} & yes\\
\bottomrule
\end{tabular*}
{\footnotesize
  \begin{tablenotes}
  \item[$*$] With high probability.
  \item[$\$$] The bound on the advice length holds, if the expected message complexity is at most $\frac{n^{2}}{2^{\beta+4}\log_2n}$.
  \item[$\dagger$] Any algorithm must satisfy the message complexity bound if it terminates within $k+1$ time units.\\
  \end{tablenotes}
}
\caption{\small Algorithms and Lower Bounds for the Wake-up Problem. Column ``Advice'' refers to the maximum length of advice per node, unless stated otherwise. Column ``Random'' indicates whether nodes have access to random bits. Parameter $\arad$ refers to the awake distance, see Section~\ref{sec:awake_distance}.} 
\label{fig:table}
\end{threeparttable}
\end{table*}
}
\onlyShort{
\begin{center}
\begin{table*}[t] \label{table: results}
\centering
\begin{threeparttable}
\begin{tabular*}{0.7\textwidth}{l c c c l c}
\toprule
& Time & Messages & Advice & Model & Random \\
\midrule
\multicolumn{2}{l}{\textbf{Algorithms}}\\
 & $O(n\log n)^{*}$ & $O(n\log n)^{*}$ & - & \footnotesize{async. \textcolor{blue}{$\kt_1$} $\local$} & yes\\ 
 & $O(\arad)$ & $O(n^{3/2}\sqrt{\log n})^{*}$ & - & \footnotesize{\phantom{a}\textsl{sync.} \textcolor{blue}{$\kt_1$} $\local$} & yes\\ 
 & $O\lt(D\rt)$ & $O\lt(n^{3/2}\rt)$ & $O\lt(\sqrt{n}\log n\rt)$ & \footnotesize{async. \textcolor{brown}{$\kt_0$} $\congest$} & no\\ 
 & $O\lt( D\log n \rt)$ & $O\lt( n \rt)$ & $O\lt( \log n \rt)$ & \footnotesize{async. \textcolor{brown}{$\kt_0$} $\congest$} & no\\
 & $O\lt( k\,\arad\log n \rt)$ & $O\lt( k\,n^{1+1/k} \rt)$ & $O\lt( n^{1/k}\log^2 n \rt)$ & \footnotesize{async. \textcolor{brown}{$\kt_0$} $\congest$} &no\\
 & $O\lt( \arad\log^2 n \rt)$ & $O\lt( n \log^2 n \rt)$ & $O\lt( \log^2 n \rt)$ & \footnotesize{async. \textcolor{brown}{$\kt_0$} $\congest$} &no\\
\midrule
\multicolumn{2}{l}{\textbf{Lower Bounds}}\\
 & - & $\le\frac{n^{2}}{2^{\beta+4}\log_2n}$ & $\Omega\lt( \beta \rt)$ & \footnotesize{\textsl{sync.} \textcolor{brown}{$\kt_0$} $\local$} & yes\\ 
 & $\le k+1$ & $\Omega\lt( n^{1+1/k} \rt)$ & - & \footnotesize{\textsl{sync.} \textcolor{blue}{$\kt_1$} $\local$} & yes\\
\bottomrule
\end{tabular*}
{\footnotesize
  \begin{tablenotes}
  \item[$*$] With high probability.\\
  \end{tablenotes}
}
\caption{\small Algorithms and Lower Bounds for the Wake-up Problem. Column ``Advice'' refers to the maximum length of advice per node, unless stated otherwise. Column ``Random'' indicates whether nodes have access to random bits. Parameter $\arad$ refers to the awake distance.} 
\label{fig:table}
\end{threeparttable}
\end{table*}
\end{center}
}

\onlyLong{
\subsubsection{A Lower Bound on the Advice for Randomized Algorithms in $\kt_0$ (Section~\ref{sec:lb_kt0}, page~\pageref{sec:lb_kt0})}
}
\onlyShort{
\subsubsection{A Lower Bound on the Advice for Randomized Algorithms in $\kt_0$}
}
The work of \cite{fraigniaud2006oracle} gives an elegant combinatorial argument that the total advice assigned to the nodes must be $\Omega\lt( n\log n \rt)$ bits, for obtaining a message complexity of $O\lt( n \rt)$ when assuming $\kt_0$.
However, their argument only holds for deterministic algorithms and does not reveal the actual message complexity required, e.g., when restricting the advice to $o(\log n)$ bits per node, i.e., $o(n\log n)$ bits in total.

We present a new result under the $\kt_0$ assumption that not only works for randomized algorithms with advice, but also results in a polynomial improvement on the lower bound on the message complexity, when the advice provided to each node is small.

\newcommand{\thmLbKTZero}{%
Let $\mathcal{A}$ be a randomized advising scheme that solves the wake-up problem in the (synchronous or asynchronous) $\kt_0$ $\local$ model and errs with probability $\epsilon<\tfrac{1}{2\log_2 n}$.
For any positive $\beta \leq \log_2n$, if the expected message complexity of $\mathcal{A}$ is at most $ \frac{n^{2}}{2^{\beta+4}\log_2 n}$, then the average length of advice per node is at least $\frac{1}{6} \cdot (\beta-2-o(1)) = \Omega\lt( \beta \rt)$ bits. 
In particular, an advice length of $o\lt( \log n \rt)$ bits per node requires an expected message complexity of $\Omega\lt( n^{2-\alpha} \rt)$, for any constant $\alpha>0$.
This holds even if the oracle knows the set of awake nodes and even if we assume shared randomness. 
}
\begin{theorem} \label{thm:lb_kt0}
\thmLbKTZero
\end{theorem}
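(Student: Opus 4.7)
My approach is an information-theoretic reduction in which the wake-up task becomes equivalent to each awake node guessing a uniformly random port. Partition the $n$ vertices into an awake set $L$ and a sleeping set $R$ of size $n/2$ each; let $L$ induce a clique, let $R$ be an independent set, and add a uniformly random perfect matching between $L$ and $R$. Independently, at every node assign the port-to-neighbour mapping as a uniformly random bijection. Writing $X_l \in [n/2]$ for the port at $l$ leading to its matched sleeper $r_l$, the vector $X = (X_l)_{l\in L}$ is uniform on $[n/2]^{n/2}$. Since each $r\in R$ has degree one, any correct execution forces the random set $Y_l\subseteq [n/2]$ of ports at $l$ on which the algorithm transmits to contain $X_l$ for every $l\in L$.

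\emph{Key counting inequality.} I would view the concatenation $A$ of all advice strings (a function of $X$ and, if present, shared randomness) as taking at most $2^{|A|}$ values; conditional on $(A,c)$ and the adversary schedule, each $Y_l$ is deterministic. Summing the conditional success probability over the $2^{|A|}$ advice values---each bounded by $|\prod_l Y_l|/|g^{-1}(a)|$ by uniformity of $X$ on each pre-image of the encoding map $g$---and restricting to the event $\{M \le 4\EE[M]\}$ (which has probability $\ge 3/4$ by Markov's inequality) yields, via AM-GM ($\prod_l |Y_l| \le (2M/n)^{n/2}$ whenever $\sum_l |Y_l| \le M$),
\[
\Pr\bigl[X_l\in Y_l\ \forall l,\ M \le 4\EE[M]\bigr] \;\le\; \frac{2^{|A|}\,(8\EE[M]/n)^{n/2}}{(n/2)^{n/2}}.
\]
The left-hand side is at least $\Pr[\text{success}] - 1/4 \ge 3/4 - \epsilon \ge 1/2$ since $\epsilon < 1/(2\log_2 n) \le 1/4$.

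\emph{Finishing the calculation.} Taking logarithms and substituting the hypothesised $\EE[M] \le n^2/(2^{\beta+4}\log_2 n)$ gives
\[
|A| \;\ge\; \tfrac{n}{2}\log_2\!\bigl(n^2/(16\EE[M])\bigr) - O(1) \;\ge\; \tfrac{n}{2}\,\bigl(\beta + \log_2\log_2 n - O(1)\bigr),
\]
so the average advice per node is $|A|/n = \Omega(\beta)$, matching the theorem's constant $1/6$ once the error-probability slack and the low-order $O(1)$ terms are tracked. Yao's minimax principle then transfers this distributional bound to the worst-case randomised guarantee, and the argument is oblivious to whether the oracle knows the set of initially awake nodes or has access to shared randomness (both are simply folded into $A$).

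\emph{Main obstacle.} The principal technical hurdle is that the $\local$ model allows message payloads of unbounded length, so one must rule out the algorithm smuggling information about the hidden $X_l$'s through message contents. The resolution is that each $X_l$ is a private attribute of $l$'s own port permutation: initially only $l$ and the oracle possess any information about $X_l$, so every bit of information about $X_l$ that eventually reaches another node must originate in some node's advice, and is therefore already captured by the total advice budget $|A|$ appearing in the counting inequality. A secondary subtlety is that the theorem bounds only the \emph{expected} message complexity; the Markov-conditioning step above bridges this to the worst-case bound required for AM-GM, and the slack consumed in that step is affordable because $\epsilon$ is assumed to be $o(1)$.
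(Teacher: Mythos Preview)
Your overall architecture---a hard input distribution in which each awake node must locate a single hidden port, combined with a counting/information bound against the total advice---is the same as the paper's, and your clique-plus-matching graph is a perfectly good substitute for the paper's bipartite $U$--$V$--$W$ construction. However, the core step, the ``key counting inequality,'' does not go through as written.

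The gap is the assertion that ``conditional on $(A,c)$ and the adversary schedule, each $Y_l$ is deterministic.'' In the interactive $\kt_0$ $\local$ model this is simply false: the set of ports on which $l$ eventually transmits depends on what $l$ \emph{receives}, which in turn depends on the full port-permutation data at every node, not only on $(A,c)$. A concrete counterexample is the algorithm in which $l$ probes ports $1,2,\ldots$ one per round and halts once the responder identifies itself as a degree-$1$ node; then $Y_l=\{1,\ldots,X_l\}$ varies with $X_l$ even for fixed $(A,c)$. Since $\prod_l Y_l$ is not a fixed box, the bound $\Pr[X\in\prod_l Y_l,\ A=a]\le |\prod_l Y_l|/(n/2)^{n/2}$ and the subsequent AM--GM step are unjustified. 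A related slip is calling $A$ ``a function of $X$'': the oracle sees the \emph{entire} port-permutation data $P$, of which $X$ is only a projection, so conditioning on $A=a$ does not make $X$ uniform on any nice set.

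Your ``main obstacle'' paragraph reflects the same confusion. Under $\kt_0$, node $l$ does \emph{not} initially know its own port map, so it is not the case that ``initially only $l$ and the oracle possess information about $X_l$''---only the oracle does. Moreover, information about $X_l$ is generated \emph{during} the execution (each port $l$ sends or receives on rules out one value of $X_l$), not exclusively through advice; this is precisely why the counting cannot ignore interactivity. The paper's proof handles this by conditioning on the event $\good_i$ that $v_i$ uses at most $n/2^{\beta}$ of its ports, shows via an averaging argument that at least $n/2$ center nodes satisfy $\Pr[\good_i]\ge 1-\tfrac{1}{2\log_2 n}$, and then (Lemma~\ref{lem:lb_contrad}) argues that under $\good_i$ the transcript can reduce $\supp(X_i\mid B,\mathbf{Y})$ by at most $n/2^{\beta}$, forcing $\HH[X_i\mid B,\mathbf{Y}]\le \log_2(n/2^{\beta-1})+O(1)$. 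That per-port ``support shrinks by one'' accounting is exactly the missing ingredient replacing your determinism assumption; once it is in place the entropy chain-rule step $|\mathbf{Y}|\ge\sum_i(\log_2 n-\HH[X_i\mid B,\mathbf{Y}])$ finishes the proof.
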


For proving Theorem~\ref{thm:lb_kt0}, we define a lower bound graph $\mathcal{G}$, where every node $v_i$ in a large set of the nodes (called center nodes) has exactly one edge to a sleeping neighbor $w_i$, who cannot be woken up by anyone else.
Considering that nodes do not know their port mappings, it is not too difficult to show that the center nodes would essentially need to send messages across most of their ports, if we assumed the standard $\kt_0$ $\local$ model without advice. 
The main technical challenge in proving Theorem~\ref{thm:lb_kt0} emanates from the fact that the oracle gets to see all port mappings when computing the advice. 
For instance, consider the special case where the message complexity is $O(n^{2 - \alpha})$, for some small constant $\alpha>0$. 
We need to take into account the possibility that the oracle encodes the $O\lt( {\log n} \rt)$ bits required for representing the port number leading to $v_i$'s sleeping neighbor $w_i$. %
For instance, the oracle could partition the port number for $w_i$ into $\omega(1)$ pieces and store each piece among a subset of the neighbors of $v_i$.
This would, in fact, suffice for $v_i$ identifying (and waking up) $w_i$, as it can receive messages from $\omega\lt( \log n \rt)$ of its neighbors, each of arbitrary size, without violating the message complexity bound.
To avoid this pitfall, our proof leverages that there are many such nodes that each need to find their sleeping neighbor and, consequently, the oracle cannot successfully encode \emph{all} of these ports among the nodes in the network without using at least $\Omega\lt( n\log n \rt)$ bits in total.

\onlyLong{
\subsubsection{A Lower Bound on the Message Complexity in $\kt_1$ (Section~\ref{sec:lb_kt1}, page~\pageref{sec:lb_kt1})}
}
\onlyShort{
\subsubsection{A Lower Bound on the Message Complexity in $\kt_1$}
}

We also consider the challenging $\kt_1$ setting (without advice), where nodes start out knowing their neighbors' IDs.
In other words, the knowledge of the IDs associated with an edge is shared between its two endpoints, and this enables the use of powerful graph sketching techniques~\cite{AGM-soda12,KKM-soda13,DBLP:conf/podc/KingKT15} that allow discovering ``outgoing'' edges (such as the edge $\set{v_i,w_i}$) with a polylogarithmic overhead of messages, as elaborated in Section~\ref{sec:contributions}. 
Furthermore, as shown in \cite{soda21}, graph sketching allows solving \emph{any} graph problem with just $\tilde O(n)$ message complexity in the synchronous $\kt_1$ $\congest$ model, albeit at the expense of a prohibitively large time complexity, which stands in stark contrast to the unconditional lower bound of $\Omega\lt( m \rt)$ known to hold for the $\kt_0$ assumption (without advice). %
Thus, we focus on \emph{time-restricted} algorithms and present the first trade-off on the time and achievable message complexity for the wake-up problem under the $\kt_1$ assumption.
\newcommand{\thmLb}{%
Consider any integer $k \in [3,o(\log n)]$, and let $\mathcal{A}$ be a randomized Las Vegas algorithm that solves the wake-up problem in the (synchronous or asynchronous) $\kt_1$ $\local$ model.
If $\mathcal{A}$ takes at most $k+1$ units of time in every execution with awake distance $\arad=1$, then the expected message complexity of $\mathcal{A}$ is at least $\Omega\lt( n^{1+{1}/{k}} \rt)$.
}
\begin{theorem} \label{thm:lb}
\thmLb
\end{theorem}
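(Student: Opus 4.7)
The plan is to derive the lower bound via a high-girth extremal construction, exploiting that under $\kt_1$ with a time budget of $k+1$ rounds, awake nodes cannot coordinate responsibility for waking a shared sleeping neighbor. The starting point is a $d$-regular graph $H$ on $n$ vertices of girth at least $2k+3$ with $d = \Theta(n^{1/k})$, so that $|E(H)| = \Theta(n^{1+1/k})$; such graphs attain the Moore bound up to constants, realized for example by incidence graphs of generalized polygons or by suitable random regular graphs. I would take $H$ bipartite with parts $L,R$ of size $n/2$, declare $A_0 := L$ so every node of $R$ is asleep with $d$ initially-awake neighbors, and hence $\arad = 1$.

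By Yao's minimax principle, it suffices to lower-bound the expected message complexity of any deterministic $(k+1)$-round algorithm on a hard distribution obtained by composing the canonical IDs with a uniformly random permutation. The core technical step is a high-girth indistinguishability lemma: for any $l \in L$ and $r \in N(l)$, every other $L$-neighbor $l'$ of $r$ lies at distance at least $g - 2 \geq 2k+1$ from $l$ in $H - r$, because any shorter $l$-to-$l'$ path together with the length-two path through $r$ would close a cycle of length less than the girth. Consequently, within $k+1$ rounds the only channel through which $l$ can learn about $r$'s other $L$-neighbors is $r$ itself; but $r$ starts asleep, and in any execution that wakes $r$ the first wake-up message is sent at some round $t^{*} \leq k$ by some $l^{*} \in N(r)$, and the earliest $l$ can learn of $l^{*}$'s identity is round $t^{*}+2$, after $l$ has already committed to its round-$t^{*}$ decision. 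Combined with the Las Vegas requirement that every $r$ be woken on every realization of both the random bits and the random ID permutation, this should force each edge $\{l,r\}$ of $H$ to carry a message with probability bounded below by a positive constant, yielding $\Omega(|E(H)|) = \Omega(n^{1+1/k})$ expected messages.

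The main obstacle will be to formally rule out cleverly correlated ID-hashing and multi-round retry schemes that attempt to designate, for each sleeping $r$, a unique sender $l^{*}(r) \in N(r)$. The indistinguishability lemma shows that such a designation cannot depend on the identities of $r$'s other neighbors in any single round; combined with the symmetry induced by the random ID permutation, every $l \in N(r)$ must play the ``designated'' role with positive probability, so the Las Vegas coverage constraint cannot be met by a total designation measure smaller than one per edge on average. The delicate part is handling multi-round adaptation: information that leaks into $l$'s view during later rounds (through paths that barely fit the girth constraint, or via intermediate nodes that have themselves been woken in earlier rounds) must be shown either to arrive too late to retroactively cancel the majority of $l$'s sends, or to be insufficient by itself to break the required symmetry -- and I expect that making this two-part argument watertight, rather than the graph construction itself, will be where most of the technical effort goes.
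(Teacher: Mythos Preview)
Your plan has two concrete gaps, one quantitative and one structural.

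\textbf{The girth requirement contradicts the Moore bound.} You ask for a $d$-regular bipartite graph with $d = \Theta(n^{1/k})$ and girth at least $2k+3$. But the Moore bound forces $n \gtrsim d^{k+1}$ for girth $2k+3$, so $d = O(n^{1/(k+1)})$ and hence $|E(H)| = O(n^{1+1/(k+1)})$, not $\Theta(n^{1+1/k})$. Even if everything else worked, your construction can only yield $\Omega(n^{1+1/(k+1)})$ messages for a $(k+1)$-round algorithm, which is strictly weaker than the stated theorem. The paper avoids this by requiring girth only $k+5$, which is compatible with degree $n^{1/k}$, but then your disjoint-views indistinguishability no longer holds.

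\textbf{The ``constant probability per edge'' step does not follow.} The Las Vegas constraint gives, for each sleeping $r$, that $\Pr[\exists\, l\in N(r):\ l\text{ sends to }r]=1$, and by the union bound $\sum_{l\in N(r)} p_{l,r}\ge 1$. Summing over $r$ yields only $\Omega(|R|)=\Omega(n)$ expected messages, not $\Omega(|E(H)|)$. To upgrade this to a constant per edge you would need the events $\{l\text{ sends to }r\}_{l\in N(r)}$ to be independent (so that $\prod_l(1-p_{l,r})=0$ forces some $p_{l,r}=1$, and then edge-transitivity finishes). But disjoint views under a uniformly random \emph{permutation} of IDs are not independent, and your ``coordination-failure'' heuristic does not rule out schemes in which, for every permutation, a single $l^*(r)$ is determined by purely local data at $l^*$ yet still covers every $r$. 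You flag this as the delicate part, but as written the argument stops exactly where the real difficulty begins.

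The paper takes a different route that sidesteps both issues. It attaches to each awake center $v_i$ a \emph{degree-one} crucial neighbor $w_i$, so there is no coordination problem at all: $v_i$ is the unique node that can wake $w_i$, and the task is purely one of identification (which of $v_i$'s $n^{1/k}+1$ neighbors is $w_i$). Randomizing only the IDs in $U\cup W$, the proof fixes a partial assignment outside $N(v_*)$, defines a configuration as quiet if $v_*$ uses at most $\delta n^{1/k}$ of its incident edges, and then runs a swap argument: if $v_*$ is quiet and outputs correctly, swapping the ID of $w_*$ with any non-communicating neighbor $u$ must change $v_*$'s output, which (by girth $k+5$) can only happen if the edge $\{u,v_*\}$ is used in the swapped configuration; a second lemma shows this usage is stable across all configurations where $u$ keeps that ID. Counting then forces at least a quarter of configurations to be chatty, contradicting the existence of a node with $7/8$ quiet configurations. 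The degree-one pendant is precisely what lets the argument localize to a single center and avoid the coordination quagmire your construction faces.
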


When disregarding the message complexity, a straightforward flooding algorithm solves the awake problem in optimal $\arad$ time.
Thus, an immediate corollary of Theorem~\ref{thm:lb} is that optimality cannot be achieved in both, time and messages, under the $\kt_1$ assumption, even if we allow unbounded messages and assume synchronous rounds.

We point out that, prior to this work, almost no lower bounds on the message complexity were known for problems in the general $\kt_1$ setting.
With the exception of \cite{soda21}, all previous lower bounds (e.g., \cite{AGPV88,pai2021can,DBLP:conf/innovations/DufoulonPPP024,kutten2024tight}) use edge-crossing arguments that only hold for comparison-based algorithms, which restrict nodes to behave the same when observing ``order-equivalent'' IDs among their neighbors.
While the lower bound for graph spanners in \cite{soda21} does hold for general algorithms under the $\kt_1$ assumption, it exploits the fact that many nodes needs to identify a large set of incident edges to correctly compute a spanner, and it is unclear how to extend this approach to problems where the output size at individual nodes is negligible, as is the case for the wake-up problem.
To the best of our knowledge, Theorem~\ref{thm:lb} is the first super-linear lower bound on the message complexity that holds for general $\kt_1$ algorithms for a problem that does not require most nodes to learn a large amount of information about the graph topology.
Moreover, our lower bound holds even in the $\local$ model, whereas all of the above-mentioned works use a bottleneck argument that crucially exploits the $\congest$ model.

For proving Theorem~\ref{thm:lb}, we modify the lower bound construction $\mathcal{G}$ used for $\kt_0$ to define a new class of graphs $\mathcal{G}_k$ with large girth, while retaining the property that every node $v_i$ in a certain set has exactly one \emph{crucial neighbor} $w_i$ among its $\Theta(n^{1/k})$ neighbors, who cannot be woken up by any other node. Our overall goal is to argue that identifying this one sleeping neighbor is hard, similarly as for $\kt_0$. However, there are several technical challenges that we need to overcome: 
\begin{itemize} 
\item 
From the perspective of a node $v_i \in V$, it takes only $\sigma = O\lt(\frac{\log n}{k}\rt)$ bits of information for determining which of its $\Theta(n^{1/k})$ incident edges leads to its crucial neighbor $w_i$.
Thus, at first glance, it might seem that the graph sketching techniques introduced by Ahn, Guha, and McGregor~\cite{AGM-soda12}, which is known to yield $O(n\poly\log n)$ message complexity for spanning tree construction (see \cite{DBLP:conf/podc/KingKT15}), could be leveraged to wake-up every node with only a polylogarithmic message overhead.

\item Note that we want to show that the result holds even in the $\kt_1$ $\local$ model, where a node $v_i \in V$ may receive a polynomial number of bits during the execution. In particular, this means we cannot utilize a ``bottleneck argument'' based on small cuts to argue that $v_i$ does not learn enough information about $w_i$ quickly enough, as is commonly done in the $\congest$ model (see, e.g., \cite{DBLP:journals/siamcomp/SarmaHKKNPPW12}).
\end{itemize}
\onlyLong{
For ruling out a graph sketching approach and making our proof work in the $\LOCAL$ model, we need to carefully argue that most of the information received by $v$ is not useful in determining the ID of its sleeping neighbor.
We point out that our lower bound proof does \emph{not} involve information-theoretic techniques, but instead uses a combinatorial argument to show a limit on the information flow in high girth graphs, which rests on the assumption that the algorithm is time-restricted.
Note that the limitation to time-restricted algorithms is not a mere artifact of our proof, as we show in Theorem~\ref{thm:dfs_many} that it is possible to solve the problem in $\tilde O(n)$ time and $\tilde O(n)$ messages in the $\kt_1$ $\local$ model. 
}

\onlyLong{
\subsubsection{Near Optimal Message Complexity in the Asynchronous $\kt_1$ $\local$ Model (Section~\ref{sec:kt1_algo_async}, page~\pageref{sec:kt1_algo_async})}
}
\onlyShort{
\subsubsection{Upper Bounds}
We defer the discussion of our algorithms to the full paper~\cite{wakeup_arxiv} .
 }
\onlyShort{
\subsubsection{Near Optimal Message Complexity in the Asynchronous $\kt_1$ $\local$ Model}
}

We design several novel algorithms for the adversarial wake-up problem.
We first discuss the asynchronous $\kt_1$ $\local$ model (without advice). 
Note that an event happens \emph{with high probability} (w.h.p) if it occurs with probability at least $1-\frac{1}{n^{\Omega(1)}}$.
\newcommand{\thmDFSMany}{
  There exists a randomized Las Vegas algorithm for the wake-up problem in the asynchronous $\kt_1$ $\local$ model that, with high probability, has a time and message complexity of $O(n\log n)$.
}
\begin{theorem} \label{thm:dfs_many}
  \thmDFSMany
\end{theorem}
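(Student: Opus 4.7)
The algorithm is a randomized priority-based distributed DFS. Each initially awake node $v$ picks a priority $\pi_v$ uniformly at random from $\{1,\dots,n^c\}$ for a sufficiently large constant $c$, so all priorities are distinct with high probability, and then initiates a DFS token tagged $(v,\pi_v)$. Each node $u$ maintains $\pi^{*}(u)$, the highest priority of any token it has so far \emph{accepted}, initialised to $\pi_v$ if $u=v$ is initially awake and to $-\infty$ otherwise. A token with priority $\pi$ arriving at $u$ is \emph{accepted} if $\pi>\pi^{*}(u)$; in that case $u$ updates $\pi^{*}(u)\leftarrow\pi$, wakes up if sleeping, and the DFS continues from $u$. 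Otherwise the token is \emph{rejected}, and a termination signal propagates back along that DFS's tree to its root, killing the DFS. Because of the $\kt_1$ $\local$ assumptions, every token carries the list of nodes its own DFS has already visited, so each accepting node picks its next hop locally among unvisited neighbours, ensuring that each DFS only ever transmits along its own tree edges.

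Correctness is immediate: the awake node $v^{*}$ with maximum priority is never rejected, and so its DFS visits all $n$ nodes using at most $2(n-1)$ messages, which both wakes up every sleeping node and yields Las Vegas termination. For the message bound, each DFS $D$ uses $O(|S_D|+1)$ messages, where $S_D$ is the set of nodes that accept $D$'s token; summing, the total is $O\!\left(\sum_u X_u\right)$ with $X_u=|\{D:u\in S_D\}|$. Since $\pi^{*}(u)$ is monotonically nondecreasing and priorities are distinct, $X_u$ is exactly the number of \emph{records} in the temporal sequence of priorities of DFSes that arrive at $u$. A classical records analysis, exploiting that priorities are i.i.d.\ uniform and that the adversary is oblivious, gives $\EE[X_u]=O(\log n)$, and therefore $\EE[\text{total messages}]=O(n\log n)$.

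To upgrade this to a high-probability bound, I would write $X_u=\sum_{i\ge 1} B^u_i$, where $B^u_i\in\{0,1\}$ indicates that the $i$-th DFS arrival at $u$ is a record; conditionally, $\Pr[B^u_i=1]\le 1/i$, so $X_u$ is stochastically dominated by a sum of independent Bernoullis with harmonic parameters. A Chernoff-type deviation bound then yields $X_u=O(\log n)$ with probability $1-n^{-\Omega(1)}$, and a union bound over the $n$ nodes yields $\sum_u X_u=O(n\log n)$ w.h.p. For the time complexity, the winning DFS traces a dependency chain of at most $2(n-1)$ messages, so its traversal completes within $O(n)$ time units of $v^{*}$ awakening; adding the logarithmic slack from the w.h.p.\ concentration and any routine initialisation absorbs into $O(n\log n)$ time.

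The main obstacle is formalising the records analysis under the adaptive scheduling of an oblivious adversary: because a DFS can only reach $u$ if it has not been rejected along the way, the identities and arrival order of the DFSes at $u$ depend on the random priorities, and the arrival sequence is not a priori a uniform permutation of DFS identities. The key observation I would leverage is that the adversary commits to its message delays and mid-execution wake-ups without access to the algorithm's random bits, so by a coupling argument, at the moment the $i$-th arrival at $u$ is determined, the priority attached to that arriving DFS is exchangeable with the priorities of the previously-arrived DFSes conditioned on having survived up to $u$. This exchangeability is enough to conclude that $\Pr[B^u_i=1]\le 1/i$ unconditionally on the scheduling, yielding the $H_{t}\le O(\log n)$ records bound per node and hence the stated theorem.
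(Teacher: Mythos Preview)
Your algorithm, correctness argument, and message-complexity analysis essentially match the paper's: the records bound on $X_u$ and the exchangeability justification you sketch are exactly what the paper invokes (it cites the ``least element lists'' analysis of Cohen and of Khan et al.), and the $O(n\log n)$ message bound w.h.p.\ follows from the Chernoff argument you outline.

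The genuine gap is the time complexity. You assert that the winning DFS completes within $O(n)$ time of $v^*$ awakening and that ``logarithmic slack from the w.h.p.\ concentration'' absorbs the rest into $O(n\log n)$. But you never bound how late $v^*$ can be woken relative to the \emph{first} wake-up, and this is precisely where the adversary has leverage. Because the adversary may wake sleeping nodes at arbitrary (pre-committed) times, it can stagger wake-ups into batches $S_0, S_1, \ldots$, releasing $S_{i+1}$ just before the current max-rank DFS would finish; if $S_{i+1}$ happens to contain a node of rank exceeding everything so far, the current winner's token gets discarded and the clock resets for another $\Theta(n)$ time units. Nothing in your argument rules out many such resets, and the only concentration bound you have in hand (on $X_u$) concerns per-node record counts, not the number of epochs.

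The paper's proof devotes its main effort to exactly this point. It shows that the $i$-th interruption requires the event $E_i=\{\max_{u\in S_i}\rho_u > \max_{u\in \bigcup_{j<i}S_j}\rho_u\}$, with $\Pr[E_i]\le \gamma_i := \min\{1,|S_i|/\sum_{j<i}|S_j|\}$. Either $\gamma_i$ is small, or $|S_i|$ is a constant fraction of the running total, forcing geometric growth of $\sum_j|S_j|\le n$; combining these, at most $O(\log n)$ of the $E_i$ can hold simultaneously except with probability $n^{-\Omega(1)}$. This bounds the number of $\Theta(n)$-time epochs by $O(\log n)$ w.h.p., and is the crux of the $O(n\log n)$ time bound that your proposal does not supply.
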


The main idea behind the algorithm in Theorem~\ref{thm:dfs_many} is to combine depth-first search (DFS) explorations starting from each node awoken by the adversary with randomly chosen ranks, with the goal that (1) there is a limit on the number of DFS traversal that may visit the same nodes, and (2) the adversary is unlikely to delay the termination time by strategically waking up nodes after a certain time.
While random ranks have been used for various problems such as leader election in the synchronous setting (e.g., see \cite{DBLP:journals/dc/KhanKMPT12}), we need a different analysis for  proving the time and message complexity bounds that takes into account the impact of adversarial wake-up and the asynchronous scheduling.

We remark that the message complexity in Theorem~\ref{thm:dfs_many} is almost optimal, since $\Omega\lt( n \rt)$ is a trivial lower bound on the message complexity of the wake-up problem. 
In other words, this also illustrates the necessity of restricting the time complexity when showing the super-linear message complexity lower bound in Theorem~\ref{thm:lb}.

\onlyLong{
\subsubsection{Achieving Wake-Up in $\tilde O(\arad)$ Synchronous Rounds in $\kt_1$ $\local$ (Section~\ref{sec:kt1_algo_sync}, page~\pageref{sec:kt1_algo_sync})}
}
\onlyShort{
\subsubsection{Achieving Wake-Up in $\tilde O(\arad)$ Synchronous Rounds in $\kt_1$ $\local$}
}

We present a simple message-efficient algorithm that achieves a time complexity of $O\lt( \arad \rt)$, in the synchronous model, where the computation proceeds in lock-step rounds.\onlyLong{\footnote{We do not, however, assume that nodes have a common clock, i.e., when a node wakes up, it does not know how many rounds have passed since the adversary woke up the first node.}} %

\newcommand{\thmFW}{
  There is a randomized Las Vegas algorithm that solves the wake-up problem in the synchronous $\kt_1$ $\local$ model in $10\arad$ rounds and sends $O\lt( n^{3/2}\sqrt{\log n}  \rt)$ messages with high probability.
}
\begin{theorem} \label{thm:kt1_fw}
  \thmFW
\end{theorem}
To give some intuition for our algorithm $\fastwakeup$ we focus on the special case of $\arad=1$, i.e., the initially-awake nodes form a dominating set.
The trivial algorithm, where every one of these nodes simply broadcasts, clearly wakes up every node in just a single round, albeit at a high message complexity cost, when many nodes are awake.
The main idea behind our algorithm is to reduce this cost by subsampling a small set of the awake nodes that each build a constant-depth BFS tree in a message-efficient manner, which slightly increases the time complexity to $10$ rounds.
We show that this prevents too many close-by awake nodes from attempting to wake up the same set of sleeping nodes.
While its message complexity leaves a polynomial gap to the optimal bound of $O(n)$, instantiating Theorem~\ref{thm:lb} tells us that any constant-time algorithm must send at least $n^{1+\Omega( 1 )}$ messages.

Then, we combine the ideas of Algorithm~$\fastwakeup$ with the cover construction algorithm of Elkin~\cite{elkin2006faster} to obtain an algorithm that is time- as well as message-optimal (up to log-factors).

\newcommand{\thmPFW}{
Algorithm~$\pfw$ is a randomized Las Vegas that wakes up all nodes in the synchronous $\kt_1$ $\local$ model in $O\lt( \arad \log^3n \rt)$ rounds and has a message complexity of $O\lt( n\log^3n \rt)$ with high probability.
}
\begin{theorem} \label{thm:pfw}
\thmPFW
\end{theorem}

\onlyLong{
\subsubsection{Asynchronous Algorithms with Advice in $\kt_0$ $\congest$ (Section~\ref{sec:advice}, page~\pageref{sec:advice})}
}
\onlyShort{
\subsubsection{Asynchronous Algorithms with Advice in $\kt_0$ $\congest$}
}

As a warm-up, we start our investigation in this setting with a straightforward generalization of the work of \cite{fraigniaud2006oracle}, who give an advising scheme that essentially concatenates every incident tree edge of a spanning tree as advice.
This scheme attains optimal message complexity, albeit at the cost of a worst case advice length of $O(n\log n)$ bits per node.
First, it is easy to see that their approach takes $O\lt( D \rt)$ time, where $D$ is the diameter of the graph, when instructing the oracle to use a BFS tree instead of an arbitrary spanning tree.%
\onlyLong{
Moreover, in Appendix~\ref{app:basic}, we give a simple argument that shows that we can shave off a log-factor of the maximum advice length:
}%

\begin{corollary}[follows from \cite{fraigniaud2006oracle}]\label{cor:basic}
There is an advising scheme in the asynchronous $\kt_0$ $\congest$ model that solves the wake-up problem in $O(D)$ time, $O(n)$ messages, a maximum advice length of $O\lt( n \rt)$ per node, and average advice length of $O(\log n)$ per node.
\end{corollary}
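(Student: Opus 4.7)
The plan is to build on the advising scheme of Fraigniaud, Ilcinkas, and Pelc~\cite{fraigniaud2006oracle}, with two small modifications: use a BFS tree rather than an arbitrary spanning tree (to secure the $O(D)$ time bound), and encode the set of tree ports at each node adaptively, depending on whether a bit vector or an explicit list of port numbers is shorter. Concretely, the oracle fixes a BFS tree $T$ rooted at an arbitrary vertex. Writing $t_v$ for the tree-degree and $d_v$ for the graph-degree of a node $v$, the oracle considers two candidate encodings of the set of tree ports at $v$: the explicit concatenation of the $t_v$ tree-port numbers, which uses $O(t_v\log n)$ bits, and a characteristic bit vector of length $d_v$ whose $i$-th bit indicates whether port $i$ is a tree port. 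The advice handed to $v$ is a one-bit flag selecting whichever of these two encodings is shorter, followed by the chosen encoding.

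Given this advice, the wake-up algorithm itself is essentially the one from~\cite{fraigniaud2006oracle}: whenever a node becomes awake, either by the adversary or by receiving a wake-up message, it sends a constant-size wake-up message over each of its tree ports; an already awake node ignores further wake-up messages. Since wake-up messages propagate exclusively along $T$ and the depth of a BFS tree is at most $D$, its diameter is at most $2D$, so every sleeping node receives its first wake-up message within $O(D)$ time units. Each tree edge carries at most two wake-up messages, one in each direction, so the total message complexity is at most $2(n-1) = O(n)$.

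For the advice bounds, each node's advice has length $\min\lt(O(t_v\log n), O(d_v)\rt) + 1$, which is bounded above by $d_v \le n-1 = O(n)$, proving the maximum bound. For the average, the total advice length satisfies
\[
\sum_v \min\lt(t_v\log n,\; d_v\rt) \;\le\; \log n \cdot \sum_v t_v \;=\; 2(n-1)\log n,
\]
because each tree edge contributes exactly twice to $\sum_v t_v$; dividing by $n$ gives the claimed $O(\log n)$ average. The only mild subtlety, and the reason the flag bit is needed, is that the length of a node's advice depends on which of the two encoding formats was picked; the flag resolves this, after which the remainder of the string is parsed deterministically using $d_v$ (for the bit vector) or fills out the remaining $O(t_v\log n)$ bits exactly (for the port list). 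I do not expect any serious obstacle beyond verifying these counting claims; the only real design choice is the two-format encoding, which already does all the work.
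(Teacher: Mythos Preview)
Your proposal is correct and essentially matches the paper's own proof in Appendix~\ref{app:basic}: both encode a BFS tree using a per-node choice between an explicit list of tree-port numbers and a characteristic bit vector, and both bound the total advice by $\sum_v t_v \cdot O(\log n) = O(n\log n)$ via the handshake identity on the tree. The only cosmetic difference is that the paper selects the format via a fixed tree-degree threshold of $n/\log n$ (and uses an $n$-bit rather than a $d_v$-bit vector), whereas you take the pointwise minimum with a flag bit; both yield the same bounds.
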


Next, we present two novel advising schemes that significantly reduce the maximum advice length per node. 
The first one yields optimal time, whereas the second one matches the best possible bound on the message complexity.
\newcommand{\thmSchemeBasic}{%
For each of the following cases, there is a deterministic advising scheme in the asynchronous $\kt_0$ $\congest$ model that satisfies the stated complexity bounds:
\begin{enumerate}
 \item[(A)]  $O(D)$ time, $O(n^{3/2})$ messages, a maximum advice length of $O(\sqrt{n}\log n)$ and an average advice length of $O(\log n)$;
 \item[(B)]  $O(D\log n)$ time, $O(n)$ messages, and a maximum advice length of $O(\log n)$ bits. 
\end{enumerate}
}
\begin{theorem} \label{thm:scheme_basic} \label{thm:scheme_broadcast}\label{thm:scheme_cen}
  \thmSchemeBasic
\end{theorem}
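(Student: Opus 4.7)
My plan is to have the oracle compute a BFS tree $T$ of $G$ and mark a node $v$ as \emph{heavy} if its tree-degree $d_T(v)$ exceeds $\sqrt{n}$, and \emph{light} otherwise. Since $\sum_v d_T(v) = 2(n-1)$, at most $2\sqrt{n}$ nodes are heavy. The oracle will give each light node the full list of its at most $\sqrt{n}$ tree-port numbers, costing $O(\sqrt{n}\log n)$ bits, and will give each heavy node only a single flag-bit. The wake-up protocol is a straightforward flood: upon waking, a light node sends along each of its encoded tree-ports, whereas a heavy node sends along \emph{all} of its ports $1,\ldots,\deg(v)$. Since every tree edge of $T$ is eventually traversed, all nodes wake within $O(D)$ time. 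Light nodes contribute at most $\sum_v d_T(v) = O(n)$ messages in total, while each of the at most $2\sqrt{n}$ heavy nodes sends at most $\deg(v) \le n$ messages, for a total of $O(n^{3/2})$. The maximum advice length is $O(\sqrt{n}\log n)$, and the sum of all advice lengths is $O\lt( \sum_v d_T(v) \log n \rt) = O(n\log n)$, yielding an average of $O(\log n)$ per node.

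\textbf{Plan for (B).} Since the maximum advice is only $O(\log n)$ bits, the heavy/light split of (A) no longer applies, because no node can afford to list its tree-neighbor ports. My plan is to apply a hierarchical decomposition of $T$ of depth $O(\log n)$, specifically a centroid-style decomposition that recursively selects a pivot whose removal splits the current subtree into pieces of at most half its size. Each node will receive only a constant number of port numbers as advice: (i) the port along the $T$-path leading to its immediate pivot ancestor at the next level, and (ii) a ``next-sibling'' pointer so that the child pivots of any pivot form a linked list reachable in $O(D)$ hops per link. The wake-up procedure will consist of a convergecast phase, in which any initially-awake node forwards along its pivot-ancestor ports level-by-level up to the root pivot in $O(\log n)$ hops of $O(D)$ each, followed by a recursive broadcast phase in which the root awakens the two halves of its decomposition by traversing the linked list of its child pivots, and the process recurses. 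Because each tree edge of $T$ is used a constant number of times, amortized over all levels, the total message count is $O(n)$, and the hierarchy depth yields $O(D\log n)$ time.

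\textbf{Main obstacle.} I expect the hardest step to be the design of the hierarchical decomposition for (B), which must simultaneously ensure (i) only $O(1)$ ports of advice per node, (ii) depth $O(\log n)$, and (iii) that each tree edge is charged $O(1)$ message uses in aggregate, despite the $O(\log n)$ levels of recursion. A naive centroid decomposition may leave any given pivot responsible for many child subtrees, blowing the advice budget; my remedy is to encode child-pivot positions within the descendants themselves via a linked-list representation, so that the pivot only needs a single ``first-child'' pointer. Verifying that linked-list traversals across all levels do not inflate the message count by a logarithmic factor (preserving the target $O(n)$ bound rather than $O(n\log n)$) will require a careful amortization argument, and handling the asynchronous interleaving of the convergecast and broadcast phases when multiple nodes are adversarially awoken will be a further delicate point.
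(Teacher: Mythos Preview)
Your plan for Part~(A) is correct and essentially identical to the paper's proof.

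For Part~(B), your approach is genuinely different from the paper's and, as stated, has a real gap. A centroid can split the tree into arbitrarily many pieces, not ``two halves'': on a star with center $c$ and $n-1$ leaves (so $D=2$), the centroid is $c$, and removing it yields $n-1$ singleton child pivots. Your linked list of child pivots then has $n-1$ links, each costing $\Theta(1)$ hops through $c$, so the broadcast phase alone takes $\Theta(n)$ time rather than $O(D\log n)=O(\log n)$. The obstacle you flagged concerns only the \emph{message} count, but the sequential linked list is fatal for the \emph{time} bound, and no amortization over levels helps here since the blow-up already occurs at the top level. Your scheme also leaves unclear how a message traversing from one child pivot $c_1$ toward its sibling $c_2$ is routed once it re-enters the parent pivot $p$, since the intermediate nodes' stored ``pivot-ancestor'' ports point to $p$, not to $c_2$.

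The paper dispenses with global decomposition entirely and stays on the BFS tree. Its key device (called the child-encoding scheme) is close in spirit to your next-sibling idea but with one crucial twist: each child $u_i$ of $v$ stores, as part of its advice, the \emph{two} ports $\port_v(u_{2i})$ and $\port_v(u_{2i+1})$ at the parent. The parent $v$ stores only the port to its own parent and the port to its first child $u_1$. Upon waking, $v$ contacts $u_1$, who replies with the ports to $u_2,u_3$; those reply with ports to $u_4,\dots,u_7$; and so on. This doubling lets $v$ discover all $|C|$ children in $O(\log|C|)$ time using $O(|C|)$ messages, with $O(\log n)$ advice at every node. Summing over the BFS tree gives $O(n)$ messages and $O(D\log n)$ time directly, with no amortization argument needed. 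If you replace your single next-sibling pointer by a pair of pointers (and drop the centroid layer entirely), you recover exactly this construction.
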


For Part~(A), we show how to reduce the advice length by (almost) a $\sqrt{n}$-factor compared to Corollary~\ref{cor:basic}, albeit at the cost of increasing the message complexity. 
Then, in Part~(B), we develop a technique to distribute the information necessary for recovering the incident BFS tree edges of a node among its neighbors, which we call \emph{child-encoding scheme}.
This allows us to obtain optimal messages and nearly-optimal time with just $O(\log n)$ bits of advice per node.

While the diameter presents an insurmountable lower bound on the time complexity when considering all possible sets of awake nodes, the awake distance (defined in Sec.~\ref{sec:prelim}) gives a more fine-grained classification of the time in terms of the distance that nodes have to the set of awake nodes.%
\onlyLong{
In Section~\ref{sec:spanner}, we present a tradeoff between time (in terms of the awake distance $\arad$), messages, and length of advice: 
}
\onlyShort{
In the full paper, we present a tradeoff between time (in terms of the awake distance $\arad$), messages, and length of advice: 
}

\newcommand{\thmAdviceSpanner}{
There exists a deterministic advising scheme with a maximum advice length of $O(n^{1/k}\log^2 n)$ bits that allows solving the wake-up problem in time $O\lt(k\cdot\arad\log n\rt)$ with a message complexity of $O\lt( k\,n^{1+1/k}\log n  \rt)$ in the asynchronous $\kt_0$ $\congest$ model.
}
\begin{theorem} \label{thm:advice_spanner}
  \thmAdviceSpanner
\end{theorem}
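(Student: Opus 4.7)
The plan is to let the oracle compute a multiplicative $(2k-1)$-spanner $H$ of $G$ with $O(n^{1+1/k})$ edges (e.g., via the greedy construction of Alth\"ofer et al.\ or via Baswana--Sen), and to have the algorithm perform a flooding protocol restricted to the edges of $H$. Since $H$ has stretch $2k-1$, any node $w$ at hop-distance at most $\arad$ in $G$ from some initially-awake node $u$ is joined to $u$ by an $H$-path of at most $(2k-1)\arad$ edges, so a flood over $H$ reaches $w$ within $O(k\arad)$ spanner-hops, while the total number of spanner-edge transmissions is $O(n^{1+1/k})$.

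The first concrete step is the advising scheme. The oracle computes a spanner $H$ balanced so that each node has $O(n^{1/k}\log n)$ incident $H$-edges, which can be achieved, e.g., by taking the union of $O(\log n)$ independent Baswana--Sen samples or by degree-capping a greedy construction. For each node $v$, the oracle prepares a list of the port numbers of $v$'s $H$-incident edges, storing $O(n^{1/k}\log^2 n)$ bits locally at $v$; for any excess edges, their encoding is distributed to $v$'s $G$-neighbors using an extension of the child-encoding scheme of Theorem~\ref{thm:scheme_cen}(B), so that $v$ can retrieve each such port number in $O(\log n)$ time and $O(\log n)$ messages via probes to its neighbors. Upon waking up (either by the adversary or through a received wake-up message), a node reads its advice, identifies its outgoing $H$-edges, and sends a wake-up message along each such edge that it has not yet used. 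Correctness follows because the wake-up front spreads along all of $H$ and $H$ spans $G$; the time bound of $O(k\,\arad\log n)$ follows from the $(2k-1)$-stretch of $H$ together with the $O(\log n)$ child-encoding overhead per spanner-hop, and the message bound of $O(k\,n^{1+1/k}\log n)$ follows because at most $O(n^{1+1/k})$ wake-up transmissions fire across $H$, each potentially triggering $O(k\log n)$ auxiliary retrieval messages.

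The main obstacle is adapting the child-encoding scheme, which in Theorem~\ref{thm:scheme_cen}(B) was designed for a BFS tree in which each node has at most one parent, to a general spanner in which a single node may be incident to many more edges than fit into its own advice budget. Distributing a node's spanner-port list across its $G$-neighbors in a way that (i) keeps the per-node advice at $O(n^{1/k}\log^2 n)$ bits while respecting the load placed on individual neighbors, (ii) allows the owner to retrieve each port in $O(\log n)$ time and $O(\log n)$ messages, and (iii) avoids double-counting of retrieval messages across concurrent wake-ups initiated from different parts of the spanner, is the step that requires the most care and where I expect the bulk of the technical work to lie.
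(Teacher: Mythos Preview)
Your high-level plan---compute a $(2k-1)$-spanner, flood along it, and use child-encoding to keep the per-node advice small---is exactly the paper's approach. But there is a genuine gap in the step you yourself flag as the obstacle.

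Your claim that one can build a $(2k-1)$-spanner in which \emph{every} node has $O(n^{1/k}\log n)$ incident spanner edges is false: on a star with center $c$, any connected spanning subgraph must keep all $n-1$ edges, so $c$ has spanner-degree $n-1$. Taking unions of Baswana--Sen samples or ``degree-capping'' the greedy spanner does not fix this; the former does not reduce degrees, and the latter destroys the stretch guarantee. Your fallback of distributing ``excess'' ports among $G$-neighbors is not fleshed out, and the load-balancing worry you raise is real: a neighbor might be asked to store pieces for many different high-degree nodes.

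The idea you are missing is that the Baswana--Sen construction already gives the spanner edges a natural \emph{orientation} that makes child-encoding work directly. Intra-cluster edges form rooted trees, so each node has exactly one parent there; inter-cluster edges are added by a node only in the iteration in which it is defeated, and the derandomized version (Censor-Hillel et al.) guarantees at most $O(n^{1/k}\log n)$ such \emph{outgoing} edges per node. Thus every node is a ``child'' in at most $O(n^{1/k}\log n)$ instances of $\cen$ (costing $O(\log n)$ bits each, hence $O(n^{1/k}\log^2 n)$ total), while it may be a ``parent'' in arbitrarily many instances---but being a parent in $\cen$ costs only $O(\log n)$ bits regardless of the number of children. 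So the trick is not to bound the spanner degree, but to orient the spanner so that the \emph{out-degree} is bounded, and to observe that $\cen$ is asymmetric in exactly the right way. Once you use this orientation, the time, message, and advice bounds fall out as in your outline.
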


We obtain Theorem~\ref{thm:advice_spanner} by using the aforementioned child-encoding scheme to encode the edges of a suitable graph spanner as advice. 
An immediate implication of this result is that we can match the optimal bounds on messages and time (in terms of the awake distance) up to polylogarithmic factors.  

\newcommand{\corAdviceSpanner}{
There is a deterministic advising scheme for the wake-up problem with a maximum advice length of $O(\log^2 n)$ bits in $O(\arad\log^2 n)$ time using $O\lt( n\log^2 n \rt)$ messages in the asynchronous $\kt_0$ $\congest$ model.
}
\begin{corollary} \label{cor:advice_spanner}
  \corAdviceSpanner
\end{corollary}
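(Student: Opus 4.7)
The plan is to derive the corollary directly from Theorem~\ref{thm:advice_spanner} by a single parameter choice, namely $k = \log n$, and verify that each of the three complexity bounds collapses to the desired form. The key observation is that the paper uses natural logarithms throughout, so $n^{1/k} = n^{1/\ln n} = e = O(1)$; hence any factor of $n^{1/k}$ appearing in the bounds of Theorem~\ref{thm:advice_spanner} becomes a constant under this choice, which is what drives the maximum advice length down to $O(\log^2 n)$.

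First, I would substitute $k = \log n$ into the advice-length bound $O(n^{1/k}\log^2 n)$ of Theorem~\ref{thm:advice_spanner}, which immediately yields the stated $O(\log^2 n)$ bits per node. Next, the time bound $O(k \cdot \arad \cdot \log n)$ turns into $O(\arad \log^2 n)$, matching the time complexity claimed in the corollary. Finally, for the message complexity, I would plug $k = \log n$ into $O(k \cdot n^{1+1/k} \cdot \log n)$ and use $n^{1+1/k} = n \cdot e = \Theta(n)$ to obtain the stated message bound (up to the conventional hiding of lower-order logarithmic factors in the $O(n\log n)$ notation, as also reflected in Table~\ref{fig:table}).

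Since Theorem~\ref{thm:advice_spanner} does all the heavy lifting, no new algorithmic or combinatorial argument is needed. The only real step is to recognize that the graph-spanner tradeoff provided by Theorem~\ref{thm:advice_spanner} admits an ``endpoint'' at $k = \log n$, where the $n^{1/k}$ sparsity factor of the underlying $(2k-1)$-spanner becomes a constant, shaving the $\sqrt{n}$-style advice of intermediate values of $k$ down to polylogarithmic. Consequently, there is no genuine obstacle in the proof; the corollary is effectively the statement that, by choosing the most aggressive value of $k$ for which $n^{1/k} = O(1)$, the advising scheme from Theorem~\ref{thm:advice_spanner} achieves near-optimal message complexity and time (measured against $\arad$) with only polylogarithmic advice.
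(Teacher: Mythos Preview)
Your proposal is correct and is exactly the paper's approach: the paper derives the corollary in one line by instantiating Theorem~\ref{thm:advice_spanner} with $k=\Theta(\log n)$, so that $n^{1/k}=O(1)$ and the three bounds collapse as you describe. Your observation about the extra logarithmic factor in the message bound is also apt; the paper itself is inconsistent here (the table lists $O(n\log^2 n)$ for the corollary while the theorem text carries an additional $\log n$ factor compared to the table), but this does not affect the argument.
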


Note that a consequence of Theorem~\ref{thm:lb_kt0} is that the length of advice in Corollary~\ref{cor:advice_spanner} is at most a logarithmic factor larger than the minimum possible amount.

\section{Lower Bounds} \label{sec:lb}

We will prove our lower bounds by assuming the synchronous $\LOCAL$ model, where messages may be of arbitrary size and where the computation is structured into \emph{synchronous rounds}~\cite{peleg}. %
We first introduce a lower bound graph construction for the $\kt_0$ assumption that we will use for proving Theorem~\ref{thm:lb_kt0} in Section~\ref{sec:lb_kt0}.
Subsequently, in Section~\ref{sec:lb_kt1}, we describe the necessary changes to make this class of graphs suitable for proving Theorem~\ref{thm:lb}, i.e., the $\kt_1$ lower bound.

\paragraph{The Class of Lower Bound Graphs $\mathcal{G}$.}
We consider a graph of $3n$ nodes, whose vertex set is the union of three sets $U=\set{u_1,\dots,u_n}$, $V=\set{v_1,\dots,v_n}$, and $W=\set{w_1,\dots,w_n}$, and we call the nodes in $V$ \emph{center nodes}.
The edges  include a perfect matching $\set{v_1,w_1},\dots,\set{v_n,w_n}$ between $V$ and $W$, and these are the only edges incident to nodes in $W$.
We say that $w_i$ is a \emph{crucial neighbor} of $v_i$.
The edges between $U$ and $V$ correspond to a complete bipartite graph, which guarantees that all nodes in $V$ have a degree of $n+1$.
We assign the node IDs according to some arbitrary fixed permutation of the integers in $[3n]$.\footnote{We follow the convention that $[n]$ denotes the set of integers $\set{1,\dots,n}$.}
Initially, we assume that all center nodes (i.e., nodes in $V$) are awake, whereas all other nodes are asleep. 

To sample a concrete graph $G \sim \mathcal{G}$,  we choose the port assignments uniformly at random from all valid port mappings.
We emphasize that this guarantees that the port mappings of the nodes are mutually independent.

\paragraph{The Needles-in-Haystack ($\nih$) Problem.}
Our lower bound proofs in Sections~\ref{sec:lb_kt0} and \ref{sec:lb_kt1} rest on leveraging the difficulty in discovering the edge between each awake center node $v_i \in V$ and its corresponding crucial neighbor $w_i \in W$, who can only be woken up by a direct message from $v_i$. 
To make this more concrete, we say that an algorithm solves the \emph{needles-in-haystack} ($\nih$) problem if, for each node $v_i$, it holds that
\begin{itemize}  
\item $v_i$ outputs the port connecting to $w_i$, when considering $\kt_0$, and
\item $v_i$ outputs the ID of node $w_i$ when assuming $\kt_1$.
\end{itemize}

We now show that an algorithm for the wake-up problem gives rise to an algorithm for the $\nih$ problem at negligible additional cost:

\begin{lemma} \label{lem:lb_simul}
Let $\mathcal{A}$ be an algorithm that solves the wake-up problem with probability at least $1-\epsilon$, in $T$ time, with a worst-case advice length $\lambda$ per node, and by using $M$ messages in expectation.
Then, there exists an algorithm $\mathcal{B}$ that solves the $\nih$ problem with probability at least $1-\epsilon$, with a time complexity of $T+1$, an expected message complexity of at most $M+n$, and an advice length of $\lambda$ bits.
This holds under the $\kt_0$ and also under the $\kt_1$ assumption.
Moreover, $\mathcal{B}$ is deterministic if $\mathcal{A}$ is deterministic.
\end{lemma}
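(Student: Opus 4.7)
The plan is to exhibit a direct reduction that runs $\mathcal{A}$ and then uses a single extra round of $n$ acknowledgment messages so that each center node learns the identity of its crucial neighbor. The key structural observation about the graphs $G \in \mathcal{G}$ is that every $w_i \in W$ has exactly one incident edge, namely $\{v_i,w_i\}$. Consequently, \emph{any} message that awakens $w_i$ must have been sent by $v_i$ across this edge. Whenever this happens, the FIFO channel gives $w_i$ full knowledge of the port (on its own side, which is trivially $1$) and, in the $\kt_1$ case, of $v_i$'s ID as well. Moreover, per the model assumption that ``the other endpoint of an edge also learns the port connection, if one of its two vertices sends a message over its respective port,'' node $v_i$ can learn its local port to $w_i$ whenever $w_i$ replies.

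Given this, define $\mathcal{B}$ as follows. Each node follows the protocol of $\mathcal{A}$, using the same advice string that the oracle of $\mathcal{A}$ assigns to it; thus the advice length is exactly $\lambda$. The nodes in $W$ additionally execute the following one-line subroutine: upon first being woken up during the simulation of $\mathcal{A}$, node $w_i$ sends a single message $\msg{\text{ack}}$ back along its unique incident port (port $1$), tagged with its own ID. In the $\kt_0$ case, when $v_i$ receives this ack, it records the port on which the message arrived and outputs that port; in the $\kt_1$ case, it records and outputs the ID contained in the ack. All other behavior of $\mathcal{A}$ is preserved, and $\mathcal{B}$ is deterministic whenever $\mathcal{A}$ is.

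For the analysis: with probability at least $1-\epsilon$, algorithm $\mathcal{A}$ wakes up every node, and in particular every $w_i$, by time $T$. On this event, each of the $n$ acks is generated and delivered within one additional time unit (synchronous or asynchronous, since exactly one message traverses the single edge $\{v_i,w_i\}$). Hence $\mathcal{B}$ terminates within $T+1$ time units and every $v_i$ correctly outputs either the port to $w_i$ (under $\kt_0$) or the ID of $w_i$ (under $\kt_1$), solving $\nih$ with probability at least $1-\epsilon$. The expected message count is at most $M$ from the simulation of $\mathcal{A}$ plus exactly $n$ acks, giving $M+n$.

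The main thing to be careful about is the logical implication ``$w_i$ awake $\Rightarrow$ $v_i$ sent to $w_i$,'' which relies on the structural fact that $w_i$ has degree $1$ and no advice can conjure up an additional incoming edge: this is what forces the wake-up message to have been delivered across the target edge and therefore makes the ack informative. No probabilistic or information-theoretic tools are needed beyond conditioning on the correctness event of $\mathcal{A}$, so the rest of the argument reduces to bookkeeping of time, messages, advice, and the preservation of determinism.
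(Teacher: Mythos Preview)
Your proof is correct and follows essentially the same approach as the paper: both arguments have each degree-$1$ node $w_i$ send a single distinguished acknowledgment back over its unique edge upon first receiving a message, so that $v_i$ learns the relevant port (under $\kt_0$) or ID (under $\kt_1$), at a cost of one extra time unit and $n$ extra messages. Your write-up is slightly more explicit about why the ack is informative (via the port-learning convention of the model) and about how nodes recognize membership in $W$, but the underlying reduction is identical.
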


\begin{proof}
We outline how to obtain $\mathcal{B}$ from $\mathcal{A}$.
Since only the nodes in $W$ have degree $1$ and all other nodes have strictly larger degrees, we can instruct each $w_i \in W$ to send a special response message (distinct from all messages produced by $\mathcal{A}$) upon receiving the first message from $v_i$, thus informing $v_i$ that it has successfully done its part. 
This increases the expected message complexity by $n$ and the time complexity by one round.
According to the properties of $\mathcal{A}$, all nodes in $W$ will be woken up with probability at least $1 - \epsilon$ in $T$ rounds.
\end{proof}

Equipped with Lemma~\ref{lem:lb_simul}, we focus on showing a lower bound for an $\nih$ algorithm $\mathcal{B}$, where each node $v_i \in V$ eventually outputs a value defined as follows: 
If $\mathcal{B}$ is correct, then the output of $v_i$ is $\port_{v_i}(w_i)$ for $\kt_0$, and when considering $\kt_1$, $v_i$'s output is $\id(w_i)$; in the case where $\mathcal{B}$ errs, the output of $v_i$ may be arbitrary.
Consequently, for algorithm $\mathcal{B}$, we are not concerned with whether all nodes are woken up successfully, as a node $v_i$ does not necessarily need to send a message across the edge identified by its output for producing a correct output.

\subsection{A Lower Bound on the Message Complexity for $\kt_0$ Algorithms with Advice} \label{sec:lb_kt0}

\begin{reptheorem}{thm:lb_kt0}
\thmLbKTZero
\end{reptheorem}

In the remainder of this section, we prove Theorem~\ref{thm:lb_kt0}. 
Due to Lemma~\ref{lem:lb_simul}, we consider an $\epsilon$-error algorithm $\mathcal{B}$ for NIH problem with an expected message complexity of at most $n^{2}/(2^{\beta+4}\log_2n)+O(n)$, and our goal is to show the claimed lower bound of $\Omega(\beta)$ on the average length of advice.
Our proof relies on several notions from information theory, which we briefly recall in Appendix~\ref{app:tools}.

\paragraph{Random Variables and Notation.}
To precisely capture the amount of knowledge obtained by the nodes over the course of the execution, we need the following definitions. 
We use $X \perp Y$ to say that random variables $X$ and $Y$ are independent. 
For a node $u \in U \cup V \cup W$, we use $Y_{u}$ to denote the advice string of $u$, and recall that $Y_{u}$ is part of $u$'s initial knowledge.
Moreover, we use $\mathbf{Y}$ to denote the concatenation of all the advice strings of the nodes.
We define $\Pi_i$ to be the transcript of the messages received by $v_i$ throughout the execution, excluding any messages sent by $w_i$, and we use $B$ to denote the shared random bit string that is known to all nodes.
Moreover, for a random variable $X$, we use the shorthand $\Pr\lt[ x \rt]$ to denote $\Pr\lt[ X=x \rt]$. 
We use $X_{< i}$ as an abbreviation for $(X_1, X_2, \ldots, X_{i-1})$, and define $X_{\le i}$ analogously. 
We will make use of the event $\corr$, which happens if the algorithm correctly solves the $\nih$ problem, and we also employ the corresponding indicator random variable $\mathbf{1}_{\corr}$.

\paragraph{Event $\good_i$ and Set $S$.}
For each node $v_i \in V$, we define the event $\good_i$ to occur if $v_i$ sends or receives a message over at most $\frac{n}{2^{\beta}}$ of its ports, i.e., the algorithm uses a ``small'' number of $v_i$'s ports. 
Let $S \subseteq V$ be the set of nodes such that, for every $v_i \in S$, we have 
$\Pr\lt[ \good_i \rt]\ge 1 - \frac{1}{2\log_2n}$, where the probability is taken over the random bits of the nodes as well as the random port assignment.

In the next lemma, we show that $S$ is large, i.e., a constant fraction of the nodes in $V$ are likely to send and receive messages over a small number of their ports. 
The proof is based on an averaging argument, which we relegate to Appendix~\ref{app:good_kt0}.
\newcommand{\lemGoodKTZero}{
It holds that $|S| \ge \frac{n}{2}.$
}
\begin{lemma} \label{lem:good_kt0}
  \lemGoodKTZero
\end{lemma}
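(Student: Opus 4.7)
The plan is to prove Lemma~\ref{lem:good_kt0} by a standard averaging (Markov-style) argument that exploits the bound on the expected message complexity of $\mathcal{B}$, which by Lemma~\ref{lem:lb_simul} is at most $M \le \frac{n^{2}}{2^{\beta+4}\log_2 n} + n$. The key observation is that each message contributes to the ``port usage'' of exactly two endpoints, so the total expected number of used ports summed over all nodes is at most $2M$, and in particular the same bound applies when restricting the sum to $V$.

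First, for each $v_i \in V$, I would introduce the random variable $X_i$ counting the number of ports of $v_i$ across which $v_i$ sends or receives at least one message during the execution. Since each message that touches $v_i$ uses exactly one of its ports (as sender or receiver), we have $X_i \le \mathrm{msg}_{\mathrm{sent}}(v_i)+\mathrm{msg}_{\mathrm{recv}}(v_i)$, and summing over all nodes of the graph gives
\[
\sum_{v_i \in V} \EE[X_i] \;\le\; 2\,\EE[\#\text{messages}] \;\le\; \frac{n^{2}}{2^{\beta+3}\log_2 n} + 2n.
\]
The expectation here is taken over the nodes' random bits and the uniformly random port assignment, exactly matching the probability space used to define $\good_i$.

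Next, I would translate the defining property of $V\setminus S$ into a lower bound on $\EE[X_i]$. By definition, for every $v_i \notin S$ we have $\Pr[\neg\good_i] = \Pr[X_i > n/2^{\beta}] \ge \frac{1}{2\log_2 n}$. A one-line Markov-type inequality then yields
\[
\EE[X_i] \;\ge\; \frac{n}{2^{\beta}} \cdot \Pr\!\left[X_i > \tfrac{n}{2^{\beta}}\right] \;\ge\; \frac{n}{2^{\beta+1}\log_2 n}.
\]

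Finally, I would derive a contradiction by assuming $|S| < n/2$, so that $|V \setminus S| > n/2$. Combining the last two displays,
\[
\frac{n^{2}}{2^{\beta+3}\log_2 n} + 2n \;\ge\; \sum_{v_i \in V\setminus S} \EE[X_i] \;>\; \frac{n}{2}\cdot \frac{n}{2^{\beta+1}\log_2 n} \;=\; \frac{n^{2}}{2^{\beta+2}\log_2 n},
\]
which is false for all sufficiently large $n$ (using $\beta \le \log_2 n$, the additive $2n$ is absorbed). Hence $|S| \ge n/2$, as claimed. There is no real technical obstacle here; the only thing to be careful about is to double-count messages correctly at both endpoints when bounding $\sum_i \EE[X_i]$ by $2M$, and to phrase $\good_i$ consistently with the port-usage variable $X_i$ so that the Markov step is valid.
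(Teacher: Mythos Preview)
Your proposal is correct and follows essentially the same averaging/Markov argument as the paper: bound the total expected port usage (or message count) over $V$ by twice the expected message complexity, then lower-bound the contribution of each $v_i \notin S$ via $\EE[X_i] \ge \frac{n}{2^{\beta}}\Pr[\neg\good_i]$, and derive a contradiction from $|V\setminus S| > n/2$. The only cosmetic difference is that the paper works with the message-count variable $M_{v_i}$ rather than the port-count variable $X_i$, but since $X_i \le M_{v_i}$ and $\neg\good_i$ is defined precisely as $X_i > n/2^{\beta}$, your version is if anything slightly cleaner.
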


We are now ready to start deriving a lower bound on the total length of the advice $\mathbf{Y}$ assigned to the nodes. 
Let $X_i$ be the port number at $v_i$ connecting to $w_i$, i.e., $X_i = \port_{v_i}(w_i)$. 
By Lemma~\ref{lem:inf_props}(a) and (b), we have
\begin{align}
|\mathbf{Y}|
&\ge
\HH\lt[ \mathbf{Y} \ \md|\ B \rt] \notag\\ 
&\ge
\II\lt[ X_{\le n} : \mathbf{Y} \ \md|\ B \rt] \notag\\ 
\ann{by Lemma~\ref{lem:inf_props}(c)}
&=
	 \sum_{i=1}^{n} \II\lt[ X_i :  \mathbf{Y} \ \md|\ B, X_{< i} \rt]. \notag
\end{align}
Observing that random variable $X_i$ is independent of $X_{<i}$, as the port mappings are chosen independently for each node, allows us to apply Lemma~\ref{lem:inf_props}(g) to the terms in the sum on the right-hand side, yielding 
\begin{align}
|\mathbf{Y}|
&\ge   
	 \sum_{i=1}^{n} \II\lt[ X_i :  \mathbf{Y} \ \md|\ B \rt] \notag\\
\ann{by \eqref{eq:mutual_cond}}
&= 
	 \sum_{i=1}^{n} 
   \lt( 
   \HH\lt[ X_i \ \md|\ B \rt] 
   -
   \HH\lt[ X_i  \ \md|\ B,\mathbf{Y} \rt] 
   \rt)\notag\\ 
&= 
	 \sum_{i=1}^{n} 
   \lt( 
   \log_2(n+1) 
   -
   \HH\lt[ X_i  \ \md|\ B,\mathbf{Y} \rt] 
   \rt),
\end{align}
where the second last equality follows because $X_i$ is equal to the port connecting $v_i$ to $w_i$, which is independent of the shared randomness $B$, and thus $X_i$ is uniform among the $\deg(v_i)=n+1$ neighbors of $v_i$ (see Lemma~\ref{lem:inf_props}(f)).
Restricting the summation to nodes in set $S$ yields
\begin{align}
\mathbf{Y} \ge 
	 \sum_{v_i\in S}
   \lt( 
   \log_2(n)
   -
   \HH\lt[ X_i  \ \md|\ B,\mathbf{Y} \rt] 
   \rt).
   \label{eq:entropy_term}
\end{align}

The next lemma captures the core technical argument of the $\kt_0$ lower bound, namely, the difficulty of reducing the number of possibilities for $X_i$ without sending many messages, even if nodes are equipped with advice. 

\begin{lemma} \label{lem:lb_contrad}
For all $v_i \in S$, it holds that
$\displaystyle
\HH\lt[ X_i \ \md|\ B, \mathbf{Y} \rt] \le \log_2\lt( \frac{n}{2^{\beta-1}} \rt) + 1 + o(1). 
$
\end{lemma}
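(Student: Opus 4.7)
The plan is to isolate a ``good event'' on which both the algorithm is correct and $v_i$ uses only a few ports, and then convert the structural containment $X_i \in R_i$ into an entropy bound. First I would define $E := \good_i \cap \corr$. Since $v_i \in S$ and $\epsilon < \tfrac{1}{2\log_2 n}$, a union bound gives $\Pr[\neg E] \le \tfrac{1}{\log_2 n}$. The crucial structural observation is that on $E$ the port $X_i$ to $v_i$'s crucial neighbor $w_i$ must lie in the set $R_i$ of ports used by $v_i$: by Lemma~\ref{lem:lb_simul}, $\mathcal{B}$ is obtained from a correct wake-up algorithm, and in such an execution $v_i$ must have sent a message across port $X_i$ in order to wake $w_i$. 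Combined with $\good_i$ this gives $|R_i| \le n/2^\beta$ on $E$.

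Next I would peel off the error event via a standard Fano-type decomposition. Starting from
\[
\HH[X_i \mid B, \mathbf{Y}] \le \HH[\mathbf{1}_E] + \HH[X_i \mid B, \mathbf{Y}, \mathbf{1}_E]
\]
and using $\HH[\mathbf{1}_E] \le h(\Pr[\neg E]) = o(1)$, $\HH[X_i \mid B, \mathbf{Y}, \neg E] \le \log_2(n+1)$, and $\Pr[\neg E]\log_2(n+1) \le 1 + o(1)$, the lemma reduces to establishing
\[
\HH[X_i \mid B, \mathbf{Y}, E] \le \log_2(n/2^\beta) + o(1).
\]

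The hard part is this last inequality. My approach is the chain rule $\HH[X_i \mid B, \mathbf{Y}, E] = \HH[X_i \mid B, \mathbf{Y}, R_i, E] + \II[X_i : R_i \mid B, \mathbf{Y}, E]$; the first term is immediately at most $\log_2 |R_i| \le \log_2(n/2^\beta)$ because $X_i \in R_i$ on $E$, so the entire task becomes bounding the mutual-information term. The difficulty is that $R_i$ is not a function of $(B, \mathbf{Y})$ alone: it is shaped by $v_i$'s adaptive behavior and becomes correlated with $X_i$ through the port mapping at $v_i$. I plan to argue round by round, observing that the round-$1$ send set is determined by $(B, \mathbf{Y})$ and therefore contributes no mutual information with $X_i$, while in subsequent rounds the ports newly added to $R_i$ are either already-committed sends or arrivals from neighbors whose identities $v_i$ learns through message content. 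A symmetry/coupling argument over the posterior of $v_i$'s port mapping restricted to the still-unrevealed ports should show that each such round contributes only a small increment to $\II[X_i : R_i \mid B, \mathbf{Y}, E]$; making this precise in the $\LOCAL$ model with adaptive, unbounded-message behavior is the main technical hurdle.
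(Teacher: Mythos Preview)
Your Fano-style peel-off of the error event $E = \corr \cap \good_i$ matches the paper's opening moves, and your structural observation that $X_i \in R_i$ on $E$ is valid for the specific $\mathcal{B}$ built via Lemma~\ref{lem:lb_simul}. The gap is in the core step.

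Your chain-rule decomposition
\[
\HH[X_i \mid B, \mathbf{Y}, E] \;=\; \HH[X_i \mid B, \mathbf{Y}, R_i, E] \;+\; \II[X_i : R_i \mid B, \mathbf{Y}, E]
\]
is circular in the hard regime. When $\beta$ is close to $\log_2 n$ so that $|R_i|=O(1)$, the first term is essentially zero (since $X_i \in R_i$), which forces the mutual-information term to absorb \emph{all} of $\HH[X_i \mid B, \mathbf{Y}, E]$. Proving $\II[X_i : R_i \mid B, \mathbf{Y}, E] = o(1)$ is then exactly as hard as the lemma itself, and your round-by-round sketch does not break this circularity. In fact, the very containment $X_i \in R_i$ you rely on is precisely what makes $R_i$ carry substantial information about $X_i$; there is no reason to expect that mutual information to be $o(1)$.

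The paper avoids this by working not with the used-port set $R_i$ but with the \emph{transcript} $\Pi_i$ of messages received by $v_i$ (excluding anything from $w_i$), and arguing by contradiction on \emph{support size} rather than entropy. If $\HH[X_i \mid B, \mathbf{Y}, E] > \log_2(n/2^{\beta-1})$, then some $(b,y)$ has $\supp(X_i \mid b,y,E) > n/2^{\beta-1}$. The key $\kt_0$ observation---port maps at distinct nodes are independent, so the content of any message reaching $v_i$ carries no information about $\stdport_{v_i}$ restricted to the still-unused ports---means that revealing the transcript can shrink this support by at most one per used port, i.e., by at most $n/2^\beta$. Hence $\supp(X_i \mid \pi_i,b,y,E) > n/2^\beta \ge 1$. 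But $v_i$'s output is a deterministic function of $(\pi_i,b,y_{v_i})$ and, on $\corr$, equals $X_i$; so the support must be exactly $1$, a contradiction. This ``each used port eliminates at most one candidate'' combinatorial step is the missing idea; your mutual-information route could only be completed by rediscovering it, and even then the $R_i$ decomposition is the wrong container for it.
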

\begin{proof}
It holds that
\begin{align}
\HH\lt[ X_i \ \md|\ B, \mathbf{Y} \rt] 
&\le
	\HH\lt[ X_i, \mathbf{1}_{\corr\wedge\good_i} \ \md|\ B, \mathbf{Y} \rt] \notag\\ 
\ann{by Lemma~\ref{lem:inf_props}(e)}
&= 
	\HH\lt[ X_i \ \md|\ B, \mathbf{Y}, \mathbf{1}_{\corr\wedge\good_i} \rt] 
  +
	\HH\lt[ \mathbf{1}_{\corr\wedge\good_i}\ \md|\ B, \mathbf{Y} \rt] 
  \notag\\ 
&\le 
	\HH\lt[ X_i \ \md|\ B, \mathbf{Y}, \mathbf{1}_{\corr\wedge\good_i} \rt] 
  +
	1. 
  \notag\\ 
\end{align}  
Moreover, 
\begin{align}
\HH\lt[ X_i \ \md|\ B, \mathbf{Y}, \mathbf{1}_{\corr\wedge\good_i} \rt]
&\le
 \HH\lt[ X_i \ \md|\ B, \mathbf{Y}, {\corr\wedge\good_i} \rt] 
 +
 \Pr\lt[  \neg(\corr\wedge\good_i) \rt] 
 \HH\lt[ X_i \ \md|\ B, \mathbf{Y}, \neg(\corr\wedge\good_i) \rt] \notag\\ 
&\le
 \HH\lt[ X_i \ \md|\ B, \mathbf{Y}, {\corr\wedge\good_i} \rt] 
 +
 \frac{1}{\log_2n} 
 \HH\lt[ X_i \ \md|\ B, \mathbf{Y}, \neg(\corr\wedge\good_i) \rt],\notag\\ 
\intertext{
since $\Pr\lt[  \neg(\corr\wedge\good_i) \rt] \le \Pr\lt[ \neg\corr \rt] + \Pr\lt[ \neg  \good_i \rt] \le \frac{2}{2\log_2n}$.
  Moreover, plugging the trivial upper bound $\HH\lt[ X_i \ \md|\ B, \mathbf{Y}, \neg(\corr\wedge\good_i) \rt] \le \log_2(n+1) \le \log_2(n) + 1$ into the right-hand side yields
}
\HH\lt[ X_i \ \md|\ B, \mathbf{Y}, \mathbf{1}_{\corr\wedge\good_i} \rt]
&\le
 \HH\lt[ X_i \ \md|\ B, \mathbf{Y}, {\corr\wedge\good_i} \rt] 
 + 1 + \frac{1}{\log_2n}.\notag\\ 
\end{align}  
  
Assume towards a contradiction that 
\begin{align}
\HH\lt[ X_i \ \md|\ B, \mathbf{Y},\corr\wedge\good_i \rt] > \log_2\lt( \frac{n}{2^{\beta-1}} \rt). \label{eq:ent_cont} %
\end{align}
Using the definition of the conditional entropy (see \eqref{eq:conditional_entropy}), we have 
\[
\HH\lt[ X_i \ \md|\ B, \mathbf{Y},\corr\wedge\good_i \rt] \!=\! \sum_{b,y} \Pr\lt[ b,y\ \md|\ \corr\wedge\good_i \rt] \HH\lt[ X_i \ \md|\ b, y,\corr\wedge\good_i \rt],
\] 
and \eqref{eq:ent_cont} implies that
\begin{align}
\sum_{b,y} \Pr\lt[ b,y\ \md|\ \corr\wedge\good_i \rt] \HH\lt[ X_i \ \md|\ b, y,\corr\wedge\good_i \rt] &> \log_2\lt( \frac{n}{2^{\beta-1}} \rt). \label{eq:ent_lb1}
\end{align}
Let $\supp(X)$ denote the size of the support of a random variable $X$, which is the number of values $x$ such that $\Pr\lt[ X \!=\!  x \rt]>0$.
Recalling Lemma~\ref{lem:inf_props}(f), we know that 
$\HH\lt[ X_i \ \md|\ b, y,\corr\wedge\good_i \rt] \le \log_2\lt(\supp(X_i \ \md|\ b, y,\corr\wedge\good_i)\rt)$.
Thus,
\begin{align}
 \EE_{B,\mathbf{Y}}\lt[ \log_2\lt(\supp(X_i \ \md|\ b, y,\corr\wedge\good_i)\rt)  \rt]
 &=
 \sum_{b,y}  \Pr\lt[ b,y\ \md|\ \corr\wedge\good_i \rt]\cdot \log_2\lt(\supp(X_i \ \md|\ b, y,\corr\wedge\good_i)\rt) \notag\\ 
 \ann{by \eqref{eq:ent_lb1}}
 &> \log_2\lt( \frac{n}{2^{\beta-1}} \rt).\notag
\end{align}
It follows that there are events $\set{B\!=\! b,\mathbf{Y}\!=\! y}$ that occur with nonzero probability, conditioned on $\corr \wedge \good_i$, such that $\log_2\lt(\supp(X_i \ \md|\ b, y,\corr\wedge\good_i)\rt) > \log_2 \lt( \frac{n}{2^{\beta-1}} \rt)$. 
Equivalently,
\begin{align}
\supp\lt(X_i \ \md|\ b, y,\corr\wedge\good_i\rt) >  \frac{n}{2^{\beta-1}}, \label{eq:supp_lb}
\end{align}
i.e., there are more than $\frac{n}{2^{\beta-1}}$ (not necessarily uniform) choices for $X_i$ that cannot be ruled out.
Since we condition on $\good_i$, we know that at most $\frac{n}{2^{\beta}}$ of $v_i$'s incident ports are used throughout the execution for sending or receiving messages.
Let $p_{v_i}$ denote the set of ports at $v_i$ that are not used throughout the execution, and observe that the function $\stdport_{v_i}$ restricted to $p_{v_i}$ is completely independent of how the ports are connected at other nodes 
and the shared randomness.
See Figure~\ref{fig:lb_kt0} for a concrete example.
\begin{figure}[t]
  \centering
\includegraphics[scale=1.25]{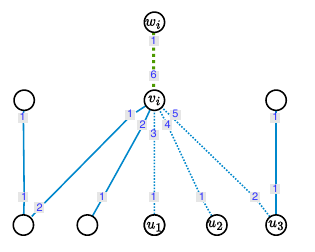}
\caption{
  In this example, we assume that the concatenation of all advice strings given to the nodes, i.e., $\mathbf{Y}$, reveals the port mappings for ports $1$ and $2$ at $v_i$, and we also assume that no messages are sent over its remaining ports, i.e., $p_{v_i}=\set{3,4,5,6}$.
  While $v_i$ may know that the image of the map $\port(p_{v_i})$ is the set $\set{u_1,u_2,u_3,w_i}$, it does not know the individual mappings of the ports in $p_{v_i}$, and thus there are still $(4!)$ possible mappings. (It could be that, conditioned on $\mathbf{Y}$, some of these are more likely than others.)
  Note that neighbor $u_1$ may know that its port $1$ is indeed connected to $v_i$---perhaps, because it has already explored all other ports. 
  However, since there is no dependency between $u_i$'s and $v_i$'s port  numbers, this does not reveal any information about how the individual ports in $p_{v_i}$ are connected to the nodes in $\set{u_1,u_2,u_3,w_i}$. 
}
\label{fig:lb_kt0}
\end{figure}
While keeping \eqref{eq:supp_lb} in mind, we now turn our attention to how a node $v_i \in V$ computes its output for the $\nih$ problem.
Any message received by $v_i$ from nodes in $U$ may depend on $B$, $\mathbf{Y}$, and possibly on how other ports in the network are connected with the notable exception of $p_{v_i}$.
Suppose that $v_i$ receives a message over some previously unused port, and assume that $v_i$ already knows $\mathbf{Y}$. 
Then, it follows that the actual content of the message cannot reveal any new information about $X_i$ (conditioned on $\mathbf{Y}$!), apart from reducing the number of choices for $X_i$ by $1$. 
Note that this argument crucially relies on the $\kt_0$ assumption and the independence of the port assignments for distinct nodes. 
When conditioning on $\corr \wedge \good_i$, it follows that, for every possible transcript of messages $\set{\Pi_i \!=\! \pi_i}$ received by $v_i$, we have
\begin{align}
    \supp\lt(X_i \ \md|\ \pi_i,b, y,\corr\wedge\good_i\rt)
    &\ge 
\supp\lt(X_i \ \md|\ b, y,\corr\wedge\good_i\rt) - \frac{n}{2^{\beta}} \notag\\ 
\ann{by \eqref{eq:supp_lb}}
&> \frac{n}{2^{\beta}},					
  \label{eq:lb_contrad2}
\end{align}
The port $t$ that is output by $v_i$ is a deterministic function given $\pi_i$, $b$, and $v_i$'s advice, which is part of $y$.
Moreover, when conditioning on event $\corr$, we know that $t = X_i$.
Consequently, we have $\supp\lt(X_i \ \md|\ \pi_i,b, y,\corr\wedge\good_i\rt) =1 \leq \frac{n}{2^\beta} $, contradicting \eqref{eq:lb_contrad2}. Note that in the last inequality, we have used the assumption that $\beta \leq \log_2n$, according to the premise of Theorem~\ref{thm:lb_kt0}.
\end{proof}

We now return to \eqref{eq:entropy_term}. 
Applying Lemma~\ref{lem:lb_contrad} and using the bound $|S|\ge \frac{n}{2}$ guaranteed by Lemma~\ref{lem:good_kt0}, we obtain
$|\mathbf{Y}| \ge \frac{n}{2}(\beta-2-o(1)).$
Since the network has $3n$ nodes, we conclude that the average length of advice is at least $\frac{1}{6}\cdot (\beta-2-o(1)) = \Omega(\beta)$ bits. 
This completes the proof of Theorem~\ref{thm:lb_kt0}.

\subsection{A Lower Bound on the Message Complexity for Time-Restricted $\kt_1$ Algorithms} \label{sec:lb_kt1}

\paragraph{The Class of Lower Bound Graphs $\mathcal{G}_k$.}
For a given parameter $k$, we define $n = q^k$, where $q$ is a prime power, and $k \ge 3$ be a positive odd integer. 
We modify the graph construction $\mathcal{G}$ defined in Section~\ref{sec:lb} for the $\kt_0$ setting, by keeping the perfect matching edges between $V$ and $W$ as is and, instead of a complete bipartite graph, we add an $n^{\frac{1}{k}}$-regular bipartite graph $H$, which has a girth $\rho$ of at least $k+5$ and consists of $\Omega\lt( n^{1+\frac{1}{k}} \rt)$ edges.
The existence of $H$ with the desired properties follows from the results of Lazebnik and Ustimenko~\cite{girth}.%
\footnote{Strictly speaking, the construction of \cite{girth} results in a disconnected graph consisting of multiple components. 
We could easily make this graph connected by inserting some edges between nodes in $U$.
For the sake of simplicity, we refrain from doing so as our lower bound argument rests on the high girth assumption and does not exploit the connectivity of the graph in any way.}
Analogously to Section~\ref{sec:lb_kt0}, we assume that all center nodes (i.e., nodes in $V$) are awake, whereas all other nodes are asleep. 

We now describe the input distribution from which we sample a concrete instance of the lower bound graph: 
Since we are considering the $\kt_1$ model, each node starts out knowing the IDs of its neighbors. We assume that the IDs of the nodes in $V$ are fixed such that node $v_j$ gets ID $2n+j$, for $j \in [n]$.
For choosing the IDs of the nodes in $U \cup W$, on the other hand, we  uniformly at random choose a permutation from $[2n]$. (Note that this is different from the class of graphs $\mathcal{G}$ used for the $\kt_0$ lower bound, where the IDs were fixed and the port connections were chosen randomly instead.)
Figure~\ref{fig:lb} gives an example of the lower bound graph. %

\begin{figure}[t]
  \centering
\includegraphics[scale=1.35]{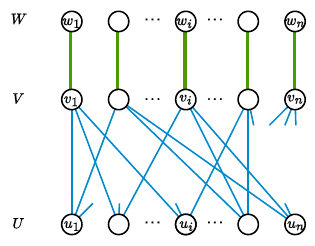}
\caption{\small
The lower bound graph construction for Theorem~\ref{thm:lb}. Initially, all nodes in $U$ and $W$ are asleep, while the center nodes (i.e., nodes in $V$) are awake. Node IDs and some edges are omitted for clarity. To solve the wake-up problem, the center nodes need to send a message over the green thick edges to their respective crucial neighbor. 
}
\label{fig:lb}
\end{figure}

The following properties are immediate from the definition of $\mathcal{G}_k$ and the aforementioned result of \cite{girth}:

\begin{fact} \label{lem:graph_props}
Any graph $G$ sampled from $\mathcal{G}_k$, has the following properties:
\begin{enumerate}
\item All nodes in $V$ have the same degree $d = n^{1/k}+1$.
\item $G$ has $\Omega\lt(n^{1+1/k}\rt)$ edges. 
\item The girth of $G$  is at least $k+5$;
\end{enumerate}
\end{fact}

We remark that $\mathcal{G}_k$ admits a simple $O( \log^2 n )$-time algorithm with a message complexity of $O( n\log^2 n )$, by using the deterministic gossip algorithm of \cite{haeupler2015simple} as a subroutine, which we describe in Appendix~\ref{app:lb_algo} for completeness.
In the remainder of this section, we prove Theorem~\ref{thm:lb}: 

\begin{reptheorem}{thm:lb}
\thmLb
\end{reptheorem}

We focus on deterministic algorithms first; subsequently, we will use Yao's lemma to extend the result to randomized algorithms.
We write $\mu(\mathcal{G}_k)$ to denote the uniform distribution over the graphs in $\mathcal{G}_k$. 
Assume towards a contradiction that there is a deterministic algorithm $\mathcal{A}$
with an expected message complexity of at most $o\lt(n^{1+1/k}\rt)$, where the expectation is taken over $\mu(\mathcal{G}_k)$.

We will show a lower bound for an algorithm that solves the $\nih$ problem. 
Recall that, by the premise of the theorem, we assume a time complexity of $k+1$ for the wake-up algorithm.
According to Lemma~\ref{lem:lb_simul}, we obtain a deterministic algorithm $\mathcal{B}$ for the $\nih$ problem that uses $k+2$ time and an expected message complexity of at most $o\lt(n^{1+1/k}\rt) + n = o\lt(n^{1+1/k}\rt)$, since $k = o\lt( \log n \rt)$.

For each node $v_i \in V$, let $N_i$ denote the set of neighbors of $v_i$, and let $\bar{N}_i = V(G) \setminus N_i$.
Recall from Fact~\ref{lem:graph_props} that $N_i$ is a subset of $U \cup W$.
We define $d=|N_i|=n^{{1}/{k}}+1$. 

\paragraph{Partial ID Assignments and Configurations.}
Since the topology of all graphs in $\mathcal{G}_k$ is fixed, we can uniquely identify each graph $G \in \mathcal{G}_k$ by its assignment of the IDs to the $2n$ nodes in $U \cup W$, which we can describe as a permutation of the set $[2n]$.
From the viewpoint of a node $v_i$, the ID assignment for the nodes in $U \cup W$ can be described as a pair $(I,\rho)$, where $I$ is the ID assignment to the nodes in $\bar{N}_i$ and $\rho$ is the ID assignment to the nodes in $N_i$. 
In our proof, we will first fix $I$ and then consider the possible choices for $\rho$.
This motivates calling $I$ a \emph{partial ID assignment with respect to $v_i$} and $\rho$ a \emph{configuration of $v_i$ with respect to $I$}. 
We write $\rho(u)=x$ to say that node $u$ has ID $x$, and we use $\rho^{-1}(x)$ to denote the neighbor of $v_i$ that $\rho$ assigns ID $x$.
For convenience, we define the notation $\id(S)$ to denote the set of IDs assigned to the nodes in set $S$.
Given a partial ID assignment $I$, we know that the nodes in $N_i$ must be assigned IDs from the set $[2n]\setminus \id(\bar{N}_i)$, and thus $\rho$ can be interpreted as a permutation of $\id(N_i)$.

\begin{fact} \label{cl:num_feasible}
Given a partial ID assignment $I$ with respect to $v_i$, there are $d!=|N_i|! = (n^{1/k}+1)!$ many possibly configurations for $v_i$.
\end{fact}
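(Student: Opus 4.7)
The plan is to prove this by a direct counting argument, since a configuration is nothing more than a bijection between two sets of the same size. First I would recall the definitions: a configuration $\rho$ of $v_i$ with respect to a partial ID assignment $I$ is an assignment of IDs to the nodes in $N_i$, consistent with the restriction that $(I,\rho)$ together forms a valid global ID assignment to $U \cup W$ drawn from permutations of $[2n]$.

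Next I would observe that the total ID pool is $[2n]$, and $I$ has already assigned distinct IDs from $[2n]$ to the $2n - d$ nodes in $\bar{N}_i$. Therefore the set of remaining IDs, $[2n] \setminus \id(\bar{N}_i)$, has size exactly $d = |N_i|$, and these IDs are precisely the ones that must be distributed among $N_i$.

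Then I would conclude that $\rho$ can be viewed as a bijection from $N_i$ to $[2n] \setminus \id(\bar{N}_i)$, i.e., a permutation of a set of size $d$. The number of such bijections is $d! = (n^{1/k}+1)!$, as stated. Since this step is purely combinatorial and relies only on the fact that node IDs form a permutation of $[2n]$ (together with $|N_i| = d = n^{1/k}+1$ from Lemma~\ref{lem:graph_props}), there is no real obstacle; the proof is essentially a one-line justification. The only subtlety worth spelling out is that every such bijection indeed yields a valid ID assignment in $\mathcal{G}_k$---which follows because the underlying topology of $\mathcal{G}_k$ is fixed and the input distribution is uniform over ID permutations, so no bijection is excluded by the construction.
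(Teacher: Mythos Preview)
Your proposal is correct and matches the paper's approach: the paper states this as an immediate fact right after observing that, given $I$, the configuration $\rho$ can be interpreted as a permutation of $\id(N_i)$, which is exactly your bijection-counting argument spelled out in more detail.
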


\paragraph{Quiet and Chatty Configurations.}
Observe that, after fixing a partial ID assignment $I$, the lower bound graph from $\mathcal{G}_k$ is uniquely determined by a configuration $\rho$ of $v_i$; we denote the resulting graph as $G[\rho]$.
We say that a configuration $\rho$ of $v_i$ is \emph{quiet}, if the number of links incident to $v_i$ over which a message is sent when executing on graph $G[\rho]$ is at most $\delta n^{1/k}$, for a positive constant $\delta<0.16$; otherwise we say that $\rho$ is \emph{chatty}.

\begin{lemma} \label{lem:exists_quiet}
There exists a vertex $v_* \in V$ and a partial ID assignment $I$ such that more than a $\frac{7}{8}$-fraction of the configurations of $v_*$ (w.r.t.\ $I$) are quiet.
\end{lemma}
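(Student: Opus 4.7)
The plan is a two-step averaging (probabilistic method) argument. First I will convert the premise on the expected message complexity of $\mathcal{B}$ into an upper bound on the expected number of chatty center vertices in a random graph; then I will apply Markov's inequality twice, once over the choice of vertex to identify a good $v_* \in V$, and once over partial ID assignments with respect to $v_*$ to identify a good $I$.

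For the first step, observe that each communication link used by $\mathcal{B}$ contributes at most $2$ to $\sum_{v \in V(G)} M_v(G)$, so $\sum_{v_i \in V} M_{v_i}(G) \le 2 \cdot \mathrm{MsgComplexity}(G)$. Taking expectations over $G \sim \mu(\mathcal{G}_k)$ and using the premise $\EE_G[\mathrm{MsgComplexity}] = o(n^{1+1/k})$ (which is inherited from $\mathcal{A}$ via Lemma~\ref{lem:lb_simul}), we obtain $\EE_G\bigl[\sum_{v_i \in V} M_{v_i}(G)\bigr] = o(n^{1+1/k})$. Each chatty $v_i$ contributes at least $\delta n^{1/k}$ to the inner sum, hence $\EE_G[|\{v_i \in V : v_i \text{ chatty in } G\}|] = o(n)$. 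Swapping sum and expectation gives $\sum_{v_i \in V}\Pr_G[v_i \text{ chatty in } G] = o(n)$, and so for $n$ large enough there exists (in fact, almost every) $v_* \in V$ with $\Pr_G[v_* \text{ chatty in } G] \le 1/64$.

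For the second step, the key structural observation is that, once $v_*$ is fixed, sampling $G \sim \mu(\mathcal{G}_k)$ is equivalent to sampling a uniformly random partial ID assignment $I$ with respect to $v_*$ and, conditionally on $I$, a uniformly random configuration $\rho$ of $v_*$, since any permutation of $[2n]$ decomposes uniquely into such a pair (cf.\ Fact~\ref{cl:num_feasible}). Setting $p(I) = \Pr_\rho[\rho \text{ is chatty for } v_* \mid I]$, this decomposition gives $\EE_I[p(I)] \le 1/64$, and Markov's inequality yields $\Pr_I[p(I) \ge 1/8] \le 8/64 = 1/8 < 1$. Hence some partial ID assignment $I$ satisfies $p(I) < 1/8$, i.e., strictly more than a $7/8$-fraction of the configurations of $v_*$ with respect to $I$ are quiet, as required.

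The main subtlety I anticipate is justifying cleanly the decomposition of the uniform distribution on $\mathcal{G}_k$ as independent uniform distributions over $I$ and over $\rho$ given $I$; this is where the symmetry of the construction and the fact that the underlying topology of $\mathcal{G}_k$ is fixed (so that only the ID labelling is random) are used. Once this decomposition is in place, the remainder is a routine double application of Markov's inequality.
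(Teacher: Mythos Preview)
Your proposal is correct and follows essentially the same averaging argument as the paper. The only difference is presentational: the paper assumes the negation (for every $v_i$ and every $I$, at least a $\tfrac{1}{8}$-fraction of configurations are chatty) and derives a contradiction with the $o(n^{1+1/k})$ message bound in one stroke, whereas you argue directly via two applications of Markov's inequality (first over $v_i \in V$ to locate $v_*$, then over partial ID assignments $I$); the underlying content is identical.
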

\begin{proof}
Assume towards a contradiction that, for every $v_i \in V$ and every partial ID assignment $I$, at least a $\frac{1}{8}$-fraction of the configurations are chatty. 
Let random variable $M_{v_i}$ denote the total number of messages sent or received by $v_i$, and let $M$ be the total number of messages sent by the algorithm. %
We now compute a lower bound on the expectation of $M$, and we use the notation $\EE_{\rho\mid I}$ to emphasize that the expectation is taken over a uniformly at random sampled configuration $\rho$, conditioned on a given partial ID assignment $I$.
Observe that $2M = \sum_{v_i \in V} M_{v_i}$, since the sum over the $M_{v_i}$ variables counts each message twice, and every edge in the graph has one endpoint in $V$.
Thus, for any partial ID assignment $I$, we have
\begin{align}
\EE_{\rho\mid I}\lt[ M \ \md|\ I \rt] 
  &= \frac{1}{2}\sum_{v_i \in V} \EE_{\rho\mid I}\lt[ M_{v_i}\ \md|\ I \rt] \notag\\ 
  &\ge \frac{1}{2}\sum_{v_i \in V} \EE_{\rho\mid I}\lt[ M_{v_i} \ \md|\ \text{$\rho$ is chatty}, I \rt] \cdot \Pr_{\rho\mid I}\lt[ \text{$\rho$ is chatty} \ \md|\ I \rt]  \notag\\ 
  &\ge \frac{1}{16}\sum_{v_i \in V} \EE_{\rho\mid I}\lt[ M_{v_i} \ \md|\ \text{$\rho$ is chatty}, I \rt]\notag \\ 
  &= \Omega\lt( n^{1+1/k} \rt). \label{eq:exp_messages}
\end{align}
where, in the third step, we have used the assumption that, given $I$, sampling configuration $\rho$ corresponds to sampling a permutation of the IDs in ${N}_i$ uniformly at random and that a $\frac{1}{8}$-fraction of these configurations are chatty.
Moreover, in the fourth step, we used that $\EE_{\rho\mid I}\lt[ M_{v_i} \ \md|\ \text{$\rho$ is chatty}, I \rt] = \Omega\lt( n^{1/k} \rt)$ and that $|V|=n$.
Since the bound in \eqref{eq:exp_messages} holds for every partial ID assignment $I$, it follows that $\EE\lt[ M \rt] = \EE_I\lt[ \EE_{\rho\mid I}\lt[ M \ \md|\ I \rt] \rt] = \Omega\lt( n^{1+1/k} \rt)$,
providing a contradiction to the assumed expected message complexity of $o(n^{1+1/k})$.
\end{proof}

For the remainder of the proof, we consider node $v_*$ and the partial ID assignment $I$, whose existence is guaranteed by Lemma~\ref{lem:exists_quiet}. 
Hence, whenever we specify a configuration $\rho$, we mean a configuration of $v_*$ with respect to $I$. 
Note that we denote $v_*$'s crucial neighbor by $w_*$. 
For a given configuration $\rho$, we define $C_\rho$ to be the set of \emph{communicating neighbors of $v_*$ (in $U$)}, which contains every node $u \in N_* \setminus  \set{w_*}$ such that a message is sent over the edge $\set{u,v_*}$ in $G[\rho]$.
Similarly, we define the \emph{non-communicating neighbors (in $U$)} as $\bar{C}_\rho= N_* \setminus (C_\rho \cup \set{w_*})$. %

In the next lemma, we show that if $v_*$ correctly identifies $w_*$ with ID $x$ in a configuration $\rho$, where some node $u$ and $v_*$ do not communicate, then the algorithm must send a message from $u$ to $v_*$ in the configuration $\rho'$ where we swap the IDs of $w_*$ and $u$. 
Intuitively speaking, this follows since node $v_*$ sees the same neighborhood in both configurations, and the only way $v_*$ can correctly output the ID of $w_*$ is if it receives a message from a node that does see a difference.
This, however, is impossible due to the high girth and the time-restriction, unless the edge $\set{u,v_*}$ is used. 
Figure~\ref{fig:kt1_lb_proof} gives an example of applying Lemma~\ref{lem:swapped}.

\begin{lemma} \label{lem:swapped}
Suppose that $\rho$ is a configuration, where $\rho(w_*) = x$. %
Consider any configuration $\rho'$, with the property that there is some $u \in \bar{C}_\rho$ such that the only difference between $\rho$ and $\rho'$ is that the IDs of $u$ and $w_*$ are swapped. Formally, this means that 
$\rho'(u)=x$,
$\rho'(w_*)=\rho(u)$, and
$\rho'(u') = \rho(u')$, for all $u' \notin \set{u,w_*}$.
Then, it holds that a message is sent from $u$ to $v_*$ over the edge $\set{u,v_*}$ in the execution on $G[\rho']$.
\end{lemma}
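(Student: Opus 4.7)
My plan is to proceed by contradiction: suppose no message is transmitted across $\{u, v_*\}$ in the execution on $G[\rho']$. Combined with $u \in \bar{C}_\rho$, the edge $\{u, v_*\}$ is then silent in both executions on $G[\rho]$ and on $G[\rho']$. I will use this to establish the following key invariant: $v_*$'s view in $G[\rho']$ is obtained from its view in $G[\rho]$ by consistently swapping every occurrence of the IDs $x$ and $\rho(u)$ throughout its initial state and received message transcript.

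The main tool for the invariant is the girth bound. The only nodes whose initial states differ between $G[\rho]$ and $G[\rho']$ are $\{u, w_*, v_*\}$ together with $V_u := N_G(u) \setminus \{v_*\}$. Differences at $w_*$ can reach $v_*$ only through the direct edge $\{w_*, v_*\}$ (since $w_*$ has degree $1$), and their effect is already captured by the swap of $w_*$'s own ID. Differences at $u$, or at some $v' \in V_u$, would have to propagate to $v_*$ along a path avoiding the silent edge $\{u, v_*\}$; any such path has length at least $k+3$, because otherwise combining it with the two-edge path $v_* - u - v'$ would form a cycle shorter than the girth $k+5$. Since the algorithm runs for only $k+2$ rounds, these differences cannot reach $v_*$ in time, yielding the claimed swap-invariance of $v_*$'s view.

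The contradiction then follows by applying this invariance to the algorithm's decisions. Since $\mathcal{B}$ is deterministic and correct, $v_*$ in $G[\rho]$ successfully outputs $x$, identifying $w_*$ via messages on edges other than $\{u, v_*\}$. Tracing the swap through the algorithm's decision logic, whatever mechanism at $v_*$ (or at a neighbor) produces a query directed at the neighbor with ID $x$ in $G[\rho]$ would, in $G[\rho']$, direct the corresponding query at the neighbor now bearing ID $x$---namely $u$---yielding a message on $\{u, v_*\}$ and contradicting our assumption. If instead no such ID-specific mechanism exists and the algorithm's behavior on the affected ports is independent of the swap, then by the invariant and the girth argument $v_*$ in $G[\rho]$ receives no information distinguishing $w_*$ from the other $n^{1/k}$ degree-$n^{1/k}$ neighbors within $k+2$ rounds, contradicting correctness.

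The main obstacle I anticipate is formalizing the swap-invariance as an inductive statement on round number and carefully handling the $\{w_*, v_*\}$ edge, whose messages depend on $w_*$'s own (swapped) ID. A clean case split on whether $v_*$ ever sends a message to $w_*$ in $G[\rho]$ should control this: in the non-sending case the invariant at $v_*$ follows purely from the girth; in the sending case, one tracks how $w_*$'s reply encodes its own ID and verifies that this is precisely the swap transformation preserved by the induction.
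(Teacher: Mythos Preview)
Your approach shares the paper's core idea: by the girth bound, any causal chain from a node whose initial state differs between the two instances ($u$, $w_*$, or a non-$v_*$ neighbor of $u$) to $v_*$ within $k+2$ rounds must traverse $\{u,v_*\}$, so if that edge is silent in $G[\rho']$ then $v_*$'s execution there coincides with the one in $G[\rho]$, contradicting correctness since the required outputs differ ($x$ versus $\rho(u)$). However, your formulation introduces an unnecessary complication and one genuine misstep.

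First, the ``swap invariant'' is the wrong abstraction. In the $\kt_1$ model, $v_*$'s initial state is its own ID together with the \emph{set} of neighbor IDs, which is literally the same in $G[\rho]$ and $G[\rho']$; and under the silence assumption plus girth, every message $v_*$ receives is identical in the two runs. So the correct invariant is simply that $v_*$'s state is \emph{identical} throughout, not ``obtained by swapping $x$ and $\rho(u)$''. Your swap invariant happens to collapse to this (the only place $x$ and $\rho(u)$ occur in $v_*$'s view is the unordered neighbor set), but the framing invites the erroneous step of inferring that the output in $G[\rho']$ is the swap of the output in $G[\rho]$---which would be $\rho(u)$, i.e.\ the \emph{correct} answer, yielding no contradiction at all. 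A deterministic algorithm need not commute with an ID permutation.

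Second, and relatedly, your derivation of the contradiction in the ``no ID-specific query'' case is off. The point is not that $v_*$ ``receives no information distinguishing $w_*$ from the other $n^{1/k}$ degree-$n^{1/k}$ neighbors'' (and $w_*$ has degree $1$, not $n^{1/k}$). The clean argument is just: identical state implies identical output; but the correct outputs are $x$ in $G[\rho]$ and $\rho(u)\neq x$ in $G[\rho']$. This is exactly how the paper argues it.

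Third, the obstacle you anticipate for the $\{w_*,v_*\}$ edge dissolves once you use the identical-state invariant. If $v_*$ ever addresses a message to ID $x$ in $G[\rho]$, then by identical state it addresses ID $x$ in $G[\rho']$ as well---and that recipient is $u$, immediately violating the silence hypothesis. Symmetrically, $v_*$ never addresses ID $\rho(u)$ (else the edge $\{u,v_*\}$ would carry a message in $G[\rho]$, contradicting $u\in\bar{C}_\rho$), so $w_*$ stays asleep in both runs and there is no reply to track.
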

\begin{proof}
Since the algorithm is correct, node $v_*$ outputs ID $\rho(w_*)=x$ in $G[\rho]$, but outputs ID $\rho'(w_*)=\rho(u) \ne x$ in $G[\rho']$.
This tells us that $v_*$ must be in a distinct state after some round in the execution on $G[\rho']$ compared to the execution on $G[\rho]$. %
Given that $v_*$ sees the same initial state in both $G[\rho]$ and $G[\rho']$, the only way this can happen is when one of the following two cases occur in $G[\rho']$:
\begin{enumerate} 
\item[(a)] Node $v_*$ receives a message from $u$.
\item[(b)] There exists a chain of messages $\mathcal{C}$  in $G[\rho']$ reaching $v_*$ in some round $r$, where $1 \le r \le k+2$, and $\mathcal{C}$ starts at either $u$ or some neighbor $v'$ of $u$.
See Figure~\ref{fig:kt1_lb_proof} for an example. 
(Note that the neighbors of $u$ see distinct initial states in $G[\rho]$ and $G[\rho']$ due to their $\kt_1$ knowledge.)
\end{enumerate}

In Case~(a), we are done. Thus we focus on Case~(b) and assume that such a chain $\mathcal{C}$ exists.
By Fact~\ref{lem:graph_props}, graph $G$ has a girth of at least $k+5$, and thus the shortest path connecting $v_*$ to $v'$ (or $u$ itself) that does not use the edge $\set{u,v_*}$ must have a length of at least $k+3$.
Recalling that the $\nih$ algorithm $\mathcal{B}$ terminates in at most $k+2$ rounds, implies that $\mathcal{C}$ must traverse $\set{u,v_*}$, which completes the proof.
\end{proof}

\begin{figure}[t]
  \centering
\includegraphics[scale=1.25]{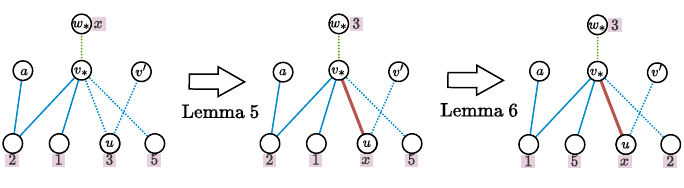}
\caption{\small
Three neighborhood ID assignments (i.e., configurations) of node $v_*$.
Shaded boxes depict the node IDs. 
Dashed lines correspond to edges that are not used for communication.
Applying Lemma~\ref{lem:swapped} to the left configuration tells us that $u$ must send a message to $v_*$ in the middle configuration, where the IDs of $u$ and $w_*$ are swapped.
While node $v'$ sees distinct neighborhoods in these configurations due to the $\kt_1$ assumption, it cannot inform $v_*$ about this without using the edge $\set{u,v_*}$ due to the time-restriction.
Moreover, Lemma~\ref{lem:always_send} ensures that a message is sent over the edge $\set{u,v_*}$ in the configuration on the right, where $u$ has the same ID as in the middle configuration.
Again, this is because any node $a$ who observes a difference between the two configurations cannot convey this to $u$ on time, due to the high girth of the graph.
}
\label{fig:kt1_lb_proof}
\end{figure}

Next, we show that if a neighbor $u$ with a given ID $x$ sends a message to $v_*$, then some message is sent over this edge in all configurations where $u$ has ID $x$.
See Figure~\ref{fig:kt1_lb_proof} for an example.

\begin{lemma} \label{lem:always_send}
Consider a configuration $\rho$ and suppose a neighbor $u$ with ID $x$ sends a message to $v_*$ in configuration $G[\rho]$.
Then, for any configuration $\rho'$, where $u$ has ID $x$, it holds that some message is sent over $\set{u,v_*}$ in $G[\rho']$.
\end{lemma}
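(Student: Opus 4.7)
The plan is to argue by contradiction. Suppose no message is ever sent over $e = \{u,v_*\}$ in the execution on $G[\rho']$. Let $r \le k+2$ denote the round in which $u$ sends to $v_*$ in $G[\rho]$, and let $r^* \le r$ denote the first round at which any message is sent on $e$ in $G[\rho]$.

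The technical heart is a ``local view'' claim: $u$'s state at the end of round $r^*-1$ is identical in the executions on $G[\rho]$ and $G[\rho']$. The intuition is that, since $e$ carries no message before round $r^*$ in $G[\rho]$ and no message at all in $G[\rho']$, the information flowing to $u$ in these rounds follows only paths of $G-e$. Invoking the girth bound of Lemma~\ref{lem:graph_props}, any path from $u$ to $v_*$ in $G-e$ has length at least $k+4$, and any path from $u$ to a node in $N_* \setminus \{u\}$ in $G-e$ has length at least $k+3$ (since a shorter path would combine with $e$, respectively $e$ together with an edge $\{v_*,n\}$, into a cycle of length less than $k+5$). Because $r^*-1 \le k+1$, neither $v_*$ nor any node of $N_*\setminus\{u\}$ lies in the $r^*$-hop neighborhood of $u$ in $G-e$. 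The remaining vertices carry identical IDs under $\rho$ and $\rho'$. A parallel induction on the intermediate neighbors $v_{j_l}$ of $u$ certifies that the messages $u$ receives are also identical: each such $v_{j_l}\in V$ shares at most one common neighbor with $v_*$ (else a $4$-cycle forms), namely $u$, so its $s$-hop view in $G-e$ (for $s\le r^*$) is likewise free of vertices from $N_*\setminus\{u\}$ and from $v_*$.

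Having established the local view claim, the outgoing messages of $u$ at round $r^*$ are a deterministic function of $u$'s state at end of round $r^*-1$, and thus agree in the two executions. In Case~A ($r^*=r$), $u$ therefore sends to $v_*$ at round $r^*$ in $G[\rho']$ as well, directly contradicting the assumption. In Case~B ($r^*<r$), the edge is first used by $v_*$ sending to $u$ at round $r^*$ in $G[\rho]$; by the local view claim, $u$ also refrains from sending at round $r^*$ in $G[\rho']$, consistent with the assumption. To close Case~B, I propagate the local view argument beyond round $r^*$: since the algorithm terminates within $k+2$ time units, any decision $u$ makes at a round $t\le k+2$ depends only on its $(t+1)$-hop view, which (so long as no message crosses $e$) remains confined to the part of $G-e$ free of $N_*\setminus\{u\}$ and $v_*$. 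In $G[\rho']$ this constraint holds throughout, so $u$'s trajectory matches that of an imagined execution of $G[\rho]$ in which $v_*$'s messages on $e$ are suppressed; combined with the premise that $u$ eventually sends on $e$ in $G[\rho]$, this forces $u$ to send on $e$ in $G[\rho']$, again contradicting the assumption.

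The main obstacle is the Case~B analysis: one must argue that the divergence created by $v_*$'s message at round $r^*$ in $G[\rho]$ does not actually alter $u$'s decision to eventually send on $e$, which requires carefully tracking that every causal chain to $u$ avoiding $e$ stays within the ``safe'' part of $G-e$ throughout all $k+2$ rounds. The girth bound $k+5$ is calibrated exactly so that this safe region excludes the IDs that differ between $\rho$ and $\rho'$, making the argument go through.
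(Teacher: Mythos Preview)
Your overall strategy matches the paper's: use the girth bound to argue that the nodes whose initial state differs between $\rho$ and $\rho'$ (namely $N_*\setminus\{u\}$ and their $V$-neighbors) are too far from $u$ in $G-e$ for any difference to propagate within $k+2$ rounds. The paper is more compact: it directly argues that $u$'s state is identical in every round of the two executions, under the single assumption that $v_*$ does not send to $u$ in $G[\rho']$ (which would already finish the proof), and derives a contradiction by tracing any divergence back along a message chain that, by girth, would need length at least $k+3$.

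Your Case~B argument, however, has a genuine gap. You correctly establish that $u$'s trajectory in $G[\rho']$ coincides with $u$'s trajectory in an imagined execution on $G[\rho]$ in which $v_*$'s messages on $e$ are suppressed. But you then invoke the premise that $u$ sends on $e$ in the \emph{actual} execution on $G[\rho]$ to conclude that $u$ also sends in $G[\rho']$. This is a non sequitur: in the actual $G[\rho]$ execution, $u$ received $v_*$'s message at round $r^*<r$, and $u$'s later decision to send at round $r$ may depend precisely on that message. In the suppressed execution (and hence in $G[\rho']$) $u$ never receives it, so nothing forces $u$ to send. To close Case~B you would need to show that $u$'s send-decision is independent of anything arriving on $e$, which is false for general algorithms. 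The paper sidesteps this by never conditioning on what happens on $e$ in $G[\rho]$; it only assumes $v_*$ is silent toward $u$ in $G[\rho']$, so any chain carrying a divergence to $u$ must avoid $e$ altogether and is therefore too long. Your Case~A already contains this argument---the clean fix is to drop the case split on $r^*$ and argue directly, as the paper does, that either $v_*$ sends to $u$ in $G[\rho']$ (done) or every divergence chain to $u$ lies in $G-e$ and hence exceeds $k+2$ hops.
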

\begin{proof}
We will show that $u$ has the same state in every round in both $G[\rho]$ and $G[\rho']$, as this implies that it will also send a message to $v_*$ in both executions.

The initial state of $u$ depends only on its neighborhood which is the same in $G[\rho]$ and $G[\rho']$. Now, assume towards a contradiction that there exists some round $r \ge 1$, at the end of which $u$'s state differs in $G[\rho]$ from the one it had in $G[\rho']$; without loss of generality, assume that $r$ is the earliest round in which this is the case.
Thus, there must be some chain of messages $\mathcal{C}$ starting at some node $a \in U \cup V$ and ending at $u$ in round $r$, such that node $a$ observes distinct neighborhoods in $G[\rho']$ compared to $G[\rho]$.
Since the only difference between $G[\rho]$ and $G[\rho']$ is the ID assignment to $v_*$'s neighbors, it follows that $a$ must either itself be in $N_*$ or a neighbor of some node in $N_*$ (which may suffice for seeing a different neighborhood due to $\kt_1$).
We can assume that $v_*$ does not send a message to $u$ in the execution on $G[\rho']$, as otherwise we are done.
By Fact~\ref{lem:graph_props}, graph $G$ has a girth of at least $k+5$, and thus any shortest path between $a$ and $u$ that does not use the edge $\set{u,v_*}$ must have a length of at least $k+3$.
However, the existence of such a chain is impossible since the algorithm must terminate in $k+2$ time units.
Thus, it follows that $u$ has the same state in every round in both $G[\rho]$ and $G[\rho']$, which contradicts the existence of $r$.
\end{proof}

We now show that there is a large set of IDs $X$ such that there is a quiet configuration in which $w_*$ has ID $x$, for each $x \in X$.
We will use this in the proof of Lemma~\ref{lem:too_many_chatty} in combination with Lemmas~\ref{lem:swapped} and \ref{lem:always_send} to show that $v_*$ does indeed have many chatty configurations.

\begin{lemma} \label{lem:chatty_crucial}
There exists a set of IDs $X$ of size at least $\frac{7}{8}|N_*|$ such that, for each $x \in X$, there is a quiet configuration $\rho_x$ of $v_*$ where $\rho_x(w_*) = x$.
\end{lemma}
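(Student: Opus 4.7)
The plan is to prove Lemma~\ref{lem:chatty_crucial} by a straightforward double counting / averaging argument that leverages Lemma~\ref{lem:exists_quiet}. Let $d = |N_*| = n^{1/k}+1$, and recall that a configuration $\rho$ is a bijection from $N_*$ to a fixed set of $d$ available IDs (those in $[2n] \setminus \id(\bar{N}_*)$). Call an ID $x \in \id(N_*)$ \emph{bad} if every configuration $\rho$ satisfying $\rho(w_*) = x$ is chatty, and \emph{good} otherwise; then let $X$ be the set of good IDs and $B = \id(N_*) \setminus X$ the set of bad ones.

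The key observation is that, for each fixed ID $x$, the number of configurations $\rho$ with $\rho(w_*) = x$ is exactly $(d-1)!$, since once $w_*$ is assigned $x$ the remaining $d-1$ IDs must be permuted among the remaining $d-1$ neighbors of $v_*$. These sets of configurations are pairwise disjoint for distinct values of $x$, so the total number of chatty configurations is at least $|B| \cdot (d-1)!$.

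On the other hand, Lemma~\ref{lem:exists_quiet} guarantees that more than a $\tfrac{7}{8}$-fraction of the $d!$ configurations of $v_*$ (with respect to $I$) are quiet, hence the number of chatty configurations is strictly less than $\tfrac{1}{8}\cdot d!$. Combining the two bounds yields
\[
|B| \cdot (d-1)! \;<\; \tfrac{1}{8}\,d! \;=\; \tfrac{1}{8}\,d\cdot (d-1)!,
\]
so $|B| < \tfrac{d}{8}$, which gives $|X| > \tfrac{7d}{8} = \tfrac{7}{8}|N_*|$, as claimed.

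This step is essentially a bookkeeping lemma and I do not anticipate any real obstacle; the only thing to be careful about is correctly counting the number of configurations assigning a fixed ID to $w_*$ (which uses the bijective structure of $\rho$) and making sure the inequality on the chatty fraction from Lemma~\ref{lem:exists_quiet} is applied with the correct direction of the strict/non-strict bounds. The lemma will then be used in subsequent arguments together with Lemmas~\ref{lem:swapped} and \ref{lem:always_send} to force many communicating neighbors for $v_*$ and derive the desired lower bound on the message complexity.
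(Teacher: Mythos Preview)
Your proposal is correct and essentially identical to the paper's proof: both count, for each bad ID $x$, the $(d-1)!$ configurations assigning $x$ to $w_*$, observe these are all chatty and pairwise disjoint, and compare against the $<\tfrac{1}{8}d!$ bound on chatty configurations from Lemma~\ref{lem:exists_quiet}. The only cosmetic difference is that the paper phrases it as a proof by contradiction (assuming $|X|<\tfrac{7}{8}|N_*|$) whereas you argue directly that $|B|<d/8$.
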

\begin{proof}
Suppose that $|X| < \frac{7}{8}|N_*|$ and note that, for every $z \notin X$, every configuration where $\id(w_*) = z$ must be chatty.
This means that there are at least 
\begin{align}
  (|N_*|-|X|)(d-1)! \geq \frac{1}{8}|N_*|(d-1)! = \frac{(n^{1/k}+1)!}{8} 
\end{align}
many chatty configurations of $v_*$. 
However, according to Lemma~\ref{lem:exists_quiet} and Fact~\ref{cl:num_feasible}, strictly less than $\frac{1}{8} (n^{1/k}+1)!$ configurations can be chatty, providing a contradiction.
\end{proof}
 
\begin{lemma} \label{lem:too_many_chatty}
Given the partial ID assignment $I$, a uniformly at random sampled configuration for $v_*$ is chatty with probability at least $\frac{1}{4}$. 
\end{lemma}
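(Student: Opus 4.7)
The plan is to lower bound, by a double-counting argument, the expected number of communicating $U$-neighbors of $v_*$ under a uniformly random configuration (conditioned on $I$), and then convert this expectation into a lower bound on the chatty fraction. Start by invoking Lemma~\ref{lem:chatty_crucial}: pick a set $X\subseteq [2n]$ of IDs with $|X|\ge \tfrac{7}{8}|N_*|$ and, for each $x\in X$, fix a quiet witness configuration $\rho_x$ satisfying $\rho_x(w_*)=x$. Because $\rho_x$ is quiet, the set of non-communicating $U$-neighbors of $v_*$ satisfies $|\bar{C}_{\rho_x}|\ge (1-\delta)n^{1/k}$.

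The crucial local-to-global step is to upgrade each witness into a guarantee about all configurations. Given $x\in X$ and $u\in \bar{C}_{\rho_x}$, swap the IDs of $u$ and $w_*$ in $\rho_x$; by Lemma~\ref{lem:swapped} the edge $\{u,v_*\}$ carries a message in the swapped configuration, in which $u$ has ID $x$. Lemma~\ref{lem:always_send} then propagates this: $u\in C_\rho$ for \emph{every} configuration $\rho$ with $\rho(u)=x$. In particular, for any pair $(u,\rho)$ with $\rho(u)\in X$ and $u\in \bar{C}_{\rho_{\rho(u)}}$, node $u$ is communicating with $v_*$ in $G[\rho]$.

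With this in hand, double-count such pairs. Since there are exactly $(d-1)!$ configurations with $\rho(u)=x$ for fixed $u$ and $x$, interchanging summations gives
\[
\sum_{\rho}|C_\rho| \;\ge\; (d-1)!\sum_{x\in X}|\bar{C}_{\rho_x}| \;\ge\; (d-1)!\cdot \tfrac{7}{8}d\cdot(1-\delta)n^{1/k} \;=\; \tfrac{7}{8}(1-\delta)n^{1/k}\cdot d!.
\]
Dividing by the number $d!$ of configurations yields $\EE_{\rho\mid I}[|C_\rho|] \ge \tfrac{7}{8}(1-\delta)n^{1/k}$. On the other hand, $|C_\rho|\le \delta n^{1/k}$ when $\rho$ is quiet and $|C_\rho|\le n^{1/k}$ otherwise, so if $p$ denotes the chatty fraction, then $\EE_{\rho\mid I}[|C_\rho|] \le p\cdot n^{1/k} + (1-p)\delta n^{1/k}$. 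Combining and solving yields $p\ge (7-15\delta)/(8(1-\delta)) > \tfrac{1}{4}$, where the last step uses $\delta<0.16$.

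The main obstacle is the local-to-global step of the second paragraph, whose validity rests on the large-girth property of $G$ exploited by Lemmas~\ref{lem:swapped} and \ref{lem:always_send}: without it, a $(k+2)$-time execution could in principle detect the swap of $u$ and $w_*$ via a non-trivial chain of messages that avoids the edge $\{u,v_*\}$, breaking the propagation. Once this step is in place, the rest is a clean averaging computation, and the constant $\delta<0.16$ in the definition of quiet/chatty is precisely calibrated so that the resulting chatty fraction exceeds $\tfrac{1}{4}$, which in turn will contradict Lemma~\ref{lem:exists_quiet}'s bound of less than $\tfrac{1}{8}$.
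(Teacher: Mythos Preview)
Your proof is correct and follows essentially the same approach as the paper. Both arguments combine Lemma~\ref{lem:chatty_crucial} with Lemmas~\ref{lem:swapped} and~\ref{lem:always_send} to lower bound the expected number of communicating neighbors of $v_*$, and then convert this expectation into a lower bound on the chatty fraction. The only presentational differences are that the paper phrases the expectation computation via indicator variables $T_x$ and the probability $\Pr_{\rho\mid I}[\exists u\in\bar C_{\rho_x}:\rho(u)=x]$, whereas you do an equivalent explicit double count of pairs $(u,\rho)$; and the paper finishes with Markov's inequality applied to $|N_*|-T$, which is the same as your averaging inequality $\EE[|C_\rho|]\le p\,n^{1/k}+(1-p)\delta n^{1/k}$. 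Your constant $\tfrac{7}{8}(1-\delta)$ is actually slightly sharper than the paper's $(1-\delta)^3$, but both comfortably yield $p>\tfrac14$ for $\delta<0.16$.
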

\begin{proof}
Let $X$ be the set of IDs guaranteed by Lemma~\ref{lem:chatty_crucial}, and recall that $\rho_x$ refers to the corresponding quiet configuration for $x \in X$.
Now, consider any configuration $\rho$ such that there exists some $u \in \bar{C}_{\rho_x}$ with $\rho(u) = x$.
Informally, this means that $\rho$ can be any configuration where one of the neighbors of $v_*$ that were non-communicating in $G[\rho_x]$ has ID $x$ in $G[\rho]$.
Combining Lemmas~\ref{lem:swapped} and \ref{lem:always_send} tells us that a message is sent on the edge connecting $v_*$ to the node with ID $x$ in $G[\rho]$.
We now use this observation to derive a lower bound on the probability that there is a message sent between the node with ID $x$ and $v_*$ in the execution on $G[\rho]$, where configuration $\rho$ is sampled uniformly at random,  conditioned on the partial ID assignment $I$.
We obtain
\begin{align}
\Pr_{\rho\mid I}\lt[ \text{$\exists u \in \bar{C}_{\rho_x}$: $\rho(u)\!=\! x$} \ \md|\ I\rt]
= 
\frac{|\bar{C}_{\rho_x}|}{|N_*|} \ge \frac{(1-\delta)n^{1/k}}{n^{1/k}+1}\ge (1-\delta)^{2}. \label{eq:prob_id_bnd}
\end{align}
In \eqref{eq:prob_id_bnd}, we used the fact that $\rho_x$ is quiet, thus $|\bar{C}_{\rho_x}| \ge (1-\delta)n^{1/k}$, and we also observed that $\frac{n^{1/k}}{n^{1/k}+1} \ge 1-\delta$, for sufficiently large $n$.

Let $T$ denote the number of neighbors $u$ of $v_*$, such that a message is communicated over $\set{u,v_*}$, and let $T_x$ be the indicator random variable that is $1$ if and only if a message is sent over the edge between $v_*$ and the node with ID $x$.
By linearity of expectation, we have 
\begin{align}
  \EE_{\rho \mid I}\lt[ T \ \md|\ I\,\rt] 
  &=
  \sum_{x \in \id(N_*)} \EE_{\rho \mid I}\lt[ T_x  \ \md|\ I\,\rt]  \notag\\ 
  &\ge \sum_{x \in X} \Pr_{\rho\mid I}\lt[ \text{$\exists u \in \bar{C}_{\rho_x}$: $\rho(u)=x$}, \ \md|\ I\, \rt]  \notag \\
  \ann{by \eqref{eq:prob_id_bnd}}
  &\ge |X| \cdot (1-\delta)^2 \notag\\ 
  &\ge \frac{7}{8}(1-\delta)^{2}n^{1/k}. \label{eq:prob_bnd2}
\end{align}
Consequently, we get
\begin{align}
  \Pr_{\rho \mid I}\lt[ \text{$\rho$ is quiet}\ \big|\ I \rt] 
  &=
  \Pr_{\rho \mid I}\lt[ T \le \delta n^{1/k}\ \big|\ I \rt] \notag\\ 
  &=
  \Pr_{\rho \mid I}\lt[ |N_*| - T \ge |N_*| - \delta n^{1/k}\ \big|\ I \rt] \notag\\ 
  \ann{by Markov's inequality}
  &\le \frac{\EE\lt[ |N_*| - T \ \md|\ I\, \rt] }{|N_*|-\delta n^{1/k}}\notag \\
	\ann{since $|N_*|=n^{1/k}+1$ and by \eqref{eq:prob_bnd2}}
  &\le \frac{(n^{1/k}+1)-\frac{7}{8}(1-\delta)^2n^{1/k}}{n^{1/k}-\delta n^{1/k}}\notag \\
  &\le \frac{1-\frac{7}{8}(1-\delta)^2}{1-\delta} + o(1)\notag \\
  \ann{since $\delta<0.16$}
  &\le \frac{1}{2} + o(1) \le \frac{3}{4},\notag
\end{align}
where we have applied Markov's inequality for the random variable $|N_*|-T$ in the third line.
\end{proof}

Lemma~\ref{lem:too_many_chatty} provides a contradiction to Lemma~\ref{lem:exists_quiet}, which confirms that any deterministic algorithm has an expected message complexity of $\Omega\lt( n^{1+1/k} \rt)$ on the uniform distribution $\mu(\mathcal{G}_k)$.
To obtain the analogous lower bound on the expected message complexity for randomized Las Vegas algorithms follows, we apply Yao's Minimax Lemma~\cite{yao1977probabilistic}. 
This completes the proof of Theorem~\ref{thm:lb}.

\section{Algorithms in the $\kt_1$ $\local$ Model} \label{sec:kt1_algos}
In this section, we describe how to achieve wake-up by leveraging the $\kt_1$ assumption and the $\local$ model, where nodes are allowed to send messages of arbitrary size.

\subsection{Asynchronous Wake-Up with Near-Optimal Message Complexity} \label{sec:kt1_algo_async}

Let $A$ be the nodes that are awakened by the adversary. 
Every node $u \in A$ chooses a random rank $\rho_u$ from the range $[n^{c}]$, where $c$ is a suitable large constant.
(This is possible because each node knows a constant factor upper bound on $\log n$, see Sec.~\ref{sec:prelim}.) 
Then $u$ starts a depth-first-search (DFS) traversal of the network via token passing, where the token carries $u$'s rank $\rho_u$, $u$'s ID, and the full list of IDs of the nodes the token has visited so far.
We remark that nodes who are awoken by a received message (instead of the adversary) neither initiate a DFS traversal nor create a random rank. 

An awake node $v$ may be visited by DFS traversal tokens originating from distinct sources, and it keeps track of the largest rank of any token that it has seen in a variable $\rho^{*}$, and stores the corresponding origin node's ID in $\id^{*}$. 
Whenever $v$ receives a new token with a rank $\rho_u$ originated from node $u$ with ID $\id(u)$, it lexicographically compares $(\rho_u,\id(u))$ and $(\rho^{*},\id^{*})$: 
\begin{enumerate}
\item[(a)] If $(\rho_u,\id(u)) > (\rho^{*},\id^{*})$, then $v$ appends its own ID to the list $l$ of IDs carried by the token and it allows the token to continue by passing it to some neighbor, whose ID is not yet in $l$. 
Moreover, $v$ updates its variables $\rho^{*}$ and $\id^{*}$ accordingly.
If no such neighbor is found, we backtrack by returning the appended token back to its DFS-parent, who will in-turn forward the token to another not-yet-visited neighbor  (if any) in a DFS-manner, and so forth.  
\item[(b)] Otherwise, if $(\rho_u,\id(u)) < (\rho^{*},\id^{*})$, then node $v$ simply discards this token. 
\end{enumerate}

\begin{reptheorem}{thm:dfs_many} %
  \thmDFSMany
\end{reptheorem}
\onlyLong{
\subsubsection{Proof of Theorem~\ref{thm:dfs_many}}
Since the list of already-visited IDs carried in the token is used for determining which neighbor to forward it to, it is easy to see that, for a given token $\pi$: (1) the path taken by $\pi$ does not form a cycle, and (2) $\pi$ traverses each edge on the path at most twice.
This implies the following:
\begin{claim} \label{cl:token}
The path traversed by any token is a tree and the number of times a given token is forwarded is $O(n)$.
\end{claim}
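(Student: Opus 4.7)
The plan is to recognize that each token $\pi$ performs a depth-first search via a stack maintained implicitly through backtracking, and to invoke the standard DFS structural argument. The bookkeeping differs from textbook DFS only in that the ``visited'' set is carried inside the token as the list $l$ of IDs rather than stored at the nodes, but this has no bearing on the structural argument. Under the $\kt_1$ assumption each visited node knows which of its neighbors' IDs are already in $l$, so the ``forward to some not-yet-visited neighbor or else backtrack'' rule is well-defined.

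First I would show that the collection of edges forward-traversed by $\pi$ at any point in the execution is a tree rooted at $\pi$'s origin. The algorithm only forwards $\pi$ across an edge $\{u,v\}$ in the forward direction when $\id(v) \notin l$, i.e., when $v$ has never been visited by $\pi$ before. Hence every forward traversal attaches a brand new vertex to the set already reached, which preserves both acyclicity and connectedness. Backtracking steps reuse an edge already present in the tree and do not add to this edge set. By induction on the length of $\pi$'s route, the subgraph of forward-traversed edges is therefore always a subtree of $G$ spanning the vertices whose IDs appear in $l$.

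Second, I would count: every edge of this tree is crossed by $\pi$ at most twice, once going ``down'' when the subtree rooted at the child is first entered, and once going ``up'' during backtracking out of that subtree (once a node's neighbors are all in $l$, the token is returned to its DFS-parent and never descends along that edge again). Since the tree spans at most $n$ vertices it has at most $n-1$ edges, giving a bound of $2(n-1) = O(n)$ on the total number of forwardings of $\pi$.

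The main obstacle, if any, is not combinatorial but rather verifying that the DFS discipline is respected in the presence of adversarial wake-up and asynchronous delays. However, a single token belongs to a single traversal, and its next move at each node is a deterministic function of the carried list $l$ and the local neighborhood; asynchrony affects only \emph{when} each forwarding takes place, not \emph{which} edge is used. Likewise, newly awakened nodes can only be appended to the tree the first time $\pi$ reaches them. These observations let the combinatorial argument above go through unchanged, yielding both parts of the claim.
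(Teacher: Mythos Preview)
Your argument is correct and follows essentially the same approach as the paper: the paper notes that because the token carries the list of already-visited IDs and is only forwarded to a neighbor not in that list, (1) the traversed path cannot form a cycle and (2) each edge on the path is crossed at most twice, which immediately yields the claim. Your write-up simply unpacks these two observations in more detail (the induction for the tree structure and the down/up edge accounting), and your remarks about asynchrony and adversarial wake-up are sound but not needed for the combinatorial core.
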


We first argue the correctness.
Let $r^{*}$ be the maximum rank that is generated by any node, and let $S$ denote the set of nodes that chose $r^{*}$.
Moreover, let $v^{*} \in S$ be the node with the largest ID in $S$.
Since we use lexicographic comparisons, it is straightforward to verify that $v^{*}$'s DFS-token is never discarded, i.e., case (a) occurs for every node that it visits, which guarantees that every node is woken up eventually with probability $1$, thus showing correctness.

The following claim is a byproduct of the above argument:
\begin{claim} \label{cl:wakeup}
Consider any set of nodes $S$ that are woken up by the adversary at time $t$, and suppose that the token $\pi$ originating from the node with the maximum rank in $S$ does not encounter any token originating from a node outside of $S$. Then, $\pi$ wakes up every node by time $t+2n$.
\end{claim}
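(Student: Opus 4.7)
The plan is to argue that under the stated hypothesis, the token $\pi$ executes an uninterrupted DFS traversal of the whole network, and then convert the bound on forwardings from Claim~\ref{cl:token} into a time bound. Let $u^{*}$ be the node in $S$ with the maximum rank (ties broken by ID), so that $\pi$ carries the pair $(\rho_{u^{*}},\id(u^{*}))$, which is lexicographically strictly larger than $(\rho_{u'},\id(u'))$ for every other $u' \in S$.

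The first step is to show that case~(a) of the algorithm is triggered at every node $v$ that $\pi$ visits. If $v$ has never processed a token before, then $(\rho^{*}_{v},\id^{*}_{v})$ still has its initial value and the inequality is trivial. Otherwise, the token that most recently updated $(\rho^{*}_{v},\id^{*}_{v})$ must have originated from some $u' \in S$, since by hypothesis $\pi$ never encounters a token from outside $S$; by the choice of $u^{*}$ we then have $(\rho_{u^{*}},\id(u^{*})) > (\rho^{*}_{v},\id^{*}_{v})$. Hence $v$ appends its ID to the list $l$ carried by $\pi$ and either forwards $\pi$ to a neighbor whose ID is not in $l$, or, when no such neighbor exists, backtracks by sending $\pi$ to its DFS-parent.

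The second step is a routine DFS argument: since case~(a) always applies, $\pi$ behaves exactly as a centralized DFS rooted at $u^{*}$, and since $G$ is connected it visits every node, each of which is woken up on first receipt of $\pi$ (if still asleep). By Claim~\ref{cl:token}, the path traced by $\pi$ forms a tree with at most $n-1$ edges, each traversed at most twice (once forward, once during backtracking), so $\pi$ is forwarded at most $2(n-1) < 2n$ times in total. Since $\pi$ is at a single location at any moment and each hop incurs a delay of at most one time unit ($\tau$), every node has received $\pi$ by time $t + 2n$, which proves the claim.

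The only real obstacle is the first step, i.e., ensuring that no previously stored $(\rho^{*}_{v},\id^{*}_{v})$ at a node on $\pi$'s path can come from outside $S$ and therefore cause $\pi$ to be discarded under case~(b); this is exactly the content of the ``does not encounter'' hypothesis, and once this observation is made, the remainder is standard DFS accounting combined with the message-delay bound.
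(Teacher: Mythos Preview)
Your argument is correct and follows the same approach as the paper's own proof, which is stated very tersely: the token with the maximum rank in $S$ is never discarded (case~(a) always applies), so it performs an uninterrupted DFS that visits every node, and since each of the at most $n-1$ tree edges is traversed at most twice, every node is reached within $2n$ time units. Your write-up simply unpacks these steps in more detail than the paper does.
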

The proof of Claim~\ref{cl:wakeup} is immediate because some node $v \in S$ must have sampled the maximum rank. 
As argued above, $v$'s token will never be discarded and visits every node in the network, while traversing each edge of the resulting DFS tree at most twice.

Next we show the claimed bound on the time complexity.
Recall that each node that is awakened by the adversary chooses its rank from $[n^{c}]$ uniformly at random, where $c$ is a sufficiently large constant.
By taking a union bound over all nodes, it is straightforward to see that all ranks are unique w.h.p, 
and we will silently condition on this event throughout the rest of the proof.

The next claim follows from the fact that the adversary is oblivious to the ranks. 
\begin{claim} \label{cl:rank}
Let $S$ be a set of nodes. Given a node $v \in S$, the probability that $v$ has the highest rank among all nodes in $S$ is $\frac{1}{|S|}$. 
\end{claim}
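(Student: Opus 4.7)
The plan is to exploit the obliviousness of the adversary together with symmetry among the ranks. By assumption, the adversary fixes the network, the node IDs, and its wake-up schedule (hence the set $S$) before the execution starts, so in particular before any node draws its random rank $\rho_u \in [n^c]$. Consequently, the random variables $\set{\rho_u : u \in S}$ are i.i.d.\ uniform on $[n^c]$, and they are jointly independent of the set $S$ itself.

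First, I would recall that throughout the proof of Theorem~\ref{thm:dfs_many} we silently condition on the event $\mathcal{E}$ that all $|A| \le n$ ranks drawn by adversarially-awoken nodes are distinct, which was already shown to hold w.h.p.\ by a union bound over the $\binom{|A|}{2}$ collisions, each occurring with probability at most $1/n^c$. Under $\mathcal{E}$, the ranks of the nodes in $S$ admit a well-defined strict ordering.

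Next, I would use a symmetry argument: conditioned on the (unordered) multiset of values $\set{\rho_u : u \in S}$ and on $\mathcal{E}$, the assignment of these values to the nodes of $S$ is a uniformly random bijection, because the underlying draws are i.i.d.\ uniform and the adversary is oblivious to them. Hence for any fixed $v \in S$, the event that $\rho_v$ is the maximum among $\set{\rho_u : u \in S}$ has probability exactly $1/|S|$. Taking expectation over the multiset preserves this, and then removing the conditioning on $\mathcal{E}$ changes the probability by at most $\Pr[\neg\mathcal{E}] = O(n^{-c+2})$, which is absorbed into the w.h.p.\ statement used elsewhere in the proof.

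The only subtlety, and hence the ``hard part'' (really just a point requiring care), is to justify that the adversary's choice of $S$ is independent of $\set{\rho_u}_{u \in S}$. This is where the assumption from Section~\ref{sec:prelim} that the oblivious adversary commits to its wake-up schedule in advance of the execution is used: since $S$ is determined before any rank is drawn, $S \perp \set{\rho_u}_{u \in V}$, and the symmetry argument above applies verbatim. No other ingredient is needed.
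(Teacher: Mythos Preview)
Your proposal is correct and follows essentially the same reasoning as the paper, which simply asserts that the claim ``follows from the fact that the adversary is oblivious of the ranks.'' You have fleshed out exactly that one-line justification: the set $S$ is fixed before the i.i.d.\ uniform ranks are drawn, so by symmetry (and under the already-assumed conditioning on distinctness) each node in $S$ is equally likely to hold the maximum rank.
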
 

Let $t_0$ be the earliest time that the adversary wakes up some set of nodes $S_0$, and let $v_0 \in S_0$ be the node that samples the maximum rank among nodes in $S_0$; recall that the adversary is oblivious to the chosen ranks.
If $v_0$'s token does not encounter any tokens from nodes outside of $S_0$, then  Claim~\ref{cl:wakeup} ensures every node wakes up by time $t_0+2n$. 
To prevent this from happening, the adversary must wake up another nonempty set of nodes $S_1$ (disjoint from $S_0$) at some time $t_1<t_0+2n$, i.e., each node in $S_1$ has its token in transit by time $t_1$.
According to our algorithm, the only way that $v_0$'s token is discarded is if the rank of some node $v_1 \in S_1$ is greater than $\rho_{v_0}$.
By Claim~\ref{cl:rank} this happens with some probability at most $\gamma_1=\min\Set{1,\frac{|S_1|}{|S_0|}}$.

We now apply Claim~\ref{cl:wakeup}, this time to the set $S_0 \cup S_1$ and use $v_1$ to denote the node who produced the maximum rank in $S_0 \cup S_1$. Again, this tells us that the adversary must wake up nodes in a disjoint set $S_2$ such that $v_1$'s token was discarded before time $t_1+2n$, and analogously to above, the probability that this happens is at most
$\gamma_2=\min\Set{1,\frac{|S_2|}{|S_0 \cup S_1|}}$, and so forth.
This yields sequences $(S_i)_{i\ge0}$ and $(\gamma_i)_{i\ge0}$ such that, for every $i \ge 1$,
\begin{align} \label{eq:size_gamma}
\gamma_i = \min\Set{1,\frac{|S_i|}{\big|\bigcup_{j=0}^{i-1}S_j\big|}}
= \min\Set{1,\frac{|S_i|}{\sum_{j=0}^{i-1}|S_j|}}.
\end{align}

Now suppose the adversary tries to repeat this process of stopping the token with the current maximum rank $k$ times where $k \ge 8 \log n$ \footnote{All logarithms are assumed to be of base $e$ unless stated otherwise.}.
Our goal is to show that this happens with probability at most $\frac{1}{n^{2}}$.
Let $I \subseteq \set{1,\ldots,k}$ be the set of indices such that 
\begin{align} 
\forall i \in I\colon \gamma_i \ge 2 - \sqrt{e}. \label{eq:gamma}
\end{align}

To simplify the presentation, we assume without loss of generality that $I = \set{1,\dots,|I|}$, and
we also define the partial sum $P_\ell = \sum_{i=0}^{\ell}|S_i|$.
For every $i \in I$, \eqref{eq:size_gamma} and \eqref{eq:gamma} imply that
$|S_i| \ge \lt(2-\sqrt{e}\rt) P_{i-1},
$
and thus
\begin{align}
\forall i \in I\colon P_{i} &\ge \lt(3-\sqrt{e}\rt) P_{i-1}. \label{eq:sumlb}
\end{align}
Applying \eqref{eq:sumlb} to each of the partial sums for $i=1,\ldots,|I|$, and recalling that $|S_0| \ge 1$, yields
\begin{align}
P_{|I|} &\ge \lt(3-\sqrt{e}\rt)^{|I|}. 
\end{align}
Since $S_1,\dots,S_i$ are pairwise disjoint it holds that $P_{|I|} \le n$. 
Consequently, it must be that $|I| < 4\log n$, as otherwise the right-hand side is at least $n^{4 \log(3-\sqrt{e})} > n$.
In other words, \eqref{eq:gamma} holds for less than $4\log n$ indices out of the $k\ge 8\log n$, and hence there is a set of indices $J \subseteq [k]$ of size 
\begin{align}
|J| = k - |I| \ge 4\log n,  \label{eq:size_j}
\end{align}
such that $\gamma_j < 2 - \sqrt{e}$, for every $j \in J$.

Let $E_i$ be the event that the maximum rank of nodes in $S_i$ is greater than the maximum rank in $\bigcup_{j=0}^{i-1}S_j$. 
We point out that $\bigwedge_{i=1}^{k} E_i$ is a necessary condition for the adversary to be able to repeat the process of successfully prolonging the termination time by discarding the current maximum rank $k$ times.
From the above, we know that $\Pr\lt[ E_j \rt] \le \gamma_j < 2-\sqrt{e}$, for each $j \in J$, and it is straightforward to see that the events $(E_j)_{j \in |J|}$ are mutually independent.
Combining these observations, we have
\begin{align}
\Pr\lt[ \bigwedge\nolimits_{i=1}^{k} E_i \rt] 
  \le
\Pr\lt[  \bigwedge\nolimits_{j\in J} E_j\rt] 
=
\prod_{j\in J}
\Pr\lt[  E_j\rt] 
\le
\lt(2-\sqrt{e}\rt)^{|J|}
\le
\frac{1}{n^2},
\end{align}
where we have used \eqref{eq:size_j} in the final inequality.
We conclude that the time complexity is $O\lt( n\log n \rt)$ with high probability.

Finally, we show the claimed bound on the message complexity.
For a token $\pi$, we use $P_\pi$ to denote the (directed) path taken by $\pi$ and  $|P_{\pi}|$ to denote the number of edges in $P_\pi$. 
Hence, the number of messages sent, denoted as $M$, is equal to $\sum_{\pi}|P_{\pi}|$ where the sum is taken over all tokens generated in the execution.  
From Claim~\ref{cl:token}, we know that $P_\pi$ is a DFS traversal of a tree. 
This implies there are at least $\frac{|P_\pi|}{2}$ distinct nodes visited by $\pi$\footnote{This follows since a tree of $t$ nodes has $t-1$ edges and a path corresponding to a DFS traversal of this tree has $2(t-1)$ edges.}.
With the exception of possibly one such node that may have discarded the token, the rest of these nodes in $P_\pi$ forwarded $\pi$. 
Hence, the number of distinct nodes that forward $\pi$, denoted as $\nu(\pi)$, is at least $\frac{|P_\pi|}{2}-1$. 
As a result, we have 
\[
M = \sum\nolimits_{\pi}|P_{\pi}| \leq \sum\nolimits_{\pi} (2\nu(\pi)+1) = O\lt(\sum\nolimits_{\pi} \nu(\pi)\rt). 
\]
To prove the message complexity bound, we aim to show that $\hat{M}:=\sum_{\pi} \nu(\pi)$ is $O(n \log n)$ with high probability.
Let $\tau(v)$ denote the number of distinct tokens that were forwarded by node $v$. 
It is straightforward to see that $\hat{M}$ is equivalent to $\sum_{v} \tau(v)$ where the summation is taken over all nodes. 
Thus, it remains to show that $\tau(v)$ is $O(\log n)$ w.h.p for each node $v$, which implies the sought upper bound on the message complexity.
\begin{claim} \label{cl:forward}
Each node forwards $O(\log n)$ distinct tokens w.h.p.
\end{claim}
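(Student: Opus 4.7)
The plan is to reduce Claim~\ref{cl:forward} to the classical fact that a random permutation of length at most $n$ has $O\lt(\log n\rt)$ left-to-right maxima with high probability. Fix a node $v$ and let $\tau(v)$ be the number of distinct tokens it forwards. Since $v$ only forwards a token when its $(\text{rank}, \text{id})$ pair strictly exceeds the currently stored maximum $(\rho^{*},\id^{*})$, the tokens forwarded by $v$ arrive in order of strictly increasing rank values. As a first step I would verify that all ranks sampled by the awake sources are pairwise distinct with high probability via a union bound over the $O(n^2)$ pairs of sources, each pair colliding with probability $O\lt(1/n^c\rt)$, which is controllable by choosing the constant $c$ large enough.

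Conditional on this event, $\tau(v)$ equals the number of prefix maxima in the chronological sequence of ranks of the tokens arriving at $v$. The central step is to argue that, because the adversary is oblivious and the algorithm's behavior depends only on the relative ordering of ranks (every rule uses a lexicographic comparison), the rank assignment on the set $A$ of awakened sources may be treated as a uniformly random total order. A symmetry argument then yields that for each $i \in [|A|]$, the probability that the source with the $i$-th highest rank in $A$ has its token arrive at $v$ strictly before any token from the $i-1$ higher-ranked sources is at most $1/i$, and these $|A|$ indicator events are mutually independent by standard properties of uniform permutations. Summing gives $\EE\lt[\tau(v)\rt] \le \sum_{i=1}^{|A|} 1/i = O\lt(\log n\rt)$, and a Chernoff-type bound for sums of independent indicators with means $1/i$ yields $\tau(v) = O\lt(\log n\rt)$ with probability $1 - 1/n^{\Omega(1)}$. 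A union bound over the $n$ nodes completes the proof of Claim~\ref{cl:forward}, and the message complexity bound $M = O(n\log n)$ w.h.p.\ follows from the previously established relation $M = O\lt(\sum_v \tau(v)\rt)$.

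The main obstacle lies in the symmetry step: the arrival order at $v$ is not a priori independent of the rank assignment, since a node along a DFS path discards tokens whose rank is dominated by a previously-seen one, thereby preempting certain execution paths and altering the order in which tokens from other sources subsequently reach $v$. Making the symmetry rigorous requires either a coupling argument showing that swapping the ranks of any two sources preserves the order in which their tokens first reach $v$, or a careful principle-of-deferred-decisions argument that reveals each rank only at the moment the algorithm's forwarding decisions first depend on it; either way, what is being exploited is that the DFS control flow reacts only to rank comparisons, never to the underlying values.
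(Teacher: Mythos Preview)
Your plan mirrors the paper's proof (Appendix~\ref{app:kt1_algo}) almost exactly: order the tokens by their arrival at $v$, note that the forwarded tokens are precisely the running maxima, invoke Claim~\ref{cl:rank} to get probability $1/i$ for the $i$-th indicator, assert independence, apply Chernoff, and union-bound over the $n$ nodes. The paper is no more detailed than this and leans on the least-element-list references of \cite{cohen1997size,DBLP:journals/dc/KhanKMPT12} for the template.

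You go beyond the paper in one useful respect: you explicitly flag that the arrival order at $v$ is \emph{not} a priori independent of the rank assignment, since upstream nodes discard dominated tokens and thereby alter which tokens ever reach $v$. This concern is genuine and the paper's appendix does not address it. However, neither of your proposed resolutions lands as stated. The swap coupling (``swapping the ranks of any two sources preserves the order in which their tokens first reach $v$'') is false: a high-rank token can block a medium-rank token at an intermediate node, and swapping ranks can undo that block, changing both the set of tokens that reach $v$ and their relative order. Likewise, the assertion that your indicators $W_i$ are mutually independent ``by standard properties of uniform permutations'' fails here precisely \emph{because} the arrival process is rank-dependent; one can build three- and four-source instances in which $W_2$ and $W_3$ are strictly negatively correlated. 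The paper's parallel claim that its $X_i$ are independent with $\Pr[X_i=1]=1/i$ is equally unsubstantiated (and in fact $\Pr[X_2=1]=1/2$ can fail). What both arguments are missing is a stochastic-domination or negative-association statement that would let a Chernoff-type bound go through despite the dependence; your deferred-decisions idea is the more promising route, but carrying it out is nontrivial and neither your proposal nor the paper's proof supplies it.
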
 
The proof of this claim follows from Claim~\ref{cl:rank} and a standard Chernoff bound, along similar lines as the analysis for ``least element lists'' (see \cite{cohen1997size,DBLP:journals/dc/KhanKMPT12}). 
We include a full argument in Appendix~\ref{app:kt1_algo} for completeness.
This completes the proof of Theorem~\ref{thm:dfs_many}.
}

\subsection{Achieving Wake-Up in $\tilde O(\arad)$ Synchronous Rounds} \label{sec:kt1_algo_sync}
 
We now describe an algorithm that achieves an (optimal) time complexity that asymptotically matches the awake distance, while sending only $o(m)$ messages in sufficiently dense graphs.
Here, we assume the synchronous model, where the computation is structured into lock-step rounds. 
Every node performs a local computation step at the start of each round $r$, and decides which messages to send.
In particular, any message sent in round $r$ is guaranteed to be delivered by the start of round $r+1$.
Just as in the asynchronous model, the adversary determines the set of initially-awake nodes and also determines, for all other nodes the round number in which they wake up, unless they are already awoken earlier due to receiving a message. 
We emphasize that nodes do \emph{not} have access to a global clock, i.e., when a node is awoken by the adversary, it does not know how many rounds have passed since the first node started executing the algorithm.

\subsubsection{Algorithm $\fastwakeup$}
Each node can either be in one of the following states, namely \emph{asleep}, \emph{passive}, \emph{activated}, and \emph{deactivated}, whereby only awake nodes can be in either of the latter three states.
Moreover, once a node becomes \emph{deactivated}, it remains in this state until termination.
Every node that is awoken by the adversary marks itself as \emph{activated}, which, intuitively speaking means that it will either initiate the construction of a BFS tree or simply broadcast a message to all its neighbors at some point unless it becomes \emph{deactivated} before doing so.
Then, every activated node $u$ starts executing the following procedure: 
\begin{itemize} 
\item \textbf{Sampling Step:} Node $u$ becomes a \emph{root} with probability $\sqrt{\frac{\log n}{n}}$. Note that $u$ remains \emph{activated} even if it is not sampled.
\item \textbf{BFS Tree Construction:}
Each root $u$ initiates the construction of a BFS tree of depth $3$, by using the following ``back-and-forth'' technique of \cite{opodis2024}: $u$ sends a message to all its neighbors.  Each neighbor joins $u$'s tree as a \emph{level-$1$ node} and, in return, sends $u$ the list of IDs of its own neighbors. 
Upon receipt of these lists from all neighbors, $u$ can locally compute a set of BFS-tree edges, called $S_2$, that connect the first and second level of its BFS tree, which it sends to its neighbors. 
Node $u$ also attaches its own ID (for distinguishing messages from different roots) and a distance counter to each of these \emph{BFS tree construction messages}.
The latter is incremented on each tree level as the message traverses the tree.
In the next round, the level-$1$ nodes of $u$'s tree send messages across their incident edges in $S_2$ that they have learned from $u$. %
As a result, each distance-2 neighbor $v$ of $u$ will join the tree as a \emph{level-$2$ node}.
To determine the next layer of the tree, we proceed analogously as for the first level: 
Node $v$ sends the list of its neighbors' IDs via its parent up to the root $u$, thus enabling $u$ to locally compute a set of BFS-tree edges, called $S_3$, between its second and third tree level.
The BFS tree construction stops after the level-$2$ nodes have sent messages over their incident BFS-edges (in $S_3$) to their respective children on level 3, thereby recruiting the \emph{level-$3$ nodes} to $u$'s tree. %

An \emph{asleep} node that wakes up upon receiving a BFS construction message becomes \emph{passive}.

At the end of this BFS tree construction, the root and all level-$1$ and level-$2$ nodes  change their status to \emph{deactivated}.
On the other hand, any level-$3$ node that was not yet \emph{deactivated} becomes \emph{activated}. As mentioned above, an already-\emph{deactivated} node never becomes \emph{activated} again, and thus remains deactivated even after joining a tree on level $3$. 
A node $w$ may participate in the construction of multiple BFS trees simultaneously. In that case, $w$ simply performs the above procedure in parallel for all such trees, which is possible without slowdown in the $\local$ model.
Note that $w$ becomes \emph{deactivated} if it is on level $1$ or level $2$ in at least one of these tree.

\item \textbf{Broadcast Step:} If a node has remained \emph{activated} for $9$ rounds, then it broadcasts an $\msg{\texttt{activate!}}$ message to its neighbors at the start of its $10^{\text{th}}$ round and becomes \emph{deactivated}.
Any node $w$ that is awoken by an $\msg{\texttt{activate!}}$ message immediately becomes \emph{activated} and tries to sample itself as a root by following the procedure described above. 
If $w$ simultaneously receives BFS tree construction messages, then it simply ignores the $\msg{\texttt{activate!}}$ message and joins the corresponding BFS trees accordingly.

\end{itemize}

\begin{reptheorem}{thm:kt1_fw}
\thmFW
\end{reptheorem}
\subsubsection{Proof of Theorem~\ref{thm:kt1_fw}}

The following lemmas are straightforward from the description of the algorithm. We postpone the proofs to Appendix~\ref{app:kt1_algo_sync}.
\newcommand{\lemDeactivated}{
If a node $u$ changes its status to deactivated in some round $r$, then every neighbor of $u$ is awake at the start of round $r$. 
}
\begin{lemma} \label{lem:deactivated}
\lemDeactivated
\end{lemma}

\newcommand{\lemRound}{
The BFS tree construction procedure requires $9$ rounds. 
}
\begin{lemma} \label{lem:9round}
\lemRound
\end{lemma}

\newcommand{\lemActive}{
If a node wakes up in round $r$, then it becomes deactivated at the latest at the end of round $r+10$.
}
\begin{lemma} \label{lem:active}
\lemActive
\end{lemma}

\begin{lemma} \label{lem:fwTime}
Algorithm $\fastwakeup$ wakes up all nodes in $10\arad$ rounds.
\end{lemma}
\begin{proof}
Note that every node $u$ that is initially asleep must have an initially awake node $v$ within at most $\arad$ hops. 
It follows from Lemma~\ref{lem:active} and Lemma~\ref{lem:deactivated} that after at most $10$ rounds, all of $v$'s neighbors are awake. 
Hence, after at most $10\arad$ rounds, $u$ is awake. 
Finally, observe that nodes that are awoken by the adversary at a later point become active, which, however, does not slow down any BFS constructions of the initially-awake nodes that are already in-progress.
\end{proof}

\begin{lemma} \label{lem:fwMessages}
The message complexity of algorithm $\fastwakeup$ is $O\lt(n^{3/2}\sqrt{\log n}  \rt)$ with high probability.
\end{lemma}

\begin{proof}
We separately derive the bound for the two types of messages sent by the algorithm, namely, the BFS construction messages and the $\msg{\texttt{activate!}}$ messages sent during the broadcast step. 

We first argue that at most $O\lt( n^{3/2}\sqrt{\log n} \rt)$ messages are sent by the nodes constructing the BFS trees. 
Since the number of \emph{activated} nodes is bounded by $n$, at most $O\lt(  \sqrt{n\log n}\rt)$ of them become roots, in expectation, which is an upper bound on the expected number of distinct trees constructed throughout the execution of the algorithm. 
Note that these messages are only sent along the BFS-edges, and, for constructing a new level, a message is sent over each BFS-edges at most three times.
Thus, we can upper-bound the expected number of messages for constructing a single BFS tree by $O\lt( n \rt)$. 
This yields $O\lt( n^{3/2}\sqrt{\log n} \rt)$ expected BFS construction messages overall.   
Since each node is sampled as a tree root independently, a standard Chernoff bound argument shows that this also happens with high probability. 

Next, we need to bound the number of $\msg{\texttt{activate!}}$ messages that are sent throughout the execution of the algorithm.  
We aim to show that any node $v$ receives less than $10\sqrt{n \log n}$ $\msg{\texttt{activate!}}$ messages with high probability. 
Then, by taking a union bound over all nodes, it follows that there are at most $O\lt( n^{3/2}\sqrt{\log n} \rt)$ $\msg{\texttt{activate!}}$ messages in total.

Consider any node $v$ that has $\ell$ neighbors who execute the sampling step, and suppose that $\ell \ge 8 \sqrt{n \log n} + 1$, as otherwise we are done. 
Order these $\ell$ neighbors according to the earliest round in which they become \emph{activated}. 
Define $L$ to be the set containing the first $8 \sqrt{n \log n}$ nodes\footnote{For simplicity, we treat $\sqrt{n \log n}$ as an integer.} in this order, and let $t_*$ be the latest time at which a node in $L$ becomes \emph{activated}.
We use $\bar{L}$ to denote the remaining $\ell - |L| \ge 1$ neighbors of $v$ which execute the sampling step.
Recall that, by the sampling step, each node in $L$ tries to become root with probability $\sqrt{\frac{\log n}{n}}$.
Thus, the event $\bad$, which we define to occur if none of the nodes in $L$ becomes root, happens with probability at most 
\[
\lt( 1 -  \sqrt{\frac{\log n}{n}}\rt)^{|L|}
\le
\lt( 1 -  \sqrt{\frac{\log n}{n}}\rt)^{8 \sqrt{n \log n}}
\le 
\frac{1}{n^{4}}.
\]
Conditioned on event $\neg \bad$, let $w \in L$ be a neighbor of $v$ who became a root and note that this must happen at the latest by round $t_*$. 
It follows that $w$ initiates the construction of a BFS tree, which completes in round $t_*+9$ by Lemma~\ref{lem:9round}.
Recall that any $x \in \bar{L}$ becomes \emph{activated} in some round $t' \ge t_*$.
Moreover, $x$ has a distance of at most $2$ from $w$, and thus it follows that $x$ will join $w$'s tree as a level-$1$ or level-$2$ node, causing it to become deactivated before it can execute the broadcast step.
Consequently, $v$ does not receive any $\msg{\texttt{activate!}}$ messages from the nodes in $\bar{L}$.
Since event $\neg \bad$ happens with high probability, we can remove the conditioning to conclude that $v$ receives at most $|L|=O\lt( \sqrt{n\log n} \rt)$ $\msg{\texttt{activate!}}$ messages, as required.
\end{proof}

\subsubsection{Algorithm $\pfw$} \label{sec:pfw}

Algorithm~$\fastwakeup$ can be viewed as the final two phases of a more general algorithm, called $\pfw$, which performs $\Theta\lt( \log n \rt)$ phases of BFS tree explorations of decreasing depth. 
Algorithm~$\pfw$ is similar in spirit to the sparse cover construction of Elkin~\cite{elkin2006faster} that was originally designed and analyzed for the synchronous $\kt_0$ $\congest$ model, where all nodes wake-up simultaneously.

Recall that every node that is awoken by the adversary marks itself as \emph{activated}. 
Each \emph{activated} node $u$ performs $K=\Theta\lt( \log n \rt)$ phases.
Upon entering phase $p=1,\dots,K-1$, node $u$ samples itself as a \emph{phase-$p$ root} with probability $\gamma_p = \frac{\sqrt{\log n}}{n^{1-p/K}}$.
If $u$ was not sampled in phase $p \le K-2$, it waits $\Theta\lt( d_p^2 \rt)$ rounds, where $$
d_p=2+K-p,$$ before sampling itself for the next phase, by increasing the probability accordingly.
On the other hand, if $u$ is successfully sampled as a phase-$p$ root, it initiates the construction of a BSF tree of depth $d_p$
analogously to the message-efficient ``back-and-forth'' technique described in algorithm $\fastwakeup$, i.e., by iteratively convergecasting and broadcasting the entire neighborhood information between the (current) frontier nodes and the root.
Every node $v$ that joins $u$'s tree on a level $l \le d_p-1$  in some round $r$, changes its status to \emph{deactivated} upon the completion of all tree constructions that it is currently participating in.
Moreover, we say that $v$ becomes part of the \emph{kernel} of $u$'s tree at the start of round $r+1$.
Note that the root itself also becomes \emph{deactivated} upon completion.
On the other hand, if a node joins the tree on level $d_p$ (i.e., as a leaf), then it changes its status to \emph{activated} (or keeps its current \emph{activated} status). 
If, at the start of round $r$, a node $w$ knows that its neighbor $v$ is deactivated, then it also deactivates the edge $\set{w,v}$, which means that $w$ does not send any message across this edge in any subsequent round $r' \ge r$.

If a node remains active throughout phases $1,\dots,K-1$ and did not become a root in any of them, then it enters the final phase $p=K$, where it simply broadcasts an  $\msg{\texttt{activate!}}$ message to all its neighbors and deactivates itself, analogously to algorithm $\fastwakeup$. 

Finally, we provide more details regarding how the status of a node may change upon wake-up and the receipt of messages.
First of all, we point out that a \emph{passive} node can only become \emph{deactivated}, and a \emph{deactivated} node never changes its status again. 
If a node $v$ is woken up by the adversary and does not receive any simultaneous messages, its status is \emph{activated}.
If $v$ is woken up by a tree message that causes it to join the tree as a leaf, then it also becomes \emph{activated}, otherwise it is part of the tree's kernel and becomes \emph{passive}.
It can happen that $v$ receives messages from multiple distinct trees (even in the same round), and it may join some tree $T$ as leaf and join the kernel of another tree $T'$, i.e., at some depth less than $d_p$.
In that case, $v$ simply becomes \emph{passive}.
The intuition behind this rule is that $v$ knows that all its neighbors will join $T'$ (and hence are woken up), and thus there is no reason for $v$ itself to become \emph{active}.
When $v$ is awoke due to receiving a tree as well as some $\msg{\texttt{activate!}}$ messages, then it discards the $\msg{\texttt{activate!}}$ messages and updates its status according to the rules described above.

For our analysis, we say that a message is a \emph{phase-$p$ tree message} if the message is sent due to the construction of a BFS tree rooted at a node that was sampled as a root in phase $p$.

\begin{lemma} \label{lem:deact_no_msg}
If a node $w$ becomes \emph{deactivated} in some round, then $w$ does not receive any message in any later round.
\end{lemma}
\begin{proof}
We need to show that each awake and non-deactivated node $v$ knows, at the start of round $r$, which of its neighbors have become deactivated at the end of round $r-1$.
We perform a case distinction over the possible ways that some neighbor $w$ of $v$ may become deactivated.
Clearly, $v$ knows that $w$ is deactivated if it has sent an $\msg{\texttt{activate!}}$ message in round $r-1$.
It remains to show the statement for the case where $w$ deactivates itself after the completion of the construction of some phase-$p$ BFS tree $T$ rooted at some node $u$, of which it is part of the kernel, as leaves remain activated.
Since $w$ has a distance of at most $d_p-1$ from $u$, it follows that all (non-deactivated) neighbors also join $T$.
The statement is immediate if $v$ is either $w$'s parent or its child.
To see why the lemma holds in the case where $\set{w,v}$ is not an edge of $T$, recall that each node in the tree learns the entire topology. 
Since we assume $\kt_1$, this is sufficient for $v$ to deduce which of its neighbors are in $T$ as well as their distance from the root. 
Thus, $v$ knows that $w$ is not a leaf, and hence will deactivate the edge $\set{w,v}$ accordingly.
\end{proof}

\begin{lemma} \label{lem:pfw_bfs_time}
The construction of a BFS tree initiated by a phase-$p$ root takes $O(d_p^2) = O(\log^2 n)$ rounds.
\end{lemma}
\begin{proof}
It is straightforward to see that when the BFS tree has been constructed up to some depth $d$, it will take $O(d)$ additional rounds to extend the BFS tree to depth $d+1$, according to the description of the algorithm. Hence, constructing the entire tree up to  depth $d_p$ requires $O(d_p^2)$ rounds. The lemma follows since $d_1 = \Theta\lt( \log n \rt)$ and $d_1 > \dots > d_{K-1}$. 
\end{proof}

\begin{lemma} \label{lem:pfw_deact_time}
If a node wakes up in round $r$, then it becomes \emph{deactivated} at the latest after $O(\log^3 n)$ rounds.
\end{lemma}
\begin{proof}
When a node is awake and is \emph{passive}, then it has joined a tree and will become deactivated upon the completion of the tree construction. 
When a node is awake and is \emph{activated}, it executes at most $K$ phases before becoming \emph{deactivated}. In each phase $p$, it either constructs a BFS tree or it waits $\Theta(d_p^2)$ rounds before proceeding to the next phase if it has not yet been deactivated. The statement follows from Lemma~\ref{lem:pfw_bfs_time} and since $K=\Theta(\log n)$. 
\end{proof}

\begin{lemma} \label{lem:pfw_time}
Algorithm $\pfw$ wakes up all nodes in $O\lt( \arad \log^3n \rt)$ rounds.
\end{lemma}
\begin{proof}
Note that when a node is deactivated at round $r$, all its neighbors are awake by the end of $r+1$. 
Let $u$ be an \emph{asleep} node. Then $u$ has an initially awake node $v$ within its $\arad$ neighborhood. From Lemma~\ref{lem:pfw_deact_time}, it takes $r = O(\log^3n)$ rounds for $v$ to be deactivated, and all its neighbors are awake by round $r+1$.
Hence, after at most $O(\arad \log^3n)$ rounds, $u$ is awake. 
\end{proof}

\begin{lemma} \label{lem:pfw_tree_msgs}
With high probability, the following hold:
Each node $v$ receives tree messages from at most $O(\log n)$ distinct phase-$p$ trees, for each $p \in [K]$.
Moreover, $v$ receives at most $O(\log n)$ $\msg{\texttt{activate!}}$ messages.
\end{lemma}
\begin{proof}
Let $Y_p$ be the number of distinct phase-$p$ trees that contain $v$. 
For $p=1$, we compute the expected number of distinct phase-$1$ trees is at most $n \gamma_1 = n \cdot \frac{\sqrt{\log n}}{n^{1-1/K}} = \sqrt{\log n} \cdot n^{1/K} = O(\sqrt{\log n})$. Using standard Chernoff bound, we can show that $Y_1$ is $O(\log n)$ with high probability.
Hence, a node can receive tree messages from at most $Y_1 =O(\log n)$ phase-$1$ trees. 

Next, we upper-bound the number of phase-$p$ trees from which $v$ receives a message:
Let $p>1$, and let $\ell$ be the number of neighbors in the $d_p$-neighborhood of $v$ that become \emph{activated} at some point in the execution. 
Moreover, we define $h_p = c\sqrt{\log n} \cdot n^{1-\frac{p-1}{K}}$, for a suitable constant $c>1$. 
If $\ell < h_p$, then the expected number of phase-$p$ trees that contains $v$ is at most $h_p \cdot \gamma_p = O(\log n \cdot n^{\frac{1}{K}}) = O(\log n)$. 
By using a standard Chernoff bound, we can show that it is also $O(\log n)$ with high probability. 
Now, we consider the case where $\ell > h_p$. 
Let $L$ be the set of the first $h_p$ \emph{activated} nodes in $v$'s $d_p$-neighborhood, ordered according to the earliest round in which they become \emph{activated}. 
We use $t^*$ to denote the latest round at which a node in $L$ starts their respective phase $(p-1)$. 
The probability that none of the nodes in $L$ is sampled to become a root in phase $p-1$ (and hence proceed to phase $p$) is at most 
\begin{align}
    \lt (1-\gamma_{p-1} \rt ) ^ {h_p} \le e^{-c\log n}=\frac{1}{n^c}.
\end{align}
Hence, with high probability, at least one node $w^*$ in $L$ is sampled as a root and initiates the construction of a phase-$(p-1)$ tree of depth $d_{p-1}$ by round $t^*$. 
This implies that at most $h_p$ nodes in $L$ can be a root of a phase-$p$ tree that contains $v$.
Since node $v$ has a distance of at most $d_p= d_{p-1}-1$, it is part of the kernel of $w^*$'s tree $T_w$, and hence becomes deactivated at the latest by round $r^* = t^*+ \Theta\lt( d_{p-1}^2 \rt)$, i.e., once the construction of $T_w$ is complete. 
Recall from Lemma~\ref{lem:deact_no_msg} that $v$ does not receive any message in any round after $r^*$. 
Thus, it remains to argue that $v$ does not receive messages from too many phase-$p$ trees prior to $r^*$.
Since any \emph{activated} node in the $d_p$-neighborhood $N_p$ of $v$ that is not in $L$ is activated no earlier than round $t^*$, it follows that any node in $N_p \setminus L$ reaches phase $p-1$ at the earliest in round $t^*$ and hence samples to become a phase $p$ root in round $r^*+1$, which, again by Lemma~\ref{lem:deact_no_msg} guarantees that $v$ can only receive a phase-$p$ tree message from trees rooted at nodes not in $L$.
In expectation, there are at most $h_p \cdot \gamma_{p} = O\lt( \log n \rt)$ such phase-$p$ trees, and the same bound holds with high probability by a Chernoff bound.
We take a union bound over all nodes, which completes the first part of the lemma.

Finally, to bound the number of $\msg{\texttt{activate!}}$ messages that $v$ may receive, we proceed analogously:
Consider of neighbors of $v$ that reach phase $K-1$, i.e., without becoming a root in an earlier phase. 
Let $N$ denote this set.
If there are at most $h_{K} = O(\log n)$ of them, then we are done, as this number is an upper bound on the number of nodes that may send an $\msg{\texttt{activate!}}$ to $v$.
Otherwise, if $h_K = \Omega\lt( \log n \rt)$, we order them in increasing order with respect to the round in which they became activated, and focus on the $h_{K}$ earliest ones.
Similarly to above, we can show by using a Chernoff bound that at least one of them, say $w$, must have become a phase-$(K-1)$ root with high probability. 
Since $d_{K-1}=3$, it follows that all other nodes in $N$ become part of the kernel of $w$'s tree and hence are deactivated before being able to send an $\msg{\texttt{activate!}}$ message.
By taking a union bound over all nodes, the lemma follows.
\end{proof}

\begin{reptheorem}{thm:pfw}
\thmPFW
\end{reptheorem}
\begin{proof} 
It is straightforward to verify that the algorithm ensures that every node wakes up eventually, since awake nodes that are unsuccessful in sampling to become a root, will simply broadcast an $\msg{\texttt{activate!}}$ message to all their neighbors.
Moreover, the time complexity bound follows directly from Lemma~\ref{lem:pfw_time}.
Lemma~\ref{lem:pfw_tree_msgs} tells us that each node $v$ receives messages from at most $O\lt( \log n \rt)$ distinct phase-$p$ trees with high probability. 
According to the way tree messages are sent, a node receives at most $O(\log n)$ messages during the construction of an individual tree, i.e., $O(1)$ messages for each tree level. 
As there are $\Theta\lt( \log n \rt)$ phases, this amounts to $O(\log^3 n)$ messages in total per node.
Moreover the number of  $\msg{\texttt{activate!}}$ messages is bounded by $O(\log n)$ per node.
It follows that the total number of messages is $O\lt( n \log^3n \rt)$.
\end{proof}

\section{Asynchronous Algorithms with Advice in $\kt_0$ $\congest$} \label{sec:advice}

In this section, we consider advising schemes in the $\kt_0$ $\congest$ model, where an oracle first observes the entire graph and provides advice to each node before the start of the execution.  
Recall from Section~\ref{sec:intro} that under the $\kt_0$ assumption, each node is unaware of its neighbors IDs initially, and addresses its neighbors by using integer port numbers instead. 
In particular, the adversary first chooses the network and determines the port mapping at each node, which can then be used by the oracle when equipping nodes with advice. 
However, the oracle does not know the initially-awake nodes.

Throughout this section, we use $\deg_T(v)$ to denote the degree of $v$ restricted to a subgraph $T$, and use $\advice(v)$ to refer to $v$'s advice string.

\onlyLong{
\begin{reptheorem}{thm:scheme_basic} 
  \thmSchemeBasic
\end{reptheorem}
}

\subsection{Proof of Theorem~\ref{thm:scheme_broadcast}(A)}

    Let $T$ be a BFS tree rooted at an arbitrary node.
    When a node $v$ has at most $\sqrt{n}$ neighbors in $T$, we say it is a \emph{low degree tree node}, and its advice is the list of port numbers $\port_v(u)$ that lead to the neighbor $u$ of $v$ in $T$.
    If, on the other hand, $v$ has more than $\sqrt{n}$ neighbors in $T$, then we only give it a single bit set to $1$ as advice and call it a \emph{high degree tree node}.
    The bound on the maximum length of advice is immediate. 
    Moreover, the sum of the length of advice over all nodes is bounded by $\sum_v \deg_T(v) \log n$ which is $O(n\log n)$, implying the claimed bound on the average length.

Equipped with the advice, the nodes proceed as follows:
Every low degree tree node $v$ sends a message to each incident edge identified by the port numbers specified in $\advice(v)$, whereas every high degree tree node will simply broadcast over all its incident edges.

Since wake-up messages are propagated along the edges of a BFS tree, it is clear that all nodes wake up in $O(D)$ time.
It remains to show the message complexity bound:
    Let $\beta$ be the number of high degree tree nodes, and observe that $\beta = O(\sqrt{n})$, since $T$ has $n-1$ edges. 
    Therefore, the total number of messages sent is bounded by $\beta \cdot n + (n-\beta)\sqrt{n}$, implying a message complexity of $O\lt(n^{3/2}\rt).$

\subsection{Proof of Theorem~\ref{thm:scheme_cen}(B): Achieving a maximum advice length of $O(\log n)$}
The reason why we incurred a polynomial maximum advice length in Theorem~\ref{thm:scheme_broadcast} is that we somehow needed to encode the port numbers of the children in the BFS tree of a node, which may be hidden among a large number of other neighbors in the network.
To overcome this obstacle, we introduce a technique that we call \emph{child encoding}, which allows a node to recover its BFS-children at the cost of adding a logarithmic overhead in time, while requiring a maximum advice length of just $O(\log n)$ bits.

\subsubsection{Child Encoding Scheme ($\cen$)}
The advice at each node $w$ is represented as a tuple $(p_w,\fc_w,\nxt_w)$, 
where $p_w$ is the port number at $w$ that leads to the \emph{parent of $w$},
$\fc_w$ is the port number at $w$ that leads to the \emph{first child of $w$}. 
Finally, $\nxt_w$ is a pair of port numbers (not at $w$ but at its parent) that are associated with a pair of nodes that we call \emph{next siblings}.
We will formally describe the assignment of these variables in the proof of Lemma~\ref{lem:cen}. 
See Figure~\ref{fig:cen} for a more concrete example of how the oracle assigns the advice to the nodes of the first two layers of a BFS tree.

Next, we outline how the distributed algorithm will use these advice tuples: 
When a node $w$ wakes up, it first sends a wake-up message to its parent using $p_w$ and then uses $\fc_w$ to wake up one of its children $u \in T$. 
Node $u$ is also equipped with an advice tuple and, upon receiving a message from $w$, it will forward $\nxt_u$ to $w$, thus enabling $w$ to discover additional children among its neighbors.
After $w$ has received a response from $u$, it will use the port numbers in $\nxt_u$ to contact the corresponding two children nodes simultaneously, which will prompt its children to each reveal the pair of port numbers stored in their respective $\nxt$ variables to $w$ and so forth, effectively doubling the number of children that $w$ learns about in each time step. 
Note that we start executing this process simultaneously at each awake node.

\begin{lemma} \label{lem:cen}
Consider a node $v$ and an arbitrary subset $C$ of its neighbors.
There exists a procedure called \emph{child encoding}, denoted by $\cen(v,C)$, that assigns $O\lt( \log n \rt)$ bits to $v$ as well as to each node in $C$ such that, given the resulting advice,
\begin{itemize}
\item every node in $C$ knows $v$, and
\item $v$ can learn about its ports that lead to nodes in $C$ in $O(\log n)$ time by sending $O(|C|)$ messages. %
\end{itemize}
\end{lemma}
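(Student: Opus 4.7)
The plan is to embed the set $C=\{c_1,\dots,c_k\}$ into a (conceptual) nearly-complete binary tree rooted at $c_1$, where the children of $c_i$ are $c_{2i}$ and $c_{2i+1}$ whenever these indices are at most $k=|C|$, and to distribute the ``port information'' of $v$ across the nodes of $C$ according to this tree. Concretely, $v$'s advice consists of the single port $\port_v(c_1)$, while each $c_i \in C$ stores two pieces: (a) the port number $\port_{c_i}(v)$ so that $c_i$ knows $v$, and (b) the pair of port numbers at $v$ leading to $c_{2i}$ and $c_{2i+1}$ (with a sentinel in place of any missing child). Each piece is a single port number, so the advice stored at any node is $O(\log n)$ bits in total.

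The distributed protocol is a binary-tree broadcast rooted at $v$ with an immediate acknowledgment from each visited child. Node $v$ starts by sending a short message through the port to $c_1$; whenever a child $c_i$ receives a message from $v$ (recognized via its stored port $\port_{c_i}(v)$), it replies by transmitting the stored pair $(\port_v(c_{2i}), \port_v(c_{2i+1}))$ back to $v$; upon receiving such a pair, $v$ simultaneously contacts the two corresponding nodes through these freshly learned ports, and so on. In effect, the set of children of $v$ that have been discovered roughly doubles per ``level'' of the binary tree.

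For the complexity bounds, observe that every $c_i$ is contacted exactly once by $v$ and replies exactly once, yielding a total of $O(|C|)$ messages. The depth of the binary tree is $\lceil \log_2(k+1) \rceil = O(\log n)$, and each level contributes $O(1)$ time units (one for $v$'s outgoing messages and one for the replies), so $v$ finishes learning all its ports to $C$ within $O(\log n)$ time. Each transmitted message carries at most two $O(\log n)$-bit port numbers, which keeps the scheme inside the $\congest$ bandwidth. The requirement that each $c_i \in C$ knows $v$ is immediate from the stored port $\port_{c_i}(v)$.

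The main obstacle is at the level of design rather than analysis: the goal is precisely to avoid storing all $|C|$ port numbers directly at $v$, which would blow up $v$'s advice to $\Omega(|C|\log n)$ bits. The binary-tree scheme sidesteps this by spreading the port information evenly across the children, paying only a logarithmic time overhead for $v$ to collect it back. Verifying that each pair of ports is transmitted correctly, and that the response from each $c_i$ reaches $v$ even in the asynchronous $\kt_0$ $\congest$ setting, is routine once the tree structure is in place.
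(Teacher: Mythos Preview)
Your proposal is correct and follows essentially the same approach as the paper: both store $\port_v(c_1)$ at $v$ and, at each $c_i$, the port $\port_{c_i}(v)$ together with the pair $(\port_v(c_{2i}),\port_v(c_{2i+1}))$, then run the same binary-doubling query/reply protocol so that $v$ discovers all of $C$ in $O(\log|C|)$ time using $O(|C|)$ messages. The only cosmetic difference is that the paper packages the advice into a named tuple $(p_w,\fc_w,\nxt_w)$, but the construction and analysis are identical.
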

\onlyLong{
\begin{proof}
Let $u_1, u_2, \ldots, u_{|C|}$ be an arbitrary enumeration of $C$.
We first describe how to compute $\advice(v)$, which, as explained above is a tuple $(p_v,\fc_v,\nxt_v)$:
Variable $\fc_v$ contains the port number at $v$ that leads to $u_1$, i.e., $\port_v(u_1)$, which we conceptually think of as the \emph{first child of $v$}.
For now, we leave variable $p_v$ empty and initialize the tuple $\nxt_v$ to the empty list $\langle\rangle$.
Note that we may assign values to $p_v$ and $\nxt_v$ when computing $\cen(x,C_x)$, where $x$ is $v$'s parent in the tree $T$, and $C_x$ are the children of $x$, which includes $v$. 

Next, we describe how to compute the advice for $u_i \in C$.
We set $p_{u_i}$ to $\port_{u_i}(v)$, and $\fc_{u_i} = \langle\rangle$ (as mentioned, empty variables may be updated during other invocations of $\cen$). 
Finally, we assign $\nxt_{u_i}= \langle \port_{v}({u_{2i}}),\port_{v}({u_{2i+1}}) \rangle$ if $i \le \lt \lceil (|C|-1)/{2} \rt \rceil$; otherwise, we leave $\nxt_{u_i}$ empty.
Since each advice tuple consists of $O\lt( 1 \rt)$ port numbers, it follows that the maximum length of advice assigned to any node is $O\lt( \log n \rt)$ bits. 
Note that it is straightforward to separate the port numbers in the advice string by using a special delimiter symbol, which does not increase the length of advice asymptotically.

We now analyze the time complexity. 
Let $t$ be the time at which $v$ sends a message to its first child $\fc_{v}$.
Node $v$ will receive a response from $u_1$ containing the port numbers to $u_2$ and $u_3$ by time $t+2$.
Subsequently, $v$ contacts $u_2$ and $u_3$ and receives the port number to its next four children within two additional time units.
It is straightforward to show that $v$ will have discovered ports leading to (at least) $2^j$ children by time $t+2j$, for any $j \le \lceil \log_2 |C| \rceil$, and this implies that $v$ knows all ports of its children in $C$ by time $t + O(\log n)$. 

To see why the claimed bound on the message complexity holds, it suffices to notice that each edge to a child of $v$ is traversed exactly twice throughout this process.
\end{proof}
}
\begin{figure}[t]
  \centering
\includegraphics[scale=1.25]{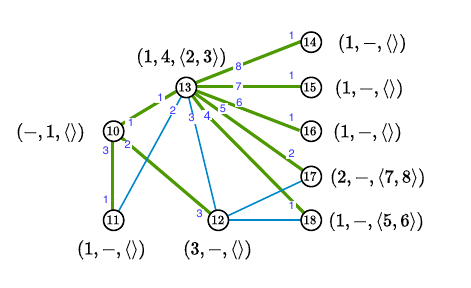}
\caption{
The child encoding scheme obtained for a BFS tree rooted at the node with ID $10$, where the green thick edges are part of the tree. The small integers are the port numbers at the respective nodes. To compute the advice, the oracle invokes $\cen(v,C)$ for each node $v$ and its set of children $C$ in the BFS tree. 
}
\label{fig:cen}
\end{figure}

We are now ready to analyze the advising scheme:
After fixing the BFS tree, the oracle computes the advice by instantiating $\cen(v,C)$, for each node $v$ and its set of children $C$ in the BFS tree.
Lemma~\ref{lem:cen} tells us that we may incur an overhead of $O(1)$ messages for each of the $n$ invocations of $\cen$, and since wake-up messages are only sent over the BFS tree edges, where each edge of $T$ is used at most twice, the message complexity is $O(n)$.

Finally, we argue the time complexity. 
Consider any sleeping node $v$ and let $P$ be the shortest path in the BFS tree $T$ to its closest initially-awake node $u$, breaking ties arbitrarily.
Upon awakening, $u$ starts to recover its incident tree edges.
According to Lemma~\ref{lem:cen}, this incurs a slowdown of at most $O(\log n)$ units of time, and the same is true for each of the nodes in $P$.
Thus, the time complexity is $O(D \log n)$, completing the proof of Theorem~\ref{thm:scheme_cen}(B).

\onlyLong{
\paragraph{Remark.} 
As explained in Section~\ref{sec:prelim}, we assume that the oracle is oblivious to the set of nodes $A_0$ that are woken up by the adversary, when computing the advice.
If the oracle \emph{does} know $A_0$, we can obtain an improved time complexity in Theorems~\ref{thm:scheme_basic}, where the diameter $D$ is replaced with the awake distance $\arad$. 
In more detail, the oracle can partition the network into a set of BFS trees $\{T_v \mid v \in A_0\}$, where $T_v$ is a BFS tree rooted at $v$, and each node $u \notin A_0$ is part of the tree for which it has the closest distance to the root, breaking ties arbitrarily.
}

\subsection{A Tradeoff between Time, Messages, and Advice} \label{sec:spanner}
We now give an advising scheme that achieves an upper bound that scales with the awake distance, under the assumption that the oracle does \emph{not} know the set of initially-awake nodes.
More specifically, we give a tradeoff between time, message complexity, and advice length. 
Our result utilizes the seminal work of Baswana and Sen~\cite{DBLP:journals/rsa/BaswanaS07}, who give a randomized construction of a multiplicative $(2k-1)$-spanner. 
Below, we sketch the main ideas and introduce some notation used in the advising scheme defined in Section~\ref{sec:spanner_advising}.
We mainly follow the presentation of the Baswana-Sen algorithm given in \cite{censor2020derandomizing}, and we will also apply their elegant derandomization result.

The main idea of the Baswana-Sen algorithm is to maintain a clustering of nodes, where a \emph{cluster} is a set of nodes that form a connected subgraph, and a \emph{clustering} is a collection of disjoint clusters. 
Each cluster $C$ has a cluster leader $c$ and we maintain a spanning tree of $C$ rooted at $c$.  
The algorithm proceeds in $k$ iterations by iteratively refining the clustering. 
We use $\mathcal{C}_i$ to denote the clustering at iteration $i$ and, initially, every node forms its own cluster, i.e., $\mathcal{C}_0 = V(G)$.

The algorithm is structures into two phases, whereby Phase~1 consists of iterations $1,\dots,k-1$, while Phase~2 only comprises the $k$-th iteration, described below. 
In iteration $i \le k-1$, we obtain $\mathcal{C}_{i}$ from $\mathcal{C}_{i-1}$ by sampling each cluster with probability $n^{-1/k}$. 
We call a node that is part of a sampled cluster an \emph{$i$-clustered node} and \emph{$i$-unclustered} otherwise, following the notation in \cite{censor2020derandomizing}.
Moreover, we say that $u$ is \emph{defeated in iteration $i$}, if $u$ is $(i-1)$-clustered but $i$-unclustered.

\begin{itemize}
\item Iteration $i \in [1,k-1]$: We add edges for each defeated node $u$ by distinguishing two cases:
\begin{enumerate}
  \item[\textit{Case (A):}] If $u$ is incident to an $i$-clustered node $v$ in some cluster $C \in \mathcal{C}_{i}$, then we keep exactly one edge from $v$ to cluster $C$, by adding $\set{u,v}$ to the spanner and discarding all other edges incident to $u$ that have the other endpoint in some node in $C$.
  Conceptually, we can think of $u$ joining the cluster $C$.
  For the purpose of defining our advising scheme (described below), we treat the \emph{intra-cluster edge} $\set{u,v}$ as a directed edge $(u \to v)$ and say that $v$ is $u$'s parent in $C$.
We remark that the edge directions are only used in our advising scheme in Section~\ref{sec:spanner_advising} and do not have any relevance for the spanner itself.

\item[\textit{Case (B):}] Otherwise, if $u$ does not have any $i$-clustered neighbors, then, for each $C' \in \mathcal{C}_{i-1}$, we add one \emph{outgoing inter-cluster edge $e_{C'}$} to the spanner that connects $u$ to a neighbor $v$ who is a member of $C'$, and we call $e_{C'}$ an \emph{incoming inter-cluster edge} for $v$.
  \end{enumerate}  

\item Iteration $k$: 
  We deterministically define $\mathcal{C}_k=\emptyset$. Then, we proceed, for each node $u$ that was defeated in iteration $k$ in exactly the same way as in Case (B) for the earlier iterations described above, i.e., we keep one edge to each incident cluster in $\mathcal{C}_{k-1}$.
\end{itemize}

The randomized Baswana-Sen algorithm described above yields a spanner with $O(k\,n^{1+1/k})$ edges in expectation, which, if used directly, would provide only a bound on the \emph{expected} advice length per node.  
To obtain a deterministic upper bound instead, we leverage the derandomized version presented in \cite{censor2020derandomizing}, which provides a crucial bound on the number of outgoing inter-cluster edges that are included by defeated nodes:

\begin{lemma}[implicit in the proof of Theorem~1.6 of \cite{censor2020derandomizing}] \label{lem:derandomized_spanner}
Derandomizing the Baswana-Sen algorithm yields a $(2k-1)$-spanner algorithm with $O(k\,n^{1+1/k}\log n)$ edges, and it holds that, in each iteration, the maximum number of outgoing inter-cluster edges added to the spanner by any vertex is at most $O(n^{1/k}\log n)$. 
\end{lemma}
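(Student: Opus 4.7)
The plan is to invoke the derandomization framework of Censor-Hillel, Parter, and Schwartzman~\cite{censor2020derandomizing}, extracting from their proof of Theorem~1.6 the specific per-iteration, per-vertex bound on the number of outgoing inter-cluster edges that a defeated vertex contributes to the spanner.

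First, I would recall the randomized Baswana--Sen analysis for a defeated vertex $u$ in iteration $i$. Order the neighboring clusters $C_1, C_2, \ldots, C_t \in \mathcal{C}_{i-1}$ of $u$ in some fixed order, and note that the outgoing inter-cluster edge to $C_j$ is added precisely when all of $C_1,\ldots,C_j$ are unsampled at iteration $i$. Under independent sampling with probability $p = n^{-1/k}$, the expected number of such indices is at most $\sum_{j \ge 1}(1-p)^{j-1} \le 1/p = n^{1/k}$. Summing this expectation over all defeated vertices and all $k$ iterations, and adding at most one intra-cluster tree edge per clustered vertex, reproduces the standard $O(k\,n^{1+1/k})$ bound in expectation for the randomized spanner.

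Next, I would invoke the derandomization. In~\cite{censor2020derandomizing}, the random cluster sampling in each iteration is replaced by a deterministic choice driven by a pessimistic estimator of Chernoff-type tail probabilities, taken in a union bound over all vertices. Because the estimator is additive over vertices and bounds, for each vertex $v$, the probability that $v$'s outgoing inter-cluster count exceeds its randomized expectation by more than a logarithmic factor, the deterministic assignment selected by the estimator ensures that in each iteration \emph{every} vertex has at most $O(n^{1/k}\log n)$ outgoing inter-cluster edges. Summing this per-vertex guarantee over the $n$ vertices and the $k$ iterations (and folding in the $O(n)$ intra-cluster tree edges accumulated across all clusterings) yields the claimed total spanner size of $O(k\,n^{1+1/k}\log n)$ as well as the stretch guarantee $(2k-1)$, which is preserved by derandomization since it follows from the clustering structure alone and not from the sampling distribution.

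The main obstacle will be extracting the per-vertex bound from~\cite{censor2020derandomizing}, whose stated theorem only bounds the overall number of spanner edges rather than the maximum load on any single vertex. However, since their pessimistic estimator is itself a vertex-wise sum of tail-bound expressions, tracking the individual summands shows that the deterministic choice simultaneously controls every vertex's outgoing-edge count—this is precisely the strengthening needed for the child-encoding advising scheme used in Theorem~\ref{thm:advice_spanner}. Once this stronger reading of their estimator is verified, the two bounds in the lemma fall out together.
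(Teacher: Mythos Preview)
The paper does not give a proof of this lemma at all: it is stated with the bracketed attribution ``implicit in the proof of Theorem~1.6 of \cite{censor2020derandomizing}'' and then used as a black box in the proof of Theorem~\ref{thm:advice_spanner}. There is therefore nothing in the paper to compare your proposal against.

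Your sketch is a reasonable outline of how one would extract the per-vertex bound from the cited reference, and the key idea---that the pessimistic estimator in \cite{censor2020derandomizing} is a sum of vertex-wise tail terms, so the deterministic choice it produces controls each vertex's outgoing inter-cluster count individually and not merely the total---is exactly the strengthening the present paper needs. One small mismatch: your randomized calculation uses the ``ordered clusters'' variant of Baswana--Sen (add an edge to $C_j$ only if $C_1,\ldots,C_j$ are all unsampled), whereas the paper's Case~(B) description has a defeated vertex add one edge to \emph{every} neighboring cluster in $\mathcal{C}_{i-1}$ when none of them survive. Both variants yield an $O(n^{1/k})$ expected outgoing count per vertex, so this does not affect the conclusion, but you should make sure the version you analyze matches the one actually derandomized in \cite{censor2020derandomizing}.
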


\subsubsection{The Advising Scheme} \label{sec:spanner_advising}
The oracle first computes a $(2k-1)$-spanner using the derandomized Baswana-Sen algorithm.
We now describe an advising scheme for encoding the different types of edges that were added to the spanner:
\begin{itemize}
\item \emph{Intra-cluster edges:} Recall that all intra-cluster edges form a directed tree rooted at the cluster leader.
To encode the intra-cluster edges between a node $v$ and the set of its children $D_v$, we compute the advice according to the child encoding scheme $\cen(v,D_v)$ described in Lemma~\ref{lem:cen}, which increases the advice length at $v$ and at each of the nodes in $D_v$ by an additive $O(\log n)$ term.

\item \emph{Outgoing and incoming inter-cluster edges:}
Consider a node $u$ defeated in iteration $i$ that does not have any incident $i$-clustered neighbors.
Suppose that it adds outgoing inter-cluster edges to nodes $v_1 \in C_1,\dots,v_\ell \in C_{\ell}$, where each $C_j \in \mathcal{C}_{i-1}$, as described in Case (B) above.
For each such node $v_j \in C_j$, let $D_j$ be the set of defeated nodes that added an edge to $v_j$ in some iteration.
In other words, $v_j$ has an incoming inter-cluster edge from each node in $D_j$.
We now instantiate the child encoding advising scheme $\cen(v_j,D_j)$ and add the required information to the advice strings of $v_j$ and the nodes in $D_j$. 
\end{itemize}

Given the above encoding rules, the algorithm for the wake-up problem is simple:
Upon awaking, each node $u$ inspects its advice and uses the decoding algorithm on the child encoding scheme described in Lemma~\ref{lem:cen} to recover its incident spanner edges. 
This ensures that every edge of the spanner $H$ is known to both endpoints. 
Finally, $u$ broadcasts a wake-up message (once) over its incident spanner edges.

\begin{reptheorem}{thm:advice_spanner}
  \thmAdviceSpanner
\end{reptheorem}
\begin{proof}
We first show the bound on the advice length:
As we are utilizing the child encoding scheme ($\cen$) for encoding intra-cluster edges and incoming inter-cluster edges, it follows from Lemma~\ref{lem:cen} that this information incurs $O(\log n)$ bits of advice per node. 
To see why $O(n^{1/k}\log^2 n)$ bits suffice for encoding the outgoing inter-cluster edges, recall that each defeated node $u$ adds at most $O(n^{1/k}\log n)$ outgoing edges to the spanner when it is defeated, according to Lemma~\ref{lem:derandomized_spanner}, and each node is defeated in exactly one iteration. 
Conceptually, we consider $u$ to be the child of each of its $O(n^{1/k}\log n)$ parents.
This means that $u$ needs to store $O\lt( \log n \rt)$ bits for each one of the $O(n^{1/k}\log n)$ instances of $\cen$ that it participates with (one for each parent), which requires $O(n^{1/k}\log^{2} n)$ bits, as claimed.

For the bound on the time, recall that each asleep node has a distance of $\arad$ from some initially awake node, which may be stretched to a distance of at most $k\cdot \arad$ in the spanner.
Moreover, each time a node is awoken, it may first spend $O(\log n)$ time to recover its incident edges from the child encoding scheme (see Lemma~\ref{lem:cen}) before it starts broadcasting the awake message over the spanner.
Combining these observations, we conclude that it takes $O\lt( k\cdot \arad\log n \rt)$ time in the worst case for any node to be woken up. 
 
The message complexity bound and the claim that the algorithm works in the $\congest$ model are both immediate, since each spanner edge is used $O(1)$ times when recovering the edges using the child-encoding scheme and also at most twice for broadcasting a wake-up message. 
\end{proof}

By instantiating Theorem~\ref{thm:advice_spanner} with $k=\Theta\lt( \log n \rt)$, we immediately obtain the following:

\begin{repcorollary}{cor:advice_spanner}
  \corAdviceSpanner
\end{repcorollary}

\section{Discussion and Open Problems} \label{sec:conclusion}

Our work raises several important questions about the wake-up problem that are yet to be resolved. 
In the context of the standard $\kt_1$ assumption (without advice), we have obtained an \emph{asynchronous} algorithm with $O(n\log n)$ time and message complexity (Theorem~\ref{thm:dfs_many}) and a \emph{synchronous} $O\lt( \arad \rt)$-time algorithm that sends $O\lt( n^{3/2}\sqrt{\log n} \rt)$ messages (Theorem~\ref{thm:kt1_fw}).

An interesting avenue for further exploration is due to the fact that our algorithms for the $\kt_1$ model may send messages of polynomial size. 
At the time of writing, there is no known algorithm that achieves wake-up in $o(m)$ message complexity {and} $\tilde O(n)$ time, while sending only messages of size $O(\log n)$. Interestingly, the question appears to be unresolved even if we assume synchronous rounds.

\begin{problem}
Can we design a (synchronous or asynchronous) $\kt_1$ $\congest$ algorithm for the wake-up problem that sends $o(m)$ messages and terminates in $\tilde O(n)$ time?
\end{problem}

When equipping nodes with advice in the $\kt_0$ model, we have seen several advising schemes that attain optimality in one of the three metrics (i.e., time, messages, or advice) while sacrificing the others, and we were also able to approach all three metrics up to polylogarithmic factors (see Cor.~\ref{cor:advice_spanner}).
However, an important open question is whether we can get optimal bounds in all three metrics simultaneously, without sacrificing logarithmic factors and match the trade-off between advice and messages stipulated in Theorem~\ref{thm:lb_kt0}: 

\begin{problem}
Is it possible to design an advising scheme that has a time complexity of $O(\arad)$, a message complexity of $\tilde O\lt( \frac{n^{2}}{2^{\beta}} \rt)$, and a maximum advice length of $O\lt( \beta \rt)$ bits in the asynchronous $\kt_0$ $\congest$ model?
\end{problem}

\appendix
\section*{Appendix}
\section{Tools and Definitions from Information Theory} \label{app:tools}

Here, we briefly state some key definitions and lemmas that we use in our lower bound argument in Section~\ref{sec:lb}.
We refer the reader to \cite{clover_book} for a more comprehensive introduction.

Consider jointly distributed random variables $W$, $X$, $Y$, and $Z$.
In this section, we follow the convention of using capitals for random variables and lowercase letters for values of random variables.
We denote the \emph{Shannon entropy of $X$} by $\HH\lt[ X \rt]$, which is defined as 
\begin{align}
\HH[ X ] = \sum_x \Pr[ X \!=\! x] \log_2(1 /\Pr[ X \!=\! x]). \label{eq:entropy}
\end{align}
Moreover, the \emph{conditional entropy of $X$ conditioned on $Y$} is defined as
  \begin{align}\label{eq:conditional_entropy}
    \HH[ X \mid Y ] &= \EE_Y[ \HH[ X \mid Y \!=\! y] ] 
  \end{align}

We are also interested in how much information one random variable $X$ reveals about a distinct random variable $Y$, and vice versa. 
A common way to quantify this is the \emph{mutual information} $\II[ X : Y ]$, which is defined as
  \begin{align} 
    \II[ X : Y ]
      &= \sum_{x,y}\Pr[x,y]\cdot \log\lb(\frac{\Pr[x,y]}{\Pr[x]\Pr[y]}\rb) \label{eq:mutual}
  \end{align}
Often, we want to take the expected mutual information between $X$ and $Y$ conditioned on some random variable $Z$, which is defined as the \emph{conditional mutual information of $X$ and $Y$} and, formally,
  \begin{align}
    \II[ X : Y \mid Z ]
      &= \EE_Z[ \II[ X : Y \mid Z \!=\! z ]] \label{eq:mutual_cond_exp} \\
      &= \HH[ X \mid Z ] - \HH[ X \mid Y, Z ] \label{eq:mutual_cond}.
  \end{align}

We are now ready to state some properties (without proof) that we use in Section~\ref{sec:lb}:
\begin{lemma}[see, e.g., \cite{clover_book}] \label{lem:inf_props}
Let $X, Y, W, Z$ be random variables. 
The following properties hold:
\begin{enumerate}
\item[(a)] Every encoding of $X$ has expected length at least $\HH[X]$, i.e., $\HH\lt[ X \rt] \le |X|$.
\item[(b)] $\II[ X : Y \mid Z ] \le \HH[ X \mid Z ] \le \HH[ X ]$.
\item[(c)] $\II\lt[ X : Y, Z \mid W \rt] = \II\lt[ X : Y \mid W \rt] + \II\lt[ X : Z \mid W,Y \rt]$.
\item[(d)] $\II\lt[ X : Y, Z \mid W \rt] \ge \II\lt[ X : Y \mid W \rt]$.
\item[(e)] $\HH\lt[ X,Y \rt] = \HH\lt[ X \rt] + \HH\lt[ Y \mid X \rt] \le \HH\lt[ X \rt] + \HH\lt[ Y \rt].$
\item[(f)] $\HH\lt[ X\rt] \le \log_2\lt(\supp(X)\rt),$ where $\supp(X)$ denotes the size of the support of $X$. Equality holds if $X$ is uniformly distributed.
\item[(g)] If conditioned on $Z$, $Y$ is independent of $W$, then  $\II\lt[ X : Y \ \md|\ W, Z \rt] \ge \II\lt[ X : Y \ \md|\ Z \rt]$.  
\end{enumerate}
\end{lemma}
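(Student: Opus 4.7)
The plan is to derive each property directly from the definitions of entropy (\ref{eq:entropy}), conditional entropy (\ref{eq:conditional_entropy}), and (conditional) mutual information (\ref{eq:mutual})--(\ref{eq:mutual_cond}), using only a handful of standard tools: non-negativity of entropy, non-negativity of KL-divergence (equivalently, of mutual information, which follows from Gibbs' inequality), Jensen's inequality applied to the convex function $-\log$, Kraft's inequality, and the factorization $p(x,y)=p(x)\,p(y\mid x)$.

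For (a), I would invoke Kraft's inequality: any uniquely-decodable binary encoding $c$ of $X$ satisfies $\sum_x 2^{-|c(x)|}\le 1$, and then applying Gibbs' inequality to the sub-distribution $q(x)\propto 2^{-|c(x)|}$ against $p_X$ yields $\EE|c(X)|\ge \HH[X]$. Part (f) follows from Jensen applied to $\log$ on the random variable $1/p(X)$: $\HH[X]=\EE[\log(1/p(X))]\le \log\EE[1/p(X)]=\log|\supp(X)|$, with equality precisely when $1/p(X)$ is a.s.\ constant, i.e., $X$ is uniform on its support. Part (b) is immediate from \eqref{eq:mutual_cond}: the first inequality uses $\HH[X\mid Y,Z]\ge 0$, and the second rewrites $\HH[X]-\HH[X\mid Z]=\II[X:Z]\ge 0$.

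For (c), (d), (e) I would use chain-rule manipulations. For (c), expand both sides via \eqref{eq:mutual_cond}: the LHS equals $\HH[X\mid W]-\HH[X\mid Y,Z,W]$, while the RHS telescopes as $\bigl(\HH[X\mid W]-\HH[X\mid Y,W]\bigr)+\bigl(\HH[X\mid Y,W]-\HH[X\mid Y,Z,W]\bigr)$, matching the LHS. Part (d) then follows since the second summand on the RHS of (c) is $\II[X:Z\mid W,Y]\ge 0$ by (non-negativity of) conditional mutual information, obtained by averaging Gibbs' inequality over the conditioning. For (e), substituting $p(x,y)=p(x)\,p(y\mid x)$ into the entropy definition gives the chain rule $\HH[X,Y]=\HH[X]+\HH[Y\mid X]$, and $\HH[Y\mid X]\le \HH[Y]$ is again ``conditioning reduces entropy'' from (b).

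The one part requiring slightly more care is (g). Here I would first note that $Y\perp W\mid Z$ implies, directly from the definition of conditional entropy, that $\HH[Y\mid W,Z]=\HH[Y\mid Z]$. Substituting this equality into \eqref{eq:mutual_cond} twice yields
\[
\II[X\!:\!Y\mid W,Z]-\II[X\!:\!Y\mid Z] \;=\; \HH[Y\mid X,Z]-\HH[Y\mid X,W,Z] \;=\; \II[Y\!:\!W\mid X,Z]\;\ge\; 0,
\]
which gives the inequality. The main obstacle throughout is not any single deep argument but picking the right form of the chain rule at each step, and, in (g), being careful that the conditional independence assumption only lets us collapse $\HH[Y\mid W,Z]$ (not, e.g., $\HH[Y\mid X,W,Z]$, since $X$ may correlate $Y$ and $W$ even given $Z$); once the definitions are unfolded, every remaining step is either a direct algebraic identity or an invocation of Gibbs' inequality.
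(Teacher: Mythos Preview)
Your proposal is correct and each of the seven arguments is the standard textbook derivation. Note, however, that the paper does not actually prove this lemma: it is explicitly stated ``without proof'' and attributed to \cite{clover_book}, so there is no paper-side argument to compare against. Your sketches are exactly the kind of proofs one finds in Cover and Thomas, and in particular your treatment of (g) --- using $Y\perp W\mid Z$ only to collapse $\HH[Y\mid W,Z]$ to $\HH[Y\mid Z]$ and then recognizing the remaining difference as $\II[Y\!:\!W\mid X,Z]\ge 0$ --- is the clean way to handle that part.
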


\section{Proof of Corollary~\ref{cor:basic}} \label{app:basic}
We give a simple advising scheme very similar to the one proposed in \cite{fraigniaud2006oracle} that enables solving the wake-up problem with both, optimal time and message complexity. 
While the maximum advice length may be up to $O(n)$ bits, we show that it is only $O(\log n)$ bits on average.

    Consider the following advising scheme where the oracle constructs a BFS tree $T$ with an arbitrary node as the root.  
    The advice $\advice(v)$ of a node $v$ is defined in such a way that it can identify $v$' neighbors in $T$ using $O(n)$ bits. 
    When a node has less than or equal to $\frac{n}{\log n}$ neighbors in $T$, we say it is a \emph{low degree tree node}, otherwise, it is a \emph{high degree tree node}. 
    Note that for a node to know if it is a low degree or high degree tree node, it has to know the exact value of $n$. 
    This can be achieved by attaching the bit string representation of $n$ (which is of size $O(\log n)$) to each node's advice. 
    If $v$ is a low degree tree node, then we append the list of port numbers at $v$ that lead to its neighbors in $T$ to $\advice(v)$ . 
    On the other hand, if $v$ is a high degree tree node, let $u_1, \ldots, u_d$ be $v$'s neighbors in $G$, such that $\port_v(u_i)=i$. We append a $n$-bit string $B$ to $\advice(v)$, where $B$ is such that the $i$-th bit is $1$ if $u_i$ is $v$'s neighbor in $T$, $0$ otherwise.
    The distributed algorithm that solves the adversarial wake-up problem is such that when a node is awake, it will wake up all the neighbors given in its advice. 
    
    It is straightforward to verify that all nodes are awake in $O(D)$ time since wake-up messages are propagated along the edges of a BFS tree.
    Regarding the number of messages, recall that $T$ has $n-1$ edges.
    Since only the edges in $T$ are used to transmit messages, and each of them is used exactly twice throughout the entire execution, we conclude that the message complexity is $O(n)$. 

    Finally, we show the claimed bound on the advice length: 
    The maximum length of advice of each node is $O(n)$ by constructions. 
    Let $\beta$ be the number of high degree tree nodes. We have $\beta = O(\log n)$ since $T$ has $n-1$ edges.
    Hence, the sum of the length of advice of all the nodes is $O(n \log n)$, which gives $O(\log n)$ average length of advice per node.

\section{Proof of Lemma~\ref{lem:good_kt0}} \label{app:good_kt0}
\begin{replemma}{lem:good_kt0}
  \lemGoodKTZero
\end{replemma}
\begin{proof} 
Assume towards a contradiction that the statement is false. Let $\bar{S} = V \setminus S$ be the set of nodes such that $\Pr\lt[ \good_i \rt]< \frac{1}{2\log_2 n}$, and note that $|\bar{S}|> \frac{1}{2} n$. 
We define $M_{v}$ to be the number of messages sent and received by node $v$, and also define $M=\sum_{v \in U \cup V \cup W} M_v$. 
Furthermore, let $\mathbf{P}$ be the random variable that represents the port assignments of all the nodes.
Thus, we know that $M_{v}$ is a deterministic function of $B$, $\mathbf{P}$, and $\mathbf{Y}$, and note that $\mathbf{Y}$ is itself a function of $\mathbf{P}$, since the graph topology and the node IDs are fixed in our lower bound graph.
This means that $M_v$ only depends on $B$ and $\mathbf{P}$.

We have
\begin{align}
\EE_{\mathbf{P},B}\lt[ M \rt] 
&= 
	\sum_{p}\Pr\lt[ \mathbf{P} \!=\! p \rt] 
  \sum_{{b}} \Pr\lt[ B \!=\! b \ \md|\ p \rt] 
  \cdot M(p,b)\notag\\ 
\ann{since $B \perp \mathbf{P}$}
&= 
	\sum_{p}\Pr\lt[ \mathbf{P} \!=\! p \rt] 
  \sum_{b} \Pr\lt[ B \!=\! b \rt] 
  \cdot M(p,b)\notag\\ 
&= 
	\sum_{p}\Pr\lt[ \mathbf{P} \!=\! p \rt] 
  \EE_{B}\lt[ M(p,B) \rt] \notag\\ 
&\le \frac{n^2}{2^{\beta+3}\log_2 n} + O(n), \label{eq:exp_up}
\end{align}
where the final inequality follows since $\EE_{B}\lt[ M(p,B) \rt]$ is precisely twice the expected message complexity of the algorithm for the given port assignment $p$, which is at most $\frac{n^2}{2^{\beta+4}\log_2n} + O(n)$ by assumption.

On the other hand, we also have
\begin{align}
\EE_{\mathbf{P},B}\lt[ M \rt] 
&\ge
	\sum_{v_i \in \bar{S}} \EE_{\mathbf{P},B}\lt[ M_{v_i} \rt] \notag\\ 
&\ge
	\sum_{v_i \in \bar{S}} \sum_{p,b} 
  \Pr\lt[  p,b \rt]   
  M_{v_i}(p,b)  \notag\\ 
&\ge
	\sum_{v_i \in \bar{S}}\bigg(
  \sum_{\substack{p,b\colon\\ M_{v_i}\!(p,b)> \frac{n }{2^{\beta}}}} \!\!
  \Pr\lt[  p,b \rt]   
  M_{v_i}(p,b) \bigg) \notag\\ 
&>
\frac{n }{2^{\beta}} \cdot
	\sum_{v_i \in \bar{S}}\bigg(  
  \sum_{\substack{p,b\colon\\ M_{v_i}\!(p,b)> \frac{n }{2^{\beta}}}} \!\!
  \Pr\lt[  p,b \rt]   \bigg) \notag\\ 
&=
  \frac{n^2 }{2^{\beta+2}\log_2n} , \label{eq:exp_lb}
\end{align}
where the final step follows from the assumption that $|\bar{S}|\ge \frac{n}{2}$ and because 
\[
\Pr[  M_{v_i}\!(p,b)> \frac{n }{2^{\beta}}] = \Pr\lt[ \neg\good_i \rt] \ge \frac{1}{2\log_2n},
\]
  for every $v_i \in \bar{S}$. 
Combining \eqref{eq:exp_up} and \eqref{eq:exp_lb} yields a contradiction.
\end{proof}
\section{Solving Wake-up on $\mathcal{G}_k$ with Polylogarithmic Overhead in the $\local$ Model} \label{app:lb_algo}
First, each node in $V$ sends a message to $\Theta( \log^2 n)$ randomly chosen neighbors. 
Since the subgraph $G[U \cup V]$ is a regular bipartite graph, a straightforward balls-into-bins argument shows that every node in $U$ wakes up with high probability.
Then, all awake nodes execute the gossip algorithm of \cite{haeupler2015simple} that solves the 1-local broadcast problem, which means that, initially, every node holds a unique rumor and, eventually, every node knows the rumors of all its neighbors.
We add the modification to this algorithm that every node in $V$ chooses two (instead of just one) neighbors to communicate with in each iteration.
Since this ensures that every node $v \in V$ always chooses at least one edge in the cut $(U,V)$, node $v$ learns at least as many rumors in a given round, as if we were executing the (unmodified) gossip algorithm of \cite{haeupler2015simple} on the subgraph $G[U \cup V]$ (i.e., without the edges to the nodes in $W$).
Consequently, the nodes in $V$ learn about all of their respective neighbors in $U$ in $\Theta\lt( \log^2n \rt)$ rounds and by sending $\Theta\lt( n\log^2n \rt)$ messages. This, in turn, ensures that every node in $V$ can identify its crucial neighbor in $W$ in one additional round.
\section{Proof of Claim~\ref{cl:forward}} \label{app:kt1_algo}

To prove the claim, consider some node $v$ and let $X$ denote the number of tokens that node $v$ forwards in the execution. 
We first argue that $\EE\lt [ X \rt] = O(\log n)$.
Consider the list of distinct tokens that visited node $v$ throughout the execution, sorted in the order of their first arrivals.
Let $\rho_{i}$ be the rank carried by the $i$-th token and $X_i$ be the indicator random variable $v$ forwards this token. 
According to our algorithm, $v$ forwards this token only if $\rho_i > \rho_j$ for all $j<i$. 
By Claim~\ref{cl:rank}, we have $\Pr\lt[ X_i = 1 \rt] = \frac{1}{i}$. 
Hence, $\EE\lt [ X \rt] = \EE \lt [\sum_{i=1}^{d} X_i\rt] = \sum_{i=1}^{d} \frac{1}{i} = O(\log d)$, where $d$ is the number of tokens that visit node $v$. 
Now, Claim~\ref{cl:forward} follows by a standard Chernoff bound~\cite{mitzenmacher} since $X_1, X_2, \ldots, X_d$ are independent indicator random variables and $d\le n$.
\section{Omitted Proofs from Section~\ref{sec:kt1_algo_sync}} \label{app:kt1_algo_sync}

\begin{replemma}{lem:deactivated}
\lemDeactivated
\end{replemma}
\begin{proof}
According to the description of the algorithm, there are only two ways $u$ can become deactivated in round $r$:
\begin{enumerate} 
\item Node $u$ has joined a BFS tree $T$ as a root, level-$1$ or level-$2$ node in some earlier round, and the construction of $T$ completed in round $r$. 
\item $u$ has broadcast an $\msg{\texttt{activate!}}$ message to its neighbors in round $r-1$. 
\end{enumerate}
In either case, all of $u$'s neighbors are guaranteed to be awake in round $r$. 
\end{proof}

\begin{replemma}{lem:9round}
\lemRound
\end{replemma}
\begin{proof}
It takes $1$ round for the root to send a message to all of its neighbors and an additional $2$ rounds of communication for the level-$1$ nodes to be informed of the BFS edges $S_{2}$ which are the BFS edges between the level-$1$ nodes and the level-$2$ nodes. Next, it takes $1$ round for the level-$1$ nodes to send messages across all incident edges of $S_{2}$ to the level-$2$ nodes, and subsequent  $4$ rounds of communication over the BFS edges between the root and the level-$2$ nodes for the level-$2$ nodes to be informed of the BFS edges $S_{3}$.
Finally, it takes $1$ round for the level-$2$ nodes to send messages across all incident edges of $S_{3}$ to level-$3$ nodes. 
This sums up to $9$ rounds in total. 
\end{proof}

\begin{replemma}{lem:active}
\lemActive
\end{replemma}
\begin{proof}
When a node $u$ wakes up in round $r$, its status is either activated or deactivated. 
If $u$ is activated and is sampled as a root, it initiates the BFS tree construction and, by Lemma~\ref{lem:9round}, becomes deactivated at the end of round $r+9$.
If $u$ is not sampled as a root and remains activated for $9$ rounds, then it executes the broadcast step, and thus $u$ becomes deactivated at the end of round $r+10$. 
\end{proof}
\bibliographystyle{alpha}
\bibliography{refs}

\newcommand{\etalchar}[1]{$^{#1}$}
\begin{thebibliography}{DKMJ{\etalchar{+}}22}

\bibitem[{Adv}95]{amd_20213}
{Advanced Micro Devices, Inc.\ (AMD)}.
\newblock Magic packet technology.
\newblock Technical report, 1995.
\newblock
  \url{https://www.amd.com/content/dam/amd/en/documents/archived-tech-docs/white-papers/20213.pdf}.
  Accessed: 2024-10-11.

\bibitem[AG91]{afek1991time}
Yehuda Afek and Eli Gafni.
\newblock Time and message bounds for election in synchronous and asynchronous
  complete networks.
\newblock {\em SIAM Journal on Computing}, 20(2):376--394, 1991.

\bibitem[AGM12]{AGM-soda12}
Kook~Jin Ahn, Sudipto Guha, and Andrew McGregor.
\newblock Analyzing graph structure via linear measurements.
\newblock In {\em Proceedings of the twenty-third annual ACM-SIAM symposium on
  Discrete Algorithms}, pages 459--467. SIAM, 2012.

\bibitem[AGPV90]{AGPV88}
Baruch Awerbuch, Oded Goldreich, David Peleg, and Ronen Vainish.
\newblock A trade-off between information and communication in broadcast
  protocols.
\newblock {\em J. {ACM}}, 37(2):238--256, 1990.

\bibitem[AHK{\etalchar{+}}20]{DBLP:conf/soda/AugustineHKSS20}
John Augustine, Kristian Hinnenthal, Fabian Kuhn, Christian Scheideler, and
  Philipp Schneider.
\newblock Shortest paths in a hybrid network model.
\newblock In Shuchi Chawla, editor, {\em Proceedings of the 2020 {ACM-SIAM}
  Symposium on Discrete Algorithms, {SODA} 2020, Salt Lake City, UT, USA,
  January 5-8, 2020}, pages 1280--1299. {SIAM}, 2020.

\bibitem[BBK{\etalchar{+}}24]{10.1145/3662158.3662805}
Alkida Balliu, Sebastian Brandt, Fabian Kuhn, Krzysztof Nowicki, Dennis
  Olivetti, Eva Rotenberg, and Jukka Suomela.
\newblock Brief announcement: Local advice and local decompression.
\newblock In {\em Proceedings of the 43rd ACM Symposium on Principles of
  Distributed Computing}, PODC '24, page 117–120. Association for Computing
  Machinery, 2024.

\bibitem[BS07]{DBLP:journals/rsa/BaswanaS07}
Surender Baswana and Sandeep Sen.
\newblock A simple and linear time randomized algorithm for computing sparse
  spanners in weighted graphs.
\newblock {\em Random Struct. Algorithms}, 30(4):532--563, 2007.

\bibitem[CDH{\etalchar{+}}18]{DBLP:conf/podc/ChangDHHLP18}
Yi{-}Jun Chang, Varsha Dani, Thomas~P. Hayes, Qizheng He, Wenzheng Li, and Seth
  Pettie.
\newblock The energy complexity of broadcast.
\newblock In Calvin Newport and Idit Keidar, editors, {\em Proceedings of the
  2018 {ACM} Symposium on Principles of Distributed Computing, {PODC} 2018,
  Egham, United Kingdom, July 23-27, 2018}, pages 95--104. {ACM}, 2018.

\bibitem[CGP20]{DBLP:conf/podc/ChatterjeeGP20}
Soumyottam Chatterjee, Robert Gmyr, and Gopal Pandurangan.
\newblock Sleeping is efficient: {MIS} in \emph{O}(1)-rounds node-averaged
  awake complexity.
\newblock In Yuval Emek and Christian Cachin, editors, {\em {PODC} '20: {ACM}
  Symposium on Principles of Distributed Computing, Virtual Event, Italy,
  August 3-7, 2020}, pages 99--108. {ACM}, 2020.

\bibitem[CHKM12]{DBLP:conf/stoc/Censor-HillelHKM12}
Keren Censor{-}Hillel, Bernhard Haeupler, Jonathan~A. Kelner, and Petar
  Maymounkov.
\newblock Global computation in a poorly connected world: fast rumor spreading
  with no dependence on conductance.
\newblock In Howard~J. Karloff and Toniann Pitassi, editors, {\em Proceedings
  of the 44th Symposium on Theory of Computing Conference, {STOC} 2012, New
  York, NY, USA, May 19 - 22, 2012}, pages 961--970. {ACM}, 2012.

\bibitem[CHPS20]{censor2020derandomizing}
Keren Censor-Hillel, Merav Parter, and Gregory Schwartzman.
\newblock Derandomizing local distributed algorithms under bandwidth
  restrictions.
\newblock {\em Distributed Computing}, 33(3):349--366, 2020.

\bibitem[Coh97]{cohen1997size}
Edith Cohen.
\newblock Size-estimation framework with applications to transitive closure and
  reachability.
\newblock {\em Journal of Computer and System Sciences}, 55(3):441--453, 1997.

\bibitem[CT06]{clover_book}
T.~Cover and J.A. Thomas.
\newblock {\em Elements of Information Theory, second edition}.
\newblock Wiley, 2006.

\bibitem[DKMJ{\etalchar{+}}22]{dufoulon2022almost}
Fabien Dufoulon, Shay Kutten, William~K Moses~Jr, Gopal Pandurangan, and David
  Peleg.
\newblock An almost singularly optimal asynchronous distributed mst algorithm.
\newblock In {\em 36th International Symposium on Distributed Computing}, 2022.

\bibitem[DPP{\etalchar{+}}24]{DBLP:conf/innovations/DufoulonPPP024}
Fabien Dufoulon, Shreyas Pai, Gopal Pandurangan, Sriram~V. Pemmaraju, and Peter
  Robinson.
\newblock The message complexity of distributed graph optimization.
\newblock In Venkatesan Guruswami, editor, {\em 15th Innovations in Theoretical
  Computer Science Conference, {ITCS} 2024, January 30 to February 2, 2024,
  Berkeley, CA, {USA}}, volume 287 of {\em LIPIcs}, pages 41:1--41:26. Schloss
  Dagstuhl - Leibniz-Zentrum f{\"{u}}r Informatik, 2024.

\bibitem[DPRS24]{opodis2024}
Fabien Dufoulon, Gopal Pandurangan, Peter Robinson, and Michele Scquizzato.
\newblock The singular optimality of distributed computation in {LOCAL}.
\newblock In {\em Proceedings of the 28th International Conference on
  Principles of Distributed Systems (OPODIS)}. LIPIcs, (to appear), 2024.

\bibitem[Elk06]{elkin2006faster}
Michael Elkin.
\newblock A faster distributed protocol for constructing a minimum spanning
  tree.
\newblock {\em Journal of Computer and System Sciences}, 72(8):1282--1308,
  2006.

\bibitem[FGIP09]{fraigniaud2009distributed}
Pierre Fraigniaud, Cyril Gavoille, David Ilcinkas, and Andrzej Pelc.
\newblock Distributed computing with advice: information sensitivity of graph
  coloring.
\newblock {\em Distributed Computing}, 21:395--403, 2009.

\bibitem[FIP06]{fraigniaud2006oracle}
Pierre Fraigniaud, David Ilcinkas, and Andrzej Pelc.
\newblock Oracle size: a new measure of difficulty for communication tasks.
\newblock In {\em Proceedings of the twenty-fifth annual ACM symposium on
  Principles of distributed computing}, pages 179--187, 2006.

\bibitem[GHBK12]{gandhi2012sleep}
Anshul Gandhi, Mor Harchol-Balter, and Michael~A Kozuch.
\newblock Are sleep states effective in data centers?
\newblock In {\em 2012 international green computing conference (IGCC)}, pages
  1--10. IEEE, 2012.

\bibitem[GHS83]{gallager1983distributed}
Robert~G. Gallager, Pierre~A. Humblet, and Philip~M. Spira.
\newblock A distributed algorithm for minimum-weight spanning trees.
\newblock {\em ACM Transactions on Programming Languages and systems (TOPLAS)},
  5(1):66--77, 1983.

\bibitem[Hae15]{haeupler2015simple}
Bernhard Haeupler.
\newblock Simple, fast and deterministic gossip and rumor spreading.
\newblock {\em Journal of the ACM (JACM)}, 62(6):1--18, 2015.

\bibitem[HPP{\etalchar{+}}15]{podc15}
James~W Hegeman, Gopal Pandurangan, Sriram~V Pemmaraju, Vivek~B Sardeshmukh,
  and Michele Scquizzato.
\newblock Toward optimal bounds in the congested clique: Graph connectivity and
  mst.
\newblock In {\em Proceedings of the 2015 ACM Symposium on Principles of
  Distributed Computing}, pages 91--100, 2015.

\bibitem[KKM{\etalchar{+}}12]{DBLP:journals/dc/KhanKMPT12}
Maleq Khan, Fabian Kuhn, Dahlia Malkhi, Gopal Pandurangan, and Kunal Talwar.
\newblock Efficient distributed approximation algorithms via probabilistic tree
  embeddings.
\newblock {\em Distributed Comput.}, 25(3):189--205, 2012.

\bibitem[KKM13]{KKM-soda13}
Bruce~M Kapron, Valerie King, and Ben Mountjoy.
\newblock Dynamic graph connectivity in polylogarithmic worst case time.
\newblock In {\em Proceedings of the twenty-fourth annual ACM-SIAM symposium on
  Discrete algorithms}, pages 1131--1142. SIAM, 2013.

\bibitem[KKT15]{DBLP:conf/podc/KingKT15}
Valerie King, Shay Kutten, and Mikkel Thorup.
\newblock Construction and impromptu repair of an {MST} in a distributed
  network with o(m) communication.
\newblock In Chryssis Georgiou and Paul~G. Spirakis, editors, {\em Proceedings
  of the 2015 {ACM} Symposium on Principles of Distributed Computing, {PODC}
  2015, Donostia-San Sebasti{\'{a}}n, Spain, July 21 - 23, 2015}, pages 71--80.
  {ACM}, 2015.

\bibitem[KMJPP20]{kutten2020singularly}
Shay Kutten, William~K Moses~Jr, Gopal Pandurangan, and David Peleg.
\newblock Singularly optimal randomized leader election.
\newblock In {\em 34th International Symposium on Distributed Computing (DISC
  2020)}. Schloss-Dagstuhl-Leibniz Zentrum f{\"u}r Informatik, 2020.

\bibitem[KPP{\etalchar{+}}15]{jacm15}
Shay Kutten, Gopal Pandurangan, David Peleg, Peter Robinson, and Amitabh
  Trehan.
\newblock On the complexity of universal leader election.
\newblock {\em J. {ACM}}, 62(1):7:1--7:27, 2015.

\bibitem[KRT24]{kutten2024tight}
Shay Kutten, Peter Robinson, and Ming~Ming Tan.
\newblock Tight bounds on the message complexity of distributed tree
  verification.
\newblock {\em arXiv preprint arXiv:2401.11991}, 2024.

\bibitem[KRTZ23]{DBLP:conf/podc/Kutten0T023}
Shay Kutten, Peter Robinson, Ming~Ming Tan, and Xianbin Zhu.
\newblock Improved tradeoffs for leader election.
\newblock In Rotem Oshman, Alexandre Nolin, Magn{\'{u}}s~M. Halld{\'{o}}rsson,
  and Alkida Balliu, editors, {\em Proceedings of the 2023 {ACM} Symposium on
  Principles of Distributed Computing, {PODC} 2023, Orlando, FL, USA, June
  19-23, 2023}, pages 355--365. {ACM}, 2023.

\bibitem[KSSV00]{karp2000randomized}
Richard Karp, Christian Schindelhauer, Scott Shenker, and Berthold Vocking.
\newblock Randomized rumor spreading.
\newblock In {\em Proceedings 41st Annual Symposium on Foundations of Computer
  Science}, pages 565--574. IEEE, 2000.

\bibitem[LUW95]{girth}
Felix Lazebnik, Vasiliy~A Ustimenko, and Andrew~J Woldar.
\newblock A new series of dense graphs of high girth.
\newblock {\em Bulletin of the American mathematical society}, 32(1):73--79,
  1995.

\bibitem[MK19]{mashreghi2019brief}
Ali Mashreghi and Valerie King.
\newblock Brief announcement: Faster asynchronous mst and low diameter tree
  construction with sublinear communication.
\newblock In {\em 33rd International Symposium on Distributed Computing (DISC
  2019)}. Schloss-Dagstuhl-Leibniz Zentrum f{\"u}r Informatik, 2019.

\bibitem[MK21]{DBLP:journals/dc/MashreghiK21}
Ali Mashreghi and Valerie King.
\newblock Broadcast and minimum spanning tree with o(m) messages in the
  asynchronous {CONGEST} model.
\newblock {\em Distributed Comput.}, 34(4):283--299, 2021.

\bibitem[MU04]{mitzenmacher}
M.~Mitzenmacher and E.~Upfal.
\newblock {\em Probability and Computing: Randomized Algorithms and
  Probabilistic Analysis}.
\newblock Cambridge University Press, 2004.

\bibitem[Pel00]{peleg}
David Peleg.
\newblock {\em Distributed Computing: A Locality-Sensitive Approach}.
\newblock SIAM, Philadelphia, 2000.

\bibitem[PPPR21]{pai2021can}
Shreyas Pai, Gopal Pandurangan, Sriram~V Pemmaraju, and Peter Robinson.
\newblock Can we break symmetry with o (m) communication?
\newblock In {\em Proceedings of the 2021 ACM Symposium on Principles of
  Distributed Computing}, pages 247--257, 2021.

\bibitem[Rob21]{soda21}
Peter Robinson.
\newblock Being fast means being chatty: The local information cost of graph
  spanners.
\newblock In D{\'{a}}niel Marx, editor, {\em Proceedings of the 2021 {ACM-SIAM}
  Symposium on Discrete Algorithms, {SODA} 2021, Virtual Conference, January 10
  - 13, 2021}, pages 2105--2120. {SIAM}, 2021.

\bibitem[SHK{\etalchar{+}}12]{DBLP:journals/siamcomp/SarmaHKKNPPW12}
Atish~Das Sarma, Stephan Holzer, Liah Kor, Amos Korman, Danupon Nanongkai,
  Gopal Pandurangan, David Peleg, and Roger Wattenhofer.
\newblock Distributed verification and hardness of distributed approximation.
\newblock {\em {SIAM} J. Comput.}, 41(5):1235--1265, 2012.

\bibitem[Sin92]{singh1992leader}
Gurdip Singh.
\newblock Leader election in complete networks.
\newblock In {\em Proceedings of the eleventh annual ACM symposium on
  Principles of distributed computing}, pages 179--190, 1992.

\bibitem[SS11]{sauerwald2011rumor}
Thomas Sauerwald and Alexandre Stauffer.
\newblock Rumor spreading and vertex expansion on regular graphs.
\newblock In {\em Proceedings of the twenty-second annual ACM-SIAM symposium on
  Discrete Algorithms}, pages 462--475. SIAM, 2011.

\bibitem[wik24]{wikipedia_wake-on-lan}
{Wake-on-LAN} --- wikipedia{,} the free encyclopedia, 2024.
\newblock \url{https://en.wikipedia.org/wiki/Wake-on-LAN}. Accessed:
  2024-10-11.

\bibitem[Yao77]{yao1977probabilistic}
Andrew Chi-Chin Yao.
\newblock Probabilistic computations: Toward a unified measure of complexity.
\newblock In {\em 18th Annual Symposium on Foundations of Computer Science
  (sfcs 1977)}, pages 222--227. IEEE Computer Society, 1977.

\end{thebibliography}
\end{document}